\let\csname equation*\endcsname\relax
\let\csname endequation*\endcsname\relax
\newtheorem{thm}{Theorem}[section]
\newtheorem{cor}[thm]{Corollary}
\newtheorem{prop}[thm]{Proposition}
\newtheorem{lem}[thm]{Lemma}
\theoremstyle{definition}
\theoremstyle{remark}
\newtheorem{rem}[thm]{Remark}
\DeclarePairedDelimiter{\ceil}{\lceil}{\rceil}
\DeclarePairedDelimiter\abs{\lvert}{\rvert}
\DeclareFontFamily{U}{tipa}{}
\DeclareFontShape{U}{tipa}{m}{n}{<->tipa10}{}
\newcommand{\norm}[1]{\lVert#1\rVert}
\newcommand{\vect}[1]{\boldsymbol{#1}}
\let\c@equation\c@thm
\numberwithin{equation}{section}
\begin{document}
	
	\title[Phase Retrieval: Regularization, Box Relaxation and Uniqueness] {One-Dimensional Phase Retrieval: Regularization, Box Relaxation and Uniqueness}
	
	\author{Wing Hong Wong$^1$, Yifei Lou$^2$, Stefano Marchesini$^3$, Tieyong Zeng$^1$}
	\address{$^1$Department of Mathematics, the Chinese University of Hong Kong, Hong Kong}
	\address{$^2$Mathematical Sciences Department, University of Texas Dallas, Richardson, TX 75080, USA}
	\address{$^3$Computational Research Division, Lawrence Berkeley National Laboratory, USA}
	\ead{
		\mailto{whwong@math.cuhk.edu.hk},
		\mailto{yifei.lou@utdallas.edu},
		\mailto{smarchesini@lbl.gov},
		\mailto{zeng@math.cuhk.edu.hk}}
	
	\begin{abstract}
		Recovering a signal from its Fourier magnitude is referred to as phase retrieval, which occurs in different fields of engineering and applied physics. This paper gives a new characterization of the phase retrieval problem. Particularly useful is the analysis revealing that the common gradient-based regularization does not restrict the set of solutions to a smaller set. 
		Specifically focusing on binary signals, we show that a box relaxation is equivalent to the binary constraint for Fourier-types of phase retrieval. We further prove that binary signals can be recovered uniquely up to trivial ambiguities under certain conditions. Finally, we use the characterization theorem to develop an efficient denoising algorithm.
	\end{abstract}
	
	%Uncomment for PACS numbers title message
	\pacs{00.00, 20.00, 42.10}
	% Keywords required only for MST, PB, PMB, PM, JOA, JOB? 
	\vspace{2pc}
	\noindent{\it Keywords}:  Phase Retrieval, binary signals, box relaxation, ambiguities.
	% Uncomment for Submitted to journal title message
	
	\submitto{Inverse Problems}
	% Comment out if separate title page not required
	%\maketitle

	%The title of your section 1
	\section{Introduction}
	In many fields of physics and engineering, one can only measure the magnitude of the Fourier Transform of 
	a discrete signal $\vect{x} \in \mathbb{C}^N$. Denote the discrete Fourier Transform by  $\mathcal{F}$. Recovering $\vect x$ from  $\abs{\mathcal{F}\vect{x}}$
	is referred to as \textit{phase retrieval} (PR), since the phase is completely lost in measurements. 
	Phase retrieval originated from X-ray crystallography \cite{Millane90,KimH91}, trying to determine  atomic and molecular structures of a crystal. This approach was later used to reconstruct an image of a sample with resolution at a nano-meter scale from its X-ray diffraction pattern, known as coherent diffraction imaging (CDI) \cite{MiaoCKS99}. 
	The PR techniques now occur in various applications such as astronomy \cite{Dainty1987} and laser optics \cite{SeifertSDLT06}; please refer to \cite{shechtman2014} for a contemporary overview.

	Phase retrieval is  a very challenging problem largely due to its nonconvexity and solutions being non-unique \cite{Hofstetter1964}. Specifically for the nonuniqueness (a.k.a., \textit{ambiguities}), there are trivial ambiguities and non-trivial ambiguities \cite{shechtman2014}.  Trivial ambiguities  of $\abs{\mathcal{F}\vect{y}} = \abs{\mathcal{F}\vect{x}}$ can be 
	summarized as follows,
	\begin{eqnarray} \label{eq:trivial}
	&&\mbox{global phase shift: } y_k = x_k \cdot e^{i\phi_0}\notag\\
	&&\mbox{conjugate inverse: } y_k = \overline{x_{-k}}\\
	&& \mbox{spatial shift: } y_k = x_{k+k_0}, \notag
	\end{eqnarray}
	where the indices are taken cyclically up to $N$, $\bar{\cdot}$ denotes the complex conjugate, and $\phi_0 \in [0, 2\pi), \ k_0 \in \mathbb{Z}$ are the phase shift and spatial shift, respectively. Note that every combination of \eqref{eq:trivial} is also a trivial ambiguity. 
	Non-trivial ambiguities of one-dimensional signals can be classified by the roots of the $Z$-transform of the \textit{autocorrelation} of the signal \cite{BeinertP15}, while almost all multi-dimensional signals only have non-trivial ambiguities \cite{BendoryBE2017},  since the $Z$-transform of their autocorrelation being reducible is of measure zero in the space of all polynomials \cite{BeinertP15, HayesM82}.

	For unique recovery of a real signal of size $N$ in up to trivial ambiguities, at least $2N-1$ random measurements are needed, provided the sampling matrix has full spark \cite{balan2006signal}. This result was later extended to the complex case in \cite{conca2015algebraic, FickusMNW14}, requiring at least $4N-4$ measurements. Other sufficient conditions for unique recovery include minimum phase signals \cite{HuangES16}, sparse signals with non-periodic support \cite{JaganathanOH13}, and signals with collision-free  \cite{RanieriCLV13}. For $s$-sparse signals in $\mathbb{R}^N$, the number of  Fourier magnitude measurements is in the order of $O(s\log(N/s))$ \cite{LiV13,ohlsson2014conditions}, while $\min \{ 2s, 2n-1 \}$ for random measurements \cite{WangX14}.
	
	In addition to taking more measurements than the ambient dimension, one often relies on regularization to refine the solution space with an attempt to reduce ambiguities. Stemming from image processing, a common choice  is a gradient-type formalism. 
	For example, Chang et al.~\cite{changLNZ16} considered the total variation, which is the $\ell_1$ norm of the gradient for phase retrieval. Computationally, many optimization algorithms can be used to solve the (regularized) phase retrieval problems, including alternating projections \cite{Gerchberg1972}, Wirtinger flow \cite{candes2015phaseIT},  alternating direction method of multipliers (ADMM) \cite{changLNZ16}, and a preconditioned proximal algorithm \cite{changMLZ18}.

	This paper contributes to a new set of characterization theorems for phase retrieval, indicating that gradient-based regularization is redundant to the magnitude measurements. 
	We also impose additional constraints on the underlying signal in order to resolve the ambiguities. In particular, we focus on binary signals \cite{KeiperKLP17} due to its simplicity and a wide variety of applications such as bar code  \cite{Esedoglu2003,lou2014partially} and obstacle detection \cite{LitmanLS98}.
	Specifically for phase retrieval, binary signals are considered in magnetism to describe the x-ray energies of some chemical compound films such as the SmCO$_5$ film \cite{ShiFNELSFTSMRK16}, and in block copolymers to describe films \cite{SteinWYDB16}. 
	It was observed empirically in \cite{hartCL18} that incorporating a box constraint into the ADMM framework, referred to ADMMB, often gives an exact recovery of binary signal, which motivates us to give a theoretical explanation. In this paper, we prove that the phase retrieval problem with binary constraint is equivalent to  phase retrieval with box relaxation. We describe a new type of \textit{trivial ambiguities} for binary phase retrieval and show that unique recovery is possible  under certain  conditions. A related work \cite{YuanWang18} proved binary signals that cannot be uniquely recovered by Fourier magnitude is a zero-measure set.
	Finally, we take the noise into consideration and develop a denoising algorithm.
	
	Our contributions are three-fold: (1) We give a characterization theorem (Theorem~\ref{thm: characterization}), revealing the fact that $\norm{\nabla^n \vect{x}}_2$ is completely determined by $\abs{\mathcal{F}\vect{x}}$ for an arbitrary integer order $n$. (2) We give thorough analysis of phase retrieval problem in a binary setting.
	We show that the box relaxation to binary constraint is equivalent to the original binary phase retrieval problem (Theorem~\ref{thm: ab box to binary}). We then describe a new type of ambiguities and  guarantee the uniqueness of binary phase under certain conditions. (3) We conduct a series of error analysis (Propositions~\ref{prop: denoise}--\ref{cor: denoising related to sparsity} and Corollary~\ref{cor: denoising for integral-valued signal}) of  phase retrieval, which motivates a new denoising scheme.
	
	The rest of the paper is organized as follows. In Section~\ref{sect:pre}, we set up notations and  review some practical ways of taking magnitude measurements. In Section~\ref{sect:char}, we give a new characterization theorem  and discuss its consequences. In Section~\ref{sect:box}, we prove that the phase retrieval of binary signals can be relaxed to the box constraint. Furthermore, we show it is possible to relax the set of vectors having the same norm to its convex hull.
	In Section~\ref{sect:ambi}, we describe a new type of ambiguities for binary signals and show that the unique recovery of binary signals is possible under some special circumstances. Several extensions from the  Fourier case   to other types of sampling schemes are presented in Section~\ref{sect:more_sampling}. In Section~\ref{sect:denoising}, we estimate recover accuracy with respect to noise  and propose a denoising algorithm that empirically yields better performance compared to a na\"ive approach. Section~\ref{sect:conclusion} concludes the paper. Appendix provides all the proofs for the theorems presented.

	\section{Preliminaries} \label{sect:pre}

	\subsection{Notations}
	Let $\vect{x}, \vect{y} \in \mathbb{C}^N$ be arbitrary signals, we define some notations that are used throughout the paper,
	\begin{itemize}
		\item $x_k$ denotes the $k$-th entry of $\vect{x}$, i.e. $\vect{x} = (x_0, x_1, x_2, \dots, x_{N-1})^T$
		\item $\norm{\vect{x}}_p$ denotes the $\ell_p$-norm of $\vect{x}$, i.e. $\norm{\vect{x}}_p = (\sum^{N-1}_{k=0} \abs{x_k}^p)^{\frac{1}{p}}$, where $p > 0$. For $p=0$, we define $\norm{\vect{x}}_0$ to be the $\ell_0$ ``norm'' by counting the number of its nonzero elements. 
		\item $\vect{e}_k$'s denotes the standard basis in $\mathbb{C}^N$, i.e. the vector with a $1$ in the $k$-th coordinate and $0$'s elsewhere, e.g., $\vect e_0 = (1, 0, 0, \dots, 0)^T$ and $\vect e_1 = (0, 1, 0, \dots, 0)^T$.
		\item $\mathcal{F}_{N \rightarrow M}: \mathbb{C}^N \to \mathbb{C}^M$ denotes the matrix representing  discrete Fourier transform (DFT), i.e.
		\begin{equation}\label{eq:oversampledFFT}         \mathcal{F}_{N\rightarrow M} = \begin{bmatrix}
		1 & 1 & 1 & \dots  & 1 \\
		1 & \omega & \omega^2 & \dots  & \omega^{N-1} \\
		1 & \omega^2 & \omega^4 & \dots  & \omega^{2(N-1)} \\
		\vdots & \vdots & \vdots & \ddots & \vdots \\
		1 & \omega^{M-1} & \omega^{2(M-1)} & \dots  & \omega^{(M-1)(N-1)} \\
		\end{bmatrix},
		\end{equation}
		where $\omega = e^{\frac{-2\pi i}{M}}$. Note that $\dfrac{1}{\sqrt{N}}\mathcal{F}_{N \rightarrow N}$ is unitary.  If $M>N$, we refer it  as an \textit{oversampling} Fourier matrix. 
		\item We define $$\vect{x} \odot \vect{y} = (x_0y_0, x_1y_1, \dots, x_{N-1}y_{N-1}),$$
		where $\odot$ denotes the Hadamard product (i.e. entrywise multiplication).
		\item The \textit{discrete (periodic) convolution} $\vect{x} * \vect{y}$ is defined by 
		\begin{equation}
		(\vect{x}*\vect{y})_j = \sum^{N-1}_{k=0} x_k y_{(j-k) \text{mod} N}, 
		\end{equation}
		for $j = 0, 1, \dots, N-1$.
		\item The \textit{(regular) autocorrelation} is defined by 
		\begin{equation}\label{eq:Aut}
		(\text{Aut}(\vect{x}))_j = \sum^{N-1}_{k=0} x_{(k+j)} \overline{x_k},
		\end{equation}
		where $j = -N+1, -N+2, \dots, N-1$ and $x_k = 0, \forall k < 0$ and $k > N-1$. 
		\item By replacing the zero boundary condition in the regular autocorrelation with periodic boundary condition,  we  consider \textit{periodic autocorrelation} defined as
		\begin{equation}\label{eq:Aut_p}
		(\text{Aut}_p (\vect{x}))_j = \sum^{N-1}_{k=0} x_{(k+j) \text{mod} N} \overline{x_k},  
		\end{equation}
		for $ j=0,1,\cdots, N-1.$ These definitions will be used in the proofs of some interesting results.
	\end{itemize}
	For the rest of the paper, we denote $\mathcal{F}_{N \rightarrow N}$ by $\mathcal{F}$, $\mathcal{F}_{N \rightarrow M}$ by $\mathcal{F}_{M}$, and omit $_{\text{mod} N}$  if the context is clear.
	
	\subsection{Sampling Schemes}
	
	In practice, there are numerous ways \cite{TrebinoDFSKRK97,Trebino00, JaganathanEH2016,candes2015phase,ChangELM18,BendoryEE18} to take  magnitude measurements of a signal. This paper develops new theoretical characterizations in PR focusing on the following sampling schemes.
	\begin{itemize}
		\item \textbf{Classic Fourier Transform.} One aims to find  an unknown signal $\vect{x} \in \mathbb{C}^N$ from the magnitude measurements $\vect{b} := \abs{\mathcal{F}{\vect{x}}}$, i.e. $$b_n = \abs[\Big]{\sum^{N-1}_{k=0} x_k e^{\frac{-2\pi k n i}{N}}}, \quad \forall n = 0, 1, \dots, N-1.$$ 
		\item \textbf{Oversampling Fourier Transform.}
		An $M$-point ($M > N$) oversampling discrete Fourier Transform (DFT) of a signal $\vect{x} \in \mathbb{C}^N$ is defined by $$b_n = \abs[\Big]{\sum^{N-1}_{k=0} x_k e^{\frac{-2\pi k n i}{M}}}, \quad \forall n = 0, 1, \dots, M-1.$$ One wants to recover an $N$-point signal $\vect x$ based on  $M$ measurements of $\abs{\mathcal{F}_M\vect{x}}$.  
		A typical choice of $M$ is  $M=2N$ \cite{Hayes1982}, which is experimentally adopted by Miao et al \cite{miao2000oversampling}. 
				Also, a sufficient number of measurements is crucial in avoiding false solutions  \cite{soldovieri2005global}.
			However, we show in Theorem~\ref{thm: characterization Oversampling} theoretically that more measurements (i.e. $M\geq N$) do not resolve ambiguities in the noiseless PR problem.

		\item \textbf{Short-Time Fourier Transform} (STFT) \cite{eldar2014sparse,JaganathanEH2016}. Let $\vect{x} \in \mathbb{C}^N$ be a signal of length $N$ and $\vect{w} \in \mathbb{C}^W$ be a window function of length $W$. The Short-Time Fourier Transform (STFT) of $\vect{x}$ with respect to $\vect{w}$ is defined as 
		\begin{equation} \label{eq:STFT}
		\vect{z}_{n,m} = \sum^{N-1}_{k=0} x_k w_{mL-k} e^{\frac{-2\pi k n i}{N}},    
		\end{equation}
		for $n = 0, 1, \dots, N-1$ and $m = 0, 1, \dots, R-1$, where $L$ denotes the separation in time between adjacent short-times sections, R = $\ceil{\frac{N+W-1}{L}}$ denotes the number of short-time sections considered, and $w_k := 0$ for all $k<0$ and $k>W-1$. 
		
		\item \textbf{Frequency-resolved optical gating trace} (FROG)
		\cite{TrebinoDFSKRK97,Trebino00,BendoryEE18}.
		Let $$z_{n,m} = x_n x_{n+mL},$$ where $L$ is a fixed integer. The FROG trace is equivalent to the one-dimensional Fourier magnitude of $z_{n,m}$ for each fixed $m$, i.e.
		\begin{equation}\label{eq:FROG}
		\abs{\hat{z}_{n,m}}^2 = \abs{\sum^{N-1}_{k=0}x_kx_{k+mL}  e^{\frac{-2\pi k n i}{N}}}^2,
		\end{equation}
		for $n = 0, \dots, N-1, m = 0, \dots, \ceil*{\frac{N}{L}}-1$. 
		
	\end{itemize}
	
	Both STFT and FROG make experimentally plausible means of additional phaseless measurements to improve the accuracy of phase retrieval. For example, the
	STFT measurements can be obtained by a set of shifted versions of a single mask, while FROG measures the product of the signal with a shifted version of itself. 
	It was claimed in \cite{eldar2014sparse} that the STFT magnitude leads to better performance than an oversampled DFT with the same number of measurements. 
	
	\section{Regularization and Constraint in Phase Retrieval} \label{sect:char}
	
	Mathematically, the Fourier-type of phase retrieval problems in one dimensional case is formulated as follows,
	\begin{equation*}
	\text{Find } \vect{x} \in \mathbb{C}^N, \text{ s.t. } \abs{\mathcal{F}_M\vect{x}} = \vect{b}.
	\end{equation*}
	It is desirable and often necessary to impose some regularization  term in order to regularize the solution and avoid ambiguities in PR as much as possible. A classic choice is the use of $\norm{\vect x}_2$ and $\norm{\nabla^n \vect{x}}_2$ to enforce the smoothness of an underlying signal $\vect x$, where $\nabla^n$ is the $n$-th order discrete finite difference operator.  For simple notations,  we define $\nabla^0 \vect x := \vect x$. In other words, a regularized PR problem can be expressed as
	\begin{equation*}
	\begin{aligned}
	\underset{\vect{x}}{\text{minimize}}
	\norm{\nabla^n \vect{x}}_2 
	\quad \text{s.t.}
	\quad \abs{\mathcal{F}\vect{x}} = \vect{b}.
	\end{aligned}
	\end{equation*}
	Unfortunately, Theorem~\ref{thm: same derivative norm} shows that $\norm{\nabla^n \vect{x}}_2$ is completely determined by $\abs{\mathcal{F}\vect{x}}$, which implies that such gradient-based regularization cannot resolve any ambiguities. But on the other hand, adding gradient-based regularizations may help to escape from local optima due to the nonconvex nature of the phase retrieval problem.

		\begin{thm} \label{thm: same derivative norm}
			Given $\vect{x}, \vect{y} \in \mathbb{C}^N$, if $\abs{\mathcal{F}\vect{x}} = \abs{\mathcal{F}\vect{y}}$, then ${\norm{\nabla^n\vect{x}}_2 = \norm{\nabla^n\vect{y}}_2},$ for all $n = 0, 1, 2, \dots$.
		\end{thm}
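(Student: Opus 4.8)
The plan is to diagonalize the difference operator with the discrete Fourier transform and then appeal to Parseval's identity. Throughout I read $\nabla$ with the paper's periodic (cyclic) boundary convention, so that $\nabla = S - I$, where $S$ is the cyclic shift $(S\vect{v})_k = v_{(k+1)\bmod N}$; with this reading $\nabla$, and hence each $\nabla^n$, is a circulant matrix. This is the only reading consistent with the cyclic indexing used throughout the paper: for a truncated, non-periodic difference the statement is in fact false, as one already sees by comparing the two cyclic shifts $(2,-1,-1)$ and $(-1,2,-1)$ in $\mathbb{C}^3$, which share the same Fourier magnitude but have different $\norm{\nabla\,\cdot\,}_2$.

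\emph{Step 1 (diagonalization).} I would first record the standard fact that circulant matrices are simultaneously diagonalized by the DFT: $\mathcal{F}S\mathcal{F}^{-1} = \Lambda$, a diagonal matrix whose diagonal entries are the $N$-th roots of unity, which follows from the one-line index computation $(\mathcal{F}S\vect{v})_j = \Lambda_{jj}\,(\mathcal{F}\vect{v})_j$. Consequently
\[
\mathcal{F}\nabla^n\mathcal{F}^{-1} = (\Lambda - I)^n =: D_n ,
\]
a diagonal matrix with $\abs{(D_n)_{jj}} = \abs{\Lambda_{jj}-1}^n$ and $\abs{\Lambda_{jj}} = 1$ for every $j$.

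\emph{Step 2 (Parseval).} Since $\tfrac{1}{\sqrt{N}}\mathcal{F}$ is unitary, i.e.\ $\norm{\mathcal{F}\vect{v}}_2^2 = N\norm{\vect{v}}_2^2$, I would then compute, for every integer $n\ge 0$,
\[
\norm{\nabla^n\vect{x}}_2^2 = \tfrac{1}{N}\norm{\mathcal{F}\nabla^n\vect{x}}_2^2 = \tfrac{1}{N}\norm{D_n\,\mathcal{F}\vect{x}}_2^2 = \tfrac{1}{N}\sum_{j=0}^{N-1}\abs{\Lambda_{jj}-1}^{2n}\,\abs{(\mathcal{F}\vect{x})_j}^2 ,
\]
and likewise for $\vect{y}$. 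The hypothesis $\abs{\mathcal{F}\vect{x}} = \abs{\mathcal{F}\vect{y}}$ says exactly that $\abs{(\mathcal{F}\vect{x})_j}^2 = \abs{(\mathcal{F}\vect{y})_j}^2$ for each $j$, so the two sums coincide and $\norm{\nabla^n\vect{x}}_2 = \norm{\nabla^n\vect{y}}_2$. (The case $n=0$ is just Parseval.)

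I do not expect a genuine obstacle: the only points needing care are (i) pinning down the boundary convention so that $\nabla^n$ is really circulant, and (ii) bookkeeping the Parseval normalization constant, which cancels anyway. If one prefers to avoid naming $D_n$, there is an equivalent route: write $\norm{\nabla^n\vect{x}}_2^2 = \langle (\nabla^n)^{*}\nabla^n\vect{x},\,\vect{x}\rangle$ and observe that $(\nabla^n)^{*}\nabla^n$ is again circulant, hence a fixed Laurent polynomial $\sum_j c_j S^j$ in the shift; therefore $\norm{\nabla^n\vect{x}}_2^2 = \sum_j c_j\,(\mathrm{Aut}_p(\vect{x}))_j$ is a fixed linear combination of the periodic autocorrelation values of $\vect{x}$, and $\mathrm{Aut}_p(\vect{x}) = \mathcal{F}^{-1}(\abs{\mathcal{F}\vect{x}}^2)$ depends only on $\abs{\mathcal{F}\vect{x}}$.
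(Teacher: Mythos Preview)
Your argument is correct and is essentially the paper's own proof unpacked: the paper derives the theorem from the implication $(1)\Rightarrow(3)$ of its characterization theorem (that $\abs{\mathcal{F}\vect{x}}=\abs{\mathcal{F}\vect{y}}$ forces $\norm{\vect{v}*\vect{x}}_2=\norm{\vect{v}*\vect{y}}_2$ for every $\vect{v}$) together with $\nabla^n\vect{x}=\vect{v}_n*\vect{x}$, and that implication is proved by exactly your two ingredients---the convolution theorem (equivalently, DFT diagonalization of circulants) and Parseval. Your closing ``alternative route'' via $\mathrm{Aut}_p$ is likewise the paper's $(1)\Leftrightarrow(2)$, so both of your variants coincide with the paper's approach.
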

		
	\begin{comment}

	{\color{red}[I think the reviewer wants to know how the theories apply to L1 norm rather L2?]}
	{\color{blue} [But this theory cannot be extended to L2 norm in nature. But at least our theory points out that such L1 regularization does not make much sense: if their L2 norm are the same, why should we minimise L1 norm?]}
	
	If we replace the regularization term by $l_1$-norm, i.e. 
	\begin{equation*}
	\begin{aligned}
	\underset{\vect{x}}{\text{minimize}}
	\norm{\nabla^n \vect{x}}_1
	\quad \text{s.t.}
	\quad \abs{\mathcal{F}\vect{x}} = \vect{b},
	\end{aligned}
	\end{equation*}
	then $\nabla^n \vect x$ is assumed to be sparse \cite{hastie2009elements}. If $n = 1$, then $\vect x$ is more like to be piecewise constant. If both $\vect y$ and $\vect z$ satisfy the constraint $\abs{\mathcal F x} = \vect b$ and $\nabla \vect y$ has a smaller $l_1$-norm, i,e, $\norm{\nabla \vect y}_1 < \norm{\nabla \vect z}_1$, then $\vect y$ has less numbers of piece of constant than $\vect z$. While Theorem~\ref{thm: same derivative norm} tells us that $\norm{\nabla \vect y}_2 = \norm{\nabla \vect z}_2$, this forces the jumps in $\vect y$ between constants are larger.
	
	\end{comment}
	
	One may  wonder whether it is helpful to take more measurements and then impose  regularizations. 
	\begin{comment}
	\begin{equation*}
	\begin{aligned}
	\underset{\vect{x}}{\text{minimize}}
	\norm{\nabla^n \vect{x}}_2 
	\quad \text{s.t.}
	\quad \abs{\mathcal{F}_M\vect{x}} = \vect{b}_M.
	\end{aligned}
	\end{equation*}
	\end{comment}
	Theorem~\ref{thm: same derivative norm oversampling} implies that  the gradient-type regularization  is  insufficient for the PR problem with more than phaseless $2N-1$ measurements.

		\begin{thm} \label{thm: same derivative norm oversampling}
			Let $M \geq 2N-1$, given $\vect{x}, \vect{y} \in \mathbb{C}^N$, if $\abs{\mathcal{F}_M\vect{x}} = \abs{\mathcal{F}_M\vect{y}}$, then ${\norm{\nabla^n\vect{x}}_2 = \norm{\nabla^n\vect{y}}_2},$ for all $n = 0, 1, 2, \dots$.
		\end{thm}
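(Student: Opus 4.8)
The plan is to reduce the statement to Theorem~\ref{thm: same derivative norm}, i.e.\ to show that for $M\ge 2N-1$ the hypothesis $\abs{\mathcal{F}_M\vect{x}}=\abs{\mathcal{F}_M\vect{y}}$ already implies $\abs{\mathcal{F}\vect{x}}=\abs{\mathcal{F}\vect{y}}$ for the ordinary $N$-point transform $\mathcal{F}=\mathcal{F}_{N\rightarrow N}$; granting this, Theorem~\ref{thm: same derivative norm} yields $\norm{\nabla^n\vect{x}}_2=\norm{\nabla^n\vect{y}}_2$ for all $n$ with no extra work. The object that carries the information between the two magnitude conditions is the autocorrelation.

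First I would expand the squared oversampled magnitude. With $\omega=e^{-2\pi i/M}$, grouping the double sum $\abs{(\mathcal{F}_M\vect{x})_n}^2=\sum_{k,k'}x_k\overline{x_{k'}}\,\omega^{n(k-k')}$ by the difference $j=k-k'$ gives, using the zero-padding convention of \eqref{eq:Aut},
\begin{equation*}
\abs{(\mathcal{F}_M\vect{x})_n}^2=\sum_{j=-(N-1)}^{N-1}(\text{Aut}(\vect{x}))_j\,\omega^{nj},\qquad n=0,\dots,M-1.
\end{equation*}
Because $\omega^{nM}=1$, the negative indices $j\in\{-(N-1),\dots,-1\}$ can be relabelled as $M+j\in\{M-N+1,\dots,M-1\}$ and the remaining positions (if any) filled with zeros, so that $\abs{\mathcal{F}_M\vect{x}}^2=\mathcal{F}_{M\rightarrow M}\vect{a}$ for an explicit $\vect{a}\in\mathbb{C}^M$ whose entries are exactly the autocorrelation values $(\text{Aut}(\vect{x}))_j$. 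This packing is valid precisely because $M\ge 2N-1$: the positive block $\{0,\dots,N-1\}$ ($N$ entries) and the wrapped negative block ($N-1$ entries) are disjoint iff $N-1<M-(N-1)$. Since $\mathcal{F}_{M\rightarrow M}$ is invertible (a scalar multiple of a unitary matrix), $\abs{\mathcal{F}_M\vect{x}}$ determines $\text{Aut}(\vect{x})$; hence $\abs{\mathcal{F}_M\vect{x}}=\abs{\mathcal{F}_M\vect{y}}$ forces $\text{Aut}(\vect{x})=\text{Aut}(\vect{y})$.

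Second, I would pass from $\text{Aut}$ to the $N$-point magnitude. The periodic autocorrelation \eqref{eq:Aut_p} is recovered from the regular one by the wrap-around identity $(\text{Aut}_p(\vect{x}))_j=(\text{Aut}(\vect{x}))_j+(\text{Aut}(\vect{x}))_{j-N}$ for $0\le j\le N-1$ (with $(\text{Aut}(\vect{x}))_{-N}:=0$), and a direct computation shows $\abs{\mathcal{F}\vect{x}}^2=\mathcal{F}(\text{Aut}_p(\vect{x}))$ for the cyclic $N$-point transform. Thus $\abs{\mathcal{F}\vect{x}}^2$ is a fixed linear function of the entries of $\text{Aut}(\vect{x})$, so $\text{Aut}(\vect{x})=\text{Aut}(\vect{y})$ gives $\abs{\mathcal{F}\vect{x}}=\abs{\mathcal{F}\vect{y}}$, and Theorem~\ref{thm: same derivative norm} completes the proof.

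The argument is short, and the only place that genuinely needs attention is the first step: one must verify that $2N-1$, the length of the regular autocorrelation sequence $(\text{Aut}(\vect{x}))_j$, $j=-(N-1),\dots,N-1$, is exactly the threshold at which this sequence fits, without aliasing, inside an $M$-point cyclic transform, so that recovering $\text{Aut}(\vect{x})$ by inverting $\mathcal{F}_{M\rightarrow M}$ is lossless. For $M<2N-1$ the positive and negative tails of the autocorrelation overlap and only an aliased combination is recoverable, which is exactly why the hypothesis $M\ge 2N-1$ (and not merely $M\ge N$) is imposed.
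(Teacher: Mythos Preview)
Your proposal is correct and follows essentially the same route as the paper: both deduce $\text{Aut}(\vect{x})=\text{Aut}(\vect{y})$ from $\abs{\mathcal{F}_M\vect{x}}=\abs{\mathcal{F}_M\vect{y}}$ when $M\ge 2N-1$, then pass to the $N$-point magnitude condition and invoke Theorem~\ref{thm: same derivative norm} (equivalently, Theorem~\ref{thm: characterization}). The only cosmetic difference is that the paper phrases the first step as a polynomial-interpolation argument (Theorem~\ref{thm: characterization Oversampling}) and the second step via equation~(\ref{eq:Aut determine DFT}) specialized to $M=N$, whereas you phrase them as invertibility of $\mathcal{F}_{M\to M}$ on the zero-padded autocorrelation and the wrap-around identity $(\text{Aut}_p(\vect{x}))_j=(\text{Aut}(\vect{x}))_j+(\text{Aut}(\vect{x}))_{j-N}$; these are equivalent formulations of the same computation.
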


	\begin{rem}
		When $N < M < 2N-1$, gradient-based regularization may help. For example, let 
		$
		\vect x =(0,     0,     0,     0,     1,     0,     1,     0,     0,     1,     1)$ and 
		$\vect y =(0,     0,     0,     1,     0,     0,     0,     1,     0,     1,     1).$
		Both of them are of length $11$ and have the same $\abs{\mathcal{F}_{M}\vect{x}}=\abs{\mathcal{F}_{M}\vect{y}}$ for $M =12$, but  $\norm{\nabla^3\vect{x}}_2^2 = 7.5 \neq 7 = \norm{\nabla^3\vect{y}}_2^2$, where the third order finite scheme $\nabla^3\vect{x}$ is defined by  $(\nabla^3\vect{x})_k := -\frac{1}{2}x_{k-2} + x_{k-1} - x_{k+1} + \frac{1}{2}x_{k+2}$. 
	\end{rem}
	
	To prove Theorems~\ref{thm: same derivative norm}-\ref{thm: same derivative norm oversampling}, we need to review a classical result   that $\text{Aut}(\vect x)$ is determined by $\abs{\mathcal{F}_{2N-1} \vect{x}}$ and vice versa, as stated in Theorem~\ref{thm: characterization 2N-1}.
	
	\begin{thm}[\cite{BendoryBE2017, YuanWang18}] \label{thm: characterization 2N-1}
		Given $\vect{x}, \vect{y} \in \mathbb{C}^N$, the following statements are equivalent:
		\begin{enumerate}
			\item[(1)] $\abs{\mathcal{F}_{2N-1} \vect{x}} = \abs{\mathcal{F}_{2N-1} \vect{y}}$;
			\item[(2)] $\text{Aut}(\vect{x}) = \text{Aut}(\vect{y})$.
		\end{enumerate}
	\end{thm}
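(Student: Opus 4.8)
The plan is to reduce both statements to a single algebraic identity: the squared Fourier magnitude vector $\abs{\mathcal{F}_{2N-1}\vect{x}}^2$ equals the $(2N-1)$-point DFT of the autocorrelation sequence, once the latter is reindexed cyclically. Writing $\omega = e^{-2\pi i/(2N-1)}$ and expanding the modulus squared,
\begin{equation*}
\abs{(\mathcal{F}_{2N-1}\vect{x})_n}^2 = \Big(\sum_{j=0}^{N-1} x_j \omega^{nj}\Big)\overline{\Big(\sum_{k=0}^{N-1} x_k \omega^{nk}\Big)} = \sum_{j=0}^{N-1}\sum_{k=0}^{N-1} x_j \overline{x_k}\,\omega^{n(j-k)}.
\end{equation*}
Collecting terms according to the lag $\ell = j-k$, which ranges over $-(N-1),\dots,N-1$, and recalling that $x_k$ is taken to vanish outside $0\le k\le N-1$, the inner sum is exactly $(\text{Aut}(\vect{x}))_\ell$ as in \eqref{eq:Aut}, so
\begin{equation*}
\abs{(\mathcal{F}_{2N-1}\vect{x})_n}^2 = \sum_{\ell=-(N-1)}^{N-1} (\text{Aut}(\vect{x}))_\ell\,\omega^{n\ell}, \qquad n = 0,1,\dots,2N-2.
\end{equation*}

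For the direction $(2)\Rightarrow(1)$, if $\text{Aut}(\vect{x}) = \text{Aut}(\vect{y})$ then the right-hand side above is the same for $\vect{x}$ and $\vect{y}$ at every $n$, hence $\abs{\mathcal{F}_{2N-1}\vect{x}}^2 = \abs{\mathcal{F}_{2N-1}\vect{y}}^2$, and taking nonnegative square roots gives $\abs{\mathcal{F}_{2N-1}\vect{x}} = \abs{\mathcal{F}_{2N-1}\vect{y}}$. For $(1)\Rightarrow(2)$, reindex the $2N-1$ lag coefficients into a single vector $\vect{c}\in\mathbb{C}^{2N-1}$ by setting $c_\ell = (\text{Aut}(\vect{x}))_\ell$ for $0\le\ell\le N-1$ and $c_\ell = (\text{Aut}(\vect{x}))_{\ell-(2N-1)}$ for $N\le\ell\le 2N-2$, using $\omega^{n\ell}=\omega^{n(\ell-(2N-1))}$. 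The identity then reads $\abs{\mathcal{F}_{2N-1}\vect{x}}^2 = \mathcal{F}_{(2N-1)\rightarrow(2N-1)}\,\vect{c}$. Since $\frac{1}{\sqrt{2N-1}}\mathcal{F}_{(2N-1)\rightarrow(2N-1)}$ is unitary, the DFT matrix is invertible, so $\abs{\mathcal{F}_{2N-1}\vect{x}}^2 = \abs{\mathcal{F}_{2N-1}\vect{y}}^2$ forces the two reindexed coefficient vectors to agree, and undoing the reindexing yields $\text{Aut}(\vect{x}) = \text{Aut}(\vect{y})$.

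The only point demanding care is the index bookkeeping: one must check that the $2N-1$ lags $\{-(N-1),\dots,N-1\}$ map bijectively onto the residues modulo $2N-1$ with no aliasing — which is precisely why exactly $2N-1$ frequency samples are needed and fewer would not suffice — and that the lag-$\ell$ partial sum matches definition \eqref{eq:Aut} under the convention $x_k=0$ for $k\notin\{0,\dots,N-1\}$. Beyond this there is no substantive difficulty; the statement is the finite-dimensional Wiener–Khinchin relation combined with the invertibility of the DFT.
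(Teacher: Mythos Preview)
Your proof is correct and matches the paper's approach. The paper does not prove Theorem~\ref{thm: characterization 2N-1} separately (it is cited as classical), but its proof of the generalization Theorem~\ref{thm: characterization Oversampling} uses the same identity---writing $\abs{(\mathcal{F}_M\vect{x})_k}^2$ as the evaluation of a degree-$(2N-2)$ polynomial in $e^{-2\pi ik/M}$ whose coefficients are the entries of $\text{Aut}(\vect{x})$---and then appeals to polynomial interpolation, which at $M=2N-1$ is exactly your DFT-invertibility step after cyclic reindexing.
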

	
	We extend this analysis to  an arbitrary number of measurements (not just $2N-1$) as well as to period autocorrelation (from regular autocorrelation). Specifically in Theorem \ref{thm: characterization}, we  show that when $M=N$, $\text{Aut}_p(\vect{x})$ and $\norm{\vect{v} * \vect{x}}_2$ for $\vect{v} \in \mathbb{C}^N$ are determined by $\abs{\mathcal{F}\vect{x}}$, and vice versa. A similar result for $M\geq 2N-1$ is presented in Theorem~\ref{thm: characterization Oversampling}. 
	
	\begin{thm} \label{thm: characterization}
		Given $\vect{x}, \vect{y} \in \mathbb{C}^N$, the following statements are equivalent:
		\begin{enumerate}
			\item[(1)] $\abs{\mathcal{F}\vect{x}} = \abs{\mathcal{F}\vect{y}}$;
			\item[(2)] $\text{Aut}_p(\vect{x}) = \text{Aut}_p(\vect{y})$;
			\item[(3)] $\norm{\vect{v} * \vect{x}}_2 = \norm{\vect{v} * \vect{y}}_2$ $\forall \vect{v} \in \mathbb{C}^N$.
		\end{enumerate}
	\end{thm}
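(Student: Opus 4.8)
The plan is to prove the two equivalences $(1)\Leftrightarrow(2)$ and $(1)\Leftrightarrow(3)$ separately, in each case passing to the Fourier side and invoking the convolution theorem $\mathcal{F}(\vect{u}*\vect{v}) = (\mathcal{F}\vect{u})\odot(\mathcal{F}\vect{v})$ for periodic convolution together with the fact that $\tfrac{1}{\sqrt{N}}\mathcal{F}$ is unitary (Parseval: $\norm{\vect{z}}_2^2 = \tfrac{1}{N}\norm{\mathcal{F}\vect{z}}_2^2$). The one structural observation that makes everything go is that the periodic autocorrelation is itself a periodic convolution: setting $\widetilde{\vect{x}}$ to be the conjugate reversal $\widetilde{x}_k := \overline{x_{(-k)\bmod N}}$, a direct index substitution in \eqref{eq:Aut_p} gives $\text{Aut}_p(\vect{x}) = \vect{x}*\widetilde{\vect{x}}$, and a second index substitution gives $\mathcal{F}(\widetilde{\vect{x}}) = \overline{\mathcal{F}\vect{x}}$ entrywise. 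Combining these, $\mathcal{F}(\text{Aut}_p(\vect{x})) = (\mathcal{F}\vect{x})\odot\overline{(\mathcal{F}\vect{x})} = \abs{\mathcal{F}\vect{x}}^2$ entrywise.

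From here $(1)\Leftrightarrow(2)$ is immediate: since $\mathcal{F}$ is invertible, $\text{Aut}_p(\vect{x}) = \text{Aut}_p(\vect{y})$ holds iff $\abs{\mathcal{F}\vect{x}}^2 = \abs{\mathcal{F}\vect{y}}^2$ holds componentwise, which, the components being nonnegative reals, is equivalent to $\abs{\mathcal{F}\vect{x}} = \abs{\mathcal{F}\vect{y}}$. For $(1)\Leftrightarrow(3)$, Parseval and the convolution theorem give, for every $\vect{v}\in\mathbb{C}^N$,
\[
\norm{\vect{v}*\vect{x}}_2^2 = \frac{1}{N}\sum_{n=0}^{N-1}\abs{(\mathcal{F}\vect{v})_n}^2\,\abs{(\mathcal{F}\vect{x})_n}^2,
\]
and likewise with $\vect{y}$; hence $(1)$ trivially forces $(3)$. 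For the converse I would specialize $\vect{v}$: taking $\vect{v} = \mathcal{F}^{-1}\vect{e}_n$ makes $\abs{\mathcal{F}\vect{v}}^2 = \vect{e}_n$, so the displayed identity collapses to $\abs{(\mathcal{F}\vect{x})_n}^2 = \abs{(\mathcal{F}\vect{y})_n}^2$; ranging over $n = 0,\dots,N-1$ yields $(1)$. Equivalently, the vector of squared-magnitude differences defines a linear functional in $\abs{\mathcal{F}\vect{v}}^2$ that vanishes on a spanning set, hence is zero.

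I do not expect a genuine obstacle: the content is the convolution theorem plus invertibility of the DFT. The only place to be careful is the cyclic index bookkeeping — one must check that the wraparound modulo $N$ in the definition \eqref{eq:Aut_p} of $\text{Aut}_p$ matches the wraparound in the periodic convolution sum so that $\text{Aut}_p(\vect{x}) = \vect{x}*\widetilde{\vect{x}}$ holds literally (not merely up to a shift), and likewise that $\mathcal{F}(\widetilde{\vect{x}})_n = \overline{(\mathcal{F}\vect{x})_n}$ with the correct index. It is worth noting the contrast with Theorem~\ref{thm: characterization 2N-1}: there the \emph{regular} (zero-padded) autocorrelation corresponds to $2N-1$ Fourier samples, whereas here the \emph{periodic} autocorrelation is exactly what the $N$-point DFT magnitude sees, which is why no oversampling is needed for this version.
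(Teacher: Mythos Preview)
Your proposal is correct and follows essentially the same route as the paper: the paper isolates your key identity $\mathcal{F}(\text{Aut}_p(\vect{x})) = \abs{\mathcal{F}\vect{x}}\odot\abs{\mathcal{F}\vect{x}}$ as a lemma (proved by direct index manipulation rather than via the conjugate-reversal convolution, but to the same effect), then argues $(1)\Leftrightarrow(2)$ and $(1)\Leftrightarrow(3)$ exactly as you do, including the specialization $\vect{v}_k = \mathcal{F}^{-1}\vect{e}_k$ for $(3)\Rightarrow(1)$.
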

	
	\begin{thm} \label{thm: characterization Oversampling}
		Given $\vect{x}, \vect{y} \in \mathbb{C}^N$, $M \geq 2N-1$, the following statements are equivalent:
		\begin{enumerate}
			\item[(1)] $\abs{\mathcal{F}_M \vect{x}} = \abs{\mathcal{F}_M \vect{y}}$
			\item[(2)] $\text{Aut}(\vect{x}) = \text{Aut}(\vect{y})$
		\end{enumerate}
		Also, either (1) or (2) implies that $\text{Aut}_p(\vect{x}) = \text{Aut}_p(\vect{y})$ and $\norm{\vect{v} * \vect{x}}_2 = \norm{\vect{v} * \vect{y}}_2$ $\forall \vect{v} \in \mathbb{C}^N$. The converse does not necessarily hold.
	\end{thm}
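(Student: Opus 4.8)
The plan is to prove the chain $(1)\Leftrightarrow(2)$ by realizing $\abs{\mathcal{F}_M\vect x}^2$ as a sampled trigonometric polynomial whose coefficients are exactly the entries of $\text{Aut}(\vect x)$, then to deduce the two periodic statements from $(2)$ via a folding identity and Theorem~\ref{thm: characterization}, and finally to exhibit an explicit counterexample showing the last implication cannot be reversed.

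\emph{Equivalence $(1)\Leftrightarrow(2)$.} Expanding the square, for every $n=0,\dots,M-1$,
\[
\abs{(\mathcal{F}_M\vect x)_n}^2=\sum_{k,\ell=0}^{N-1}x_k\overline{x_\ell}\,\omega^{(k-\ell)n}=\sum_{j=-(N-1)}^{N-1}(\text{Aut}(\vect x))_j\,\omega^{jn},\qquad \omega=e^{-2\pi i/M}.
\]
This exhibits $\abs{\mathcal{F}_M\vect x}^2\in\mathbb{R}^M$ as the image of $\text{Aut}(\vect x)\in\mathbb{C}^{2N-1}$ under the $M\times(2N-1)$ matrix $V$ with entries $\omega^{jn}$, $n=0,\dots,M-1$, $j=-(N-1),\dots,N-1$. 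The columns of $V$ are precisely the columns of the full $M\times M$ DFT matrix indexed by $j\bmod M$; since $M\ge 2N-1$, the index set $\{-(N-1),\dots,N-1\}$ has diameter $2N-2<M$, so these residues are pairwise distinct, hence the corresponding DFT columns are orthogonal and $V$ has full column rank. Therefore $\text{Aut}(\vect x)\mapsto\abs{\mathcal{F}_M\vect x}^2$ is injective, which gives $\abs{\mathcal{F}_M\vect x}=\abs{\mathcal{F}_M\vect y}\iff\text{Aut}(\vect x)=\text{Aut}(\vect y)$. (For $M=2N-1$ this is Theorem~\ref{thm: characterization 2N-1}; the new content is only that enlarging $M$ appends linearly dependent rows to $V$.)

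\emph{From $(2)$ to the periodic statements.} I would record the folding identity $(\text{Aut}_p(\vect x))_j=(\text{Aut}(\vect x))_j+(\text{Aut}(\vect x))_{j-N}$ for $j=0,\dots,N-1$, obtained by splitting the defining sum of $\text{Aut}_p$ according to whether $k+j<N$ or $k+j\ge N$ and reindexing (out-of-range entries of $\text{Aut}$ being zero). Hence $\text{Aut}(\vect x)=\text{Aut}(\vect y)$ forces $\text{Aut}_p(\vect x)=\text{Aut}_p(\vect y)$, and then the equivalence of $(2)$ and $(3)$ in Theorem~\ref{thm: characterization} yields $\norm{\vect v*\vect x}_2=\norm{\vect v*\vect y}_2$ for all $\vect v\in\mathbb{C}^N$.

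\emph{The converse fails.} It suffices to produce $\vect x,\vect y$ with $\abs{\mathcal{F}\vect x}=\abs{\mathcal{F}\vect y}$ (equivalently, by Theorem~\ref{thm: characterization}, $\text{Aut}_p(\vect x)=\text{Aut}_p(\vect y)$ and $\norm{\vect v*\vect x}_2=\norm{\vect v*\vect y}_2$) but $\text{Aut}(\vect x)\neq\text{Aut}(\vect y)$. A wrap-around cyclic shift does this: take $N=3$, $\vect x=(1,0,1)^T$ and $\vect y=(1,1,0)^T$; a direct computation gives $\abs{\mathcal{F}\vect x}=\abs{\mathcal{F}\vect y}=(2,1,1)^T$, whereas $\text{Aut}(\vect x)=(1,0,2,0,1)$ and $\text{Aut}(\vect y)=(0,1,2,1,0)$ differ, so by the equivalence $(1)\Leftrightarrow(2)$ just established $\abs{\mathcal{F}_M\vect x}\neq\abs{\mathcal{F}_M\vect y}$ for every $M\ge 5$. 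The only step requiring a little care is the full-column-rank claim for $V$ — namely verifying that the $2N-1$ exponents stay distinct modulo $M$ once $M\ge 2N-1$; everything else is bookkeeping with the autocorrelation sums.
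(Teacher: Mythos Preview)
Your proof is correct and follows essentially the same approach as the paper: both arguments identify $\abs{\mathcal{F}_M\vect x}^2$ as samples of a trigonometric polynomial with coefficients $\text{Aut}(\vect x)$ and use that $M\ge 2N-1$ samples determine those $2N-1$ coefficients (the paper phrases this as polynomial interpolation, you as full column rank of a submatrix of the $M$-point DFT). The only minor deviations are that you derive $\text{Aut}_p(\vect x)=\text{Aut}_p(\vect y)$ via the direct folding identity, whereas the paper instead evaluates the autocorrelation polynomial at the $N$-th roots of unity to get $\abs{\mathcal{F}_N\vect x}=\abs{\mathcal{F}_N\vect y}$ and then invokes Theorem~\ref{thm: characterization}; and you supply an explicit counterexample for the failure of the converse, which the paper asserts without one.
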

	
	\begin{rem}
		For $M<2N-1$ and $M\neq N$, we cannot determine the autocorrelation from $M$ magnitude measurements of $|\mathcal F_M(\vect x)|$, due to an insufficient number of measurements.
	\end{rem}
	
	To the best of our knowledge, the equivalence of phaseless measurements to $\norm{\vect{v} * \vect{x}}_2, \forall \vect{v}$ is novel in the literature, which leads to useful consequences as characterized in Theorems~\ref{thm: same derivative norm} and  \ref{thm: same derivative norm oversampling}. In particular,
	Theorem~\ref{thm: same derivative norm} directly follows from Theorem~\ref{thm: characterization} (1) $\Rightarrow$ (3) and the fact that $\nabla^n\vect{x} = \vect{v}_n * \vect{x}$ for some $\vect{v}_n \in \mathbb{C}^N$. Similarly, Theorem~\ref{thm: same derivative norm oversampling} follows from Theorem~\ref{thm: characterization Oversampling}.
	%{\color{red} We want to point out that total variation \cite{rudinOF92,changLNZ16}, as a popular regularization in image processing, is the $L_1$ norm on the gradient  and hence Theorems~\ref{thm: same derivative norm} and \ref{thm: same derivative norm oversampling} are not applicable.}

	\section{Box Relaxation to Binary Constraint} \label{sect:box}
	We now restrict our attention to binary signals $\vect{x} \in \{0, 1\}^N$, as another way of imposing additional prior knowledge to facilitate phase retrieval. 
	Mathematically, we formulate the binary phase retrieval problem as follows,
	\begin{equation*} \label{problem: 01 binary phase retrival}
	\text{Find } \vect{x} \in \{0,1\}^N, \text{ s.t. } \abs{\mathcal{F}\vect{x}} = \vect{b}. \tag{P}
	\end{equation*}
	Since the binary constraint is nonconvex, we relax it to a box constraint in a similar way as a linear problem \cite{mao2012reconstruction}:
	\begin{equation*} \label{problem: 01 box phase retrival}
	\text{Find } \vect{x} \in [0,1]^N, \text{ s.t. } \abs{\mathcal{F}\vect{x}} = \vect{b}. \tag{Q}
	\end{equation*}
	Clearly, if (\ref{problem: 01 binary phase retrival}) has a solution, then (\ref{problem: 01 box phase retrival}) also has a solution. The question is whether we can recover $\vect{x}$ from $\vect{b}$ through (\ref{problem: 01 box phase retrival}). Computationally, the binary constraint in  (P) can be posed as a minimization problem of $\vect x (1-\vect x)$  subject to $\vect x \in [0,1]^N$, which can be solved via the difference of the convex algorithm (DCA) \cite{TA,phamLe2005dc}. Each DCA iteration requires to a subproblem similar to the (Q) problem and it takes a few iterations for DCA to converge. Therefore, solving (Q) is computationally more efficient compared to (P).  
	%Unfortunately, there are still numerous   that make the original problem (\ref{problem: 01 binary phase retrival}) impossible, let alone the problem (\ref{problem: 01 box phase retrival}).
	Theoretically, we prove in Theorem~\ref{thm: ab box to binary} that all the solutions to (\ref{problem: 01 box phase retrival}) are solutions to (\ref{problem: 01 binary phase retrival}) and have the same number of $1$'s as the ground-truth signal.

	\begin{thm} \label{thm: ab box to binary}
		Given $0 \leq \alpha < \beta$, $\vect{x} \in \{\alpha, \beta \}^N$ and $\vect{y} \in [\alpha, \beta]^N$, if $\abs{\mathcal{F} \vect{x}} = \abs{\mathcal{F} \vect{y}}$, then $\vect{y} \in \{\alpha, \beta \}^N$ and $\vect y$ has the same number of $\alpha$'s and $\beta$'s as $\vect x$.
	\end{thm}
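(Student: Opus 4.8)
The plan is to reduce the claim to two scalar identities about $\vect{x}$ and $\vect{y}$ and then finish with an elementary positivity argument. First I would invoke Theorem~\ref{thm: characterization}: from $\abs{\mathcal{F}\vect{x}} = \abs{\mathcal{F}\vect{y}}$ we obtain both $\text{Aut}_p(\vect{x}) = \text{Aut}_p(\vect{y})$ and $\norm{\vect{v}*\vect{x}}_2 = \norm{\vect{v}*\vect{y}}_2$ for every $\vect{v}\in\mathbb{C}^N$. Reading off the zeroth entry of the periodic autocorrelation gives $\norm{\vect{x}}_2^2 = (\text{Aut}_p(\vect{x}))_0 = (\text{Aut}_p(\vect{y}))_0 = \norm{\vect{y}}_2^2$. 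Choosing $\vect{v}=(1,1,\dots,1)^T$ yields $\vect{v}*\vect{x} = \big(\sum_k x_k\big)\vect{v}$, so $\sqrt{N}\,\abs{\sum_k x_k} = \norm{\vect{v}*\vect{x}}_2 = \norm{\vect{v}*\vect{y}}_2 = \sqrt{N}\,\abs{\sum_k y_k}$; since $\alpha\geq 0$, every entry of $\vect{x}$ and of $\vect{y}$ is a nonnegative real, hence $\sum_k x_k = \sum_k y_k$. (Equivalently, these two facts are just Parseval and the zeroth Fourier coefficient, so strictly speaking Theorem~\ref{thm: characterization} is not needed.)

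The key observation is that the ``binariness defect'' $S(\vect{z}) := \sum_{k=0}^{N-1}(z_k-\alpha)(\beta-z_k)$ is an affine function of the two moments just matched: expanding gives $S(\vect{z}) = -\norm{\vect{z}}_2^2 + (\alpha+\beta)\sum_k z_k - N\alpha\beta$. Since $\vect{x}$ and $\vect{y}$ share both $\norm{\cdot}_2^2$ and $\sum_k (\cdot)_k$, we get $S(\vect{y}) = S(\vect{x})$. But $\vect{x}\in\{\alpha,\beta\}^N$ forces every factor $(x_k-\alpha)(\beta-x_k)$ to vanish, so $S(\vect{x})=0$ and therefore $S(\vect{y})=0$. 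On the other hand, $\vect{y}\in[\alpha,\beta]^N$ makes each summand $(y_k-\alpha)(\beta-y_k)$ nonnegative, so a vanishing sum forces $(y_k-\alpha)(\beta-y_k)=0$ for every $k$, i.e.\ $y_k\in\{\alpha,\beta\}$. This establishes $\vect{y}\in\{\alpha,\beta\}^N$.

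For the counting statement, let $p$ and $q$ be the number of coordinates equal to $\beta$ in $\vect{x}$ and $\vect{y}$ respectively. Then $\sum_k x_k = p\beta + (N-p)\alpha$ and $\sum_k y_k = q\beta + (N-q)\alpha$, and the identity $\sum_k x_k = \sum_k y_k$ rearranges to $(p-q)(\beta-\alpha)=0$, whence $p=q$ because $\beta>\alpha$; the numbers of $\alpha$'s then agree as well.

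I do not expect a serious obstacle here. The only point requiring care is passing from $\abs{\sum_k x_k}=\abs{\sum_k y_k}$ to $\sum_k x_k=\sum_k y_k$, which is precisely where the hypothesis $\alpha\geq 0$ (nonnegativity of all entries) is used. Conceptually, the one thing to get right is the realization that only the first two ``moments'' of the signal — its $\ell_2$ norm and its coordinate sum — are needed to pin down $S(\vect{y})$, after which a one-line positivity argument does the rest.
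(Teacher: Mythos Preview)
Your proof is correct and genuinely different from the paper's. Both arguments extract the same two scalar identities from $\abs{\mathcal{F}\vect{x}}=\abs{\mathcal{F}\vect{y}}$, namely $\sum_k x_k=\sum_k y_k$ (zeroth Fourier coefficient plus nonnegativity) and $\norm{\vect{x}}_2=\norm{\vect{y}}_2$ (Parseval), but they exploit them differently. The paper proceeds geometrically: it intersects the hyperplane $\{\sum z_k=\sum x_k\}$ with the box $[\alpha,\beta]^N$, invokes the Krein--Milman theorem to write $\vect{y}$ as a convex combination of extreme points (which it shows lie in $\{\alpha,\beta\}^N$ with the correct counts), and then uses strict convexity of $\norm{\cdot}_2^2$ to force $\vect{y}$ onto an extreme point. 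Your route is purely algebraic: the defect $S(\vect{z})=\sum_k(z_k-\alpha)(\beta-z_k)$ is affine in exactly the two matched moments, so $S(\vect{y})=S(\vect{x})=0$, and termwise nonnegativity finishes. Your argument is shorter, avoids any appeal to Krein--Milman or extreme-point structure, and makes transparent why only the first two moments matter; the paper's argument, on the other hand, gives a geometric picture (Figure~\ref{fig: 01 binary to box}) that motivates later extensions such as Theorem~\ref{thm: extreme point to convex hull}.
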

	If $\{0, 1\}^N$ in problem (\ref{problem: 01 binary phase retrival}) is replaced by a set such that every element has the same modulus, one can also relax the problem to its convex hull.
	
	%{\color{red}[define $\mathcal E$, an interval?]}
	\begin{thm} \label{thm: extreme point to convex hull}
		Suppose $\mathcal E$ is a set of complex number and there exists some constant $c > 0$ such that $\abs z= c \geq $ for all $z \in \mathcal E$. Given $\vect x \in \mathcal E^N$ and $\vect y \in \text{conv } \mathcal E^N$, if $\abs{\mathcal F \vect x} = \abs{\mathcal F \vect y}$, then $\vect y \in \mathcal E ^N$, where $\text{conv } \mathcal E$ denotes the convex hull of $\mathcal E$.
	\end{thm}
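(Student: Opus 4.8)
The plan is to reduce the statement to an $\ell_2$-norm comparison between $\vect x$ and $\vect y$, and then to exploit the strict convexity of the Euclidean disk coordinate by coordinate. First I would apply Theorem~\ref{thm: characterization}, in particular the implication (1)$\Rightarrow$(3): taking $\vect v = \vect e_0$ one has $\vect e_0 * \vect x = \vect x$ and $\vect e_0 * \vect y = \vect y$, so $\norm{\vect x}_2 = \norm{\vect y}_2$. (Equivalently, this is just Parseval's identity, $\norm{\mathcal F\vect u}_2^2 = N\norm{\vect u}_2^2$, applied to the hypothesis $\abs{\mathcal F\vect x} = \abs{\mathcal F\vect y}$.) Since every entry $x_k$ lies in $\mathcal E$, we have $\abs{x_k} = c$ for all $k$, hence $\norm{\vect x}_2^2 = Nc^2$, and therefore $\sum_{k=0}^{N-1}\abs{y_k}^2 = Nc^2$.

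Next I would localize this equality to each coordinate. Because $\mathcal E$ is contained in the circle $\{z\in\mathbb C : \abs z = c\}$ and the closed disk $\overline{B(0,c)}$ is convex, we have $\text{conv}\,\mathcal E \subseteq \overline{B(0,c)}$, so $\abs{y_k} \le c$ for every $k$. Combining $\abs{y_k}\le c$ for all $k$ with $\sum_k\abs{y_k}^2 = Nc^2$ forces $\abs{y_k} = c$ for every $k$; that is, each $y_k$ lies on the boundary circle of radius $c$.

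Finally I would upgrade ``$y_k$ on the circle'' to ``$y_k\in\mathcal E$''. Since $y_k\in\text{conv}\,\mathcal E$, we may write it as a finite convex combination $y_k = \sum_j \lambda_j z_j$ with $z_j\in\mathcal E$, $\lambda_j > 0$, and $\sum_j\lambda_j = 1$. Then
\[
c = \abs{y_k} \;\le\; \sum_j \lambda_j \abs{z_j} \;=\; c,
\]
so equality holds throughout the triangle inequality. Equality in $\abs{\sum_j\lambda_j z_j} = \sum_j\lambda_j\abs{z_j}$ with all $\lambda_j>0$ forces the $z_j$ to lie on a common ray through the origin, and since they all have modulus $c$ they must all be equal; hence $y_k = z_1 \in \mathcal E$. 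As this holds for each $k$, we conclude $\vect y\in\mathcal E^N$.

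The argument is largely routine; the one place needing care is the equality case of the triangle inequality in $\mathbb C$, namely the claim that $\abs{\sum_j\lambda_j z_j} = \sum_j\lambda_j\abs{z_j}$ with positive weights pins the $z_j$ (all of modulus $c$) to the single value $y_k$. No appeal to Carath\'eodory's theorem or Milman's partial converse to Krein--Milman is needed, although the last step can alternatively be phrased by noting that a point of $\text{conv}\,\mathcal E$ lying on the supporting circle $\abs z = c$ must be an extreme point of $\text{conv}\,\mathcal E$ and hence an element of $\mathcal E$.
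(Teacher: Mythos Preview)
Your proposal is correct and follows essentially the same approach as the paper: both use Parseval to get $\norm{\vect y}_2^2 = \norm{\vect x}_2^2 = Nc^2$, bound each $\abs{y_k}\le c$ via the triangle inequality on the convex combination, and then argue that strict inequality in any coordinate would force $\norm{\vect y}_2^2 < Nc^2$. Your write-up is in fact slightly more careful than the paper's, since you spell out the equality case of the triangle inequality to conclude $y_k\in\mathcal E$ (rather than merely $\abs{y_k}=c$), whereas the paper asserts the strict inequality for $y_i\in\text{conv}\,\mathcal E\setminus\mathcal E$ without elaborating.
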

	
	We have a similar version of Theorem~\ref{thm: ab box to binary} when $\vect x \in \{-1,1\}^N$.
	\begin{cor} \label{cor: -11 box to binary}
		Given $\vect{x} \in \{ -1,1 \}^N$ and $\vect{y} \in [-1,1]^N$, if  $\abs{\mathcal{F} \vect{x}} = \abs{\mathcal{F} \vect{y}}$, then $\vect{y} \in \{ -1,1 \}^N$, and the number of $1$'s in $\vect{y}$ is the same as the number of $1$'s in $\vect{x}$ or the number of $-1$ in $\vect{x}$.
	\end{cor}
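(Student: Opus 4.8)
The plan is to reduce Corollary~\ref{cor: -11 box to binary} to Theorem~\ref{thm: ab box to binary} with $\alpha = 0$ and $\beta = 1$ via the affine change of variables $\vect{x} \mapsto \tfrac{1}{2}(\vect{x}+\vect{1})$, where $\vect{1} = (1,1,\dots,1)^T$. This map sends $\{-1,1\}^N$ bijectively onto $\{0,1\}^N$ and $[-1,1]^N$ onto $[0,1]^N$, and it carries $1$-entries to $1$-entries and $-1$-entries to $0$-entries, so once the reduction is in place the bookkeeping for the counts of $\pm 1$ is automatic.

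The only thing that does not transform transparently is the Fourier magnitude, and this is where the sole obstacle lies. Since $\mathcal{F}\vect{1} = (N,0,\dots,0)^T$ is supported at the zero frequency, we have $(\mathcal{F}(\vect{x}+\vect{1}))_n = (\mathcal{F}\vect{x})_n$ for every $n \neq 0$, while the $n=0$ entry becomes $(\mathcal{F}\vect{x})_0 + N = \sum_k x_k + N$. Thus $\abs{\mathcal{F}\vect{x}} = \abs{\mathcal{F}\vect{y}}$ does not by itself yield $\abs{\mathcal{F}(\vect{x}+\vect{1})} = \abs{\mathcal{F}(\vect{y}+\vect{1})}$: the possible discrepancy is confined to the zeroth frequency. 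I would resolve this with a short case analysis. Both $(\mathcal{F}\vect{x})_0 = \sum_k x_k$ and $(\mathcal{F}\vect{y})_0 = \sum_k y_k$ are real because $\vect{x},\vect{y}$ are real vectors, so $\abs{(\mathcal{F}\vect{x})_0} = \abs{(\mathcal{F}\vect{y})_0}$ forces $(\mathcal{F}\vect{y})_0 = \pm(\mathcal{F}\vect{x})_0$.

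In the case $(\mathcal{F}\vect{y})_0 = (\mathcal{F}\vect{x})_0$, adding $\vect 1$ shifts the zeroth frequency of $\vect{x}$ and of $\vect{y}$ by the same real number $N$, so $\abs{\mathcal{F}(\vect{x}+\vect{1})} = \abs{\mathcal{F}(\vect{y}+\vect{1})}$ holds at all frequencies; applying Theorem~\ref{thm: ab box to binary} to $\tfrac12(\vect{x}+\vect{1}) \in \{0,1\}^N$ and $\tfrac12(\vect{y}+\vect{1}) \in [0,1]^N$ gives $\tfrac12(\vect{y}+\vect 1)\in\{0,1\}^N$, hence $\vect{y} \in \{-1,1\}^N$, with the number of $1$'s in $\vect y$ equal to the number of $1$'s in $\vect{x}$. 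In the case $(\mathcal{F}\vect{y})_0 = -(\mathcal{F}\vect{x})_0$, I would instead apply the previous case to the pair $\vect{x}$ and $-\vect{y}$, which is legitimate since $-\vect{y} \in [-1,1]^N$, $\abs{\mathcal{F}(-\vect{y})} = \abs{\mathcal{F}\vect{y}} = \abs{\mathcal{F}\vect{x}}$, and $(\mathcal{F}(-\vect{y}))_0 = (\mathcal{F}\vect{x})_0$; this yields $-\vect{y} \in \{-1,1\}^N$, hence $\vect{y} \in \{-1,1\}^N$, with the number of $1$'s in $-\vect{y}$ — that is, the number of $-1$'s in $\vect{y}$ — equal to the number of $1$'s in $\vect{x}$, which is exactly the second alternative in the statement. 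These two cases exhaust all possibilities (overlapping harmlessly when $\sum_k x_k = 0$), so the corollary follows, and no step beyond the case split requires more than routine verification.
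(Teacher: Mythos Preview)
Your proof is correct but takes a genuinely different route from the paper. The paper obtains $\vect y\in\{-1,1\}^N$ by a direct appeal to Theorem~\ref{thm: extreme point to convex hull} with $\mathcal E=\{-1,1\}$ (all elements of modulus $1$), and only afterwards extracts the counting statement from $|(\mathcal F\vect x)_0|=|(\mathcal F\vect y)_0|$, which for real $\vect x,\vect y$ forces $\sum x_i=\pm\sum y_i$. You instead reduce to Theorem~\ref{thm: ab box to binary} with $\alpha=0,\beta=1$ via the affine shift $\vect z\mapsto\tfrac12(\vect z+\vect 1)$, repairing the zeroth Fourier coefficient by the sign case split and passing to $-\vect y$ in the second case. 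The trade-off: the paper's route is shorter for the binary conclusion (no affine change of variables, no frequency-zero repair) but still needs a separate argument for the counts; your route does a bit more bookkeeping up front but delivers both conclusions simultaneously, and avoids invoking the more general convex-hull theorem altogether.
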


	We then characterize trivial ambiguities  for binary phase retrieval in  Section~\ref{sect:ambi} and extend to other sampling schemes in Section~\ref{sect:more_sampling}.
	
	\subsection{Ambiguities and Uniqueness} \label{sect:ambi}
	
	In addition to trivial  ambiguities  \eqref{eq:trivial} for general PR, there is another type of ambiguity in the binary setting. For example, one has $$\abs{\mathcal{F}(1, 1, 1, 1, 0, 0, 1, 0, 0, 0)^T} = \abs{\mathcal{F}(0, 0, 0, 0, 1, 1, 0, 1, 1, 1)^T},$$ in which the two signals are not related by \eqref{eq:trivial}, but rather by switching zeros and ones.   We present this  ambiguity for binary phase retrieval in Corollary~\ref{cor: flip same magnitude after Fourier Transform}. In fact, this result can be easily extended to the complex case:

		\begin{prop} \label{prop: more_ambi}
			Given $\vect x \in \mathbb C^N$, $\abs{\mathcal F \vect x} = \abs{\mathcal F (c\mathbbm 1 -\vect x)}$ if and only if $c = \frac{1+e^{-i\theta}} N\sum x_i$ for some $\theta \in [0,2\pi)$, where $\mathbbm{1}$ denotes the vector of all one's, i.e. $\mathbbm{1} = (1, 1, \dots, 1)^T$.
		\end{prop}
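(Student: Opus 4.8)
The plan is to collapse the vector identity $\abs{\mathcal{F}\vect{x}} = \abs{\mathcal{F}(c\mathbbm{1} - \vect{x})}$ down to a single scalar equation about the zeroth Fourier coefficient, and then solve that scalar equation in closed form. First I would record the elementary fact that $\mathcal{F}\mathbbm{1} = N\vect{e}_0$, since $\sum_{k=0}^{N-1} e^{-2\pi i k n /N}$ equals $N$ when $n = 0$ and vanishes otherwise. By linearity of $\mathcal{F}$, the vector $\mathcal{F}(c\mathbbm{1} - \vect{x})$ therefore coincides with $-\mathcal{F}\vect{x}$ in every coordinate $n = 1, \dots, N-1$, and equals $cN - (\mathcal{F}\vect{x})_0$ in the zeroth coordinate. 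Taking moduli entrywise, the equalities $\abs{(\mathcal{F}\vect{x})_n} = \abs{(\mathcal{F}(c\mathbbm{1}-\vect{x}))_n}$ hold automatically for $n \neq 0$, so the whole magnitude identity is equivalent to the single condition $\abs{(\mathcal{F}\vect{x})_0} = \abs{cN - (\mathcal{F}\vect{x})_0}$.

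Next I would substitute $(\mathcal{F}\vect{x})_0 = \sum_i x_i =: S$, reducing the claim to the scalar equivalence: $\abs{S} = \abs{cN - S}$ if and only if $c = \frac{1+e^{-i\theta}}{N}S$ for some $\theta \in [0,2\pi)$. The backward direction is immediate, since $c = \frac{1+e^{-i\theta}}{N}S$ gives $cN - S = e^{-i\theta}S$ and hence $\abs{cN - S} = \abs{S}$. For the forward direction I would split into two cases. If $S = 0$, then $\abs{cN} = 0$ forces $c = 0$, which is of the required form for every $\theta$. If $S \neq 0$, I divide through by $\abs{S}$: the hypothesis becomes $\abs{cN/S - 1} = 1$, so $cN/S - 1$ lies on the unit circle, i.e.\ $cN/S - 1 = e^{-i\theta}$ for a suitable $\theta \in [0,2\pi)$, and rearranging yields $c = \frac{1+e^{-i\theta}}{N}S$.

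I do not anticipate any real obstacle here; the only points needing a little care are matching the orientation/conjugation convention so that the unit-circle parametrization produces $e^{-i\theta}$ exactly as written in the statement (rather than $e^{+i\theta}$), and treating the degenerate case $S = 0$ separately, where every admissible $\theta$ collapses to the same value $c = 0$. Specializing the proposition to $\vect{x} \in \{0,1\}^N$ and $c = 1$ then recovers the binary zero-one flipping ambiguity mentioned before the statement, so no separate argument for the binary corollary is needed beyond this computation.
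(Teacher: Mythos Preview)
Your proposal is correct and follows essentially the same route as the paper: both arguments use $\mathcal{F}\mathbbm{1}=N\vect{e}_0$ to see that coordinates $n\neq 0$ match automatically, reducing everything to the scalar equation $\abs{S}=\abs{cN-S}$ at the zeroth entry, and then parametrize by a phase. Your write-up is in fact slightly more explicit than the paper's (you spell out the reduction to $n=0$ up front and separate the degenerate case $S=0$), but there is no substantive difference in method.
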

		
		Applying Proposition~\ref{prop: more_ambi} with $\theta = 0$ and noting that $\sum x_i = \norm{\vect x}_0$ for binary signal $\vect x$, one easily obtains:

	\begin{cor} \label{cor: flip same magnitude after Fourier Transform}
		Given $\vect{x} \in \{0,1\}^N$ and $N$ is even, if $\norm{\vect{x}}_0 = N/2$, then $\abs{\mathcal{F}\vect{x}} = \abs{\mathcal{F}(\mathbbm{1}-\vect{x})}$.
	\end{cor}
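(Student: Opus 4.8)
The plan is to derive this as an immediate specialization of Proposition~\ref{prop: more_ambi}. That proposition asserts that $\abs{\mathcal{F}\vect{x}} = \abs{\mathcal{F}(c\mathbbm{1} - \vect{x})}$ exactly when $c = \frac{1 + e^{-i\theta}}{N}\sum_i x_i$ for some $\theta \in [0, 2\pi)$. First I would take $\theta = 0$, which makes the admissible constant $c = \frac{2}{N}\sum_i x_i$. Since $\vect{x}$ is binary, $\sum_i x_i = \norm{\vect{x}}_0$, and the hypothesis $\norm{\vect{x}}_0 = N/2$ forces $c = \frac{2}{N}\cdot\frac{N}{2} = 1$. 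Substituting $c = 1$ into Proposition~\ref{prop: more_ambi} gives $\abs{\mathcal{F}\vect{x}} = \abs{\mathcal{F}(\mathbbm{1} - \vect{x})}$, which is exactly the claim.

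As a sanity check I would also verify the statement directly from the definition of the DFT, which makes the underlying mechanism transparent and avoids invoking the ``if and only if'' of Proposition~\ref{prop: more_ambi}. The key observation is that the DFT of the all-ones vector is concentrated at the zero frequency: $(\mathcal{F}\mathbbm{1})_0 = N$ and $(\mathcal{F}\mathbbm{1})_n = 0$ for $n = 1, \dots, N-1$. By linearity, $(\mathcal{F}(\mathbbm{1} - \vect{x}))_n = -(\mathcal{F}\vect{x})_n$ for every $n \neq 0$, so the magnitudes already agree at all nonzero frequencies, regardless of $\norm{\vect{x}}_0$. At $n = 0$ we have $(\mathcal{F}\vect{x})_0 = \sum_k x_k = N/2$ and $(\mathcal{F}(\mathbbm{1} - \vect{x}))_0 = \sum_k (1 - x_k) = N - N/2 = N/2$, so the two values coincide there as well; this is precisely the place where the hypothesis $\norm{\vect{x}}_0 = N/2$ enters. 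Combining the two cases yields the entrywise identity $\abs{\mathcal{F}\vect{x}} = \abs{\mathcal{F}(\mathbbm{1} - \vect{x})}$.

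There is essentially no obstacle here: the content is entirely contained in Proposition~\ref{prop: more_ambi}, and the corollary reduces to the arithmetic fact $\frac{2}{N}\cdot\frac{N}{2} = 1$ together with the translation $\sum_i x_i = \norm{\vect{x}}_0$ valid for $\{0,1\}$-vectors. The only point worth stating carefully is the role of the parity hypothesis ``$N$ even'': it guarantees that $N/2$ is an integer, so that a binary vector with $\norm{\vect{x}}_0 = N/2$ exists at all; it plays no further role in the computation.
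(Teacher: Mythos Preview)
Your proposal is correct and takes essentially the same approach as the paper: the paper derives the corollary in one line by applying Proposition~\ref{prop: more_ambi} with $\theta = 0$ together with $\sum_i x_i = \norm{\vect{x}}_0$ for binary $\vect{x}$, which is exactly your first paragraph. Your additional direct DFT verification is a correct and self-contained alternative (and in fact mirrors a computation the paper uses in the proof of Proposition~\ref{prop: more_ambi}), but it is not needed once the proposition is invoked.
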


	As a by-product from the proof of Proposition \ref{prop: more_ambi}, we reveal an interesting fact, stating that if $\vect{x}$ and $\vect{y}$ have the same Fourier magnitude, then so do $(\mathbbm{1} - \vect{x})$ and $(\mathbbm{1} - \vect{y})$:
	
	\begin{prop} \label{prop: x to 1-x}
		Given $\vect{x}, \vect{y} \in \{0,1\}^N$, $\abs{\mathcal{F}\vect{x}} = \abs{\mathcal{F}\vect{y}}$ if and only if $\abs{\mathcal{F}(\mathbbm{1}-\vect{x})} = \abs{\mathcal{F}(\mathbbm{1}-\vect{y})}$.
	\end{prop}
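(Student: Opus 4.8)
The plan is to reduce everything to the single elementary fact that the all-ones vector is almost annihilated by the DFT: $\mathcal{F}\mathbbm{1} = N\vect{e}_0$, so that for any $\vect{x}\in\mathbb{C}^N$ one has $\mathcal{F}(\mathbbm{1}-\vect{x}) = N\vect{e}_0 - \mathcal{F}\vect{x}$. Reading this identity component-wise, the coordinate with index $n\neq 0$ is just $-(\mathcal{F}\vect{x})_n$, hence $\abs{(\mathcal{F}(\mathbbm{1}-\vect{x}))_n} = \abs{(\mathcal{F}\vect{x})_n}$ for all $n\neq 0$, while the zeroth coordinate is $N-\sum_k x_k$, which for a binary $\vect{x}$ is a nonnegative real number equal to $N-\norm{\vect{x}}_0$. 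Thus $\abs{\mathcal{F}\vect{x}}$ and $\abs{\mathcal{F}(\mathbbm{1}-\vect{x})}$ agree in every coordinate except possibly the zeroth, where they equal $\norm{\vect{x}}_0$ and $N-\norm{\vect{x}}_0$ respectively. I would first record this as the computational core of the argument (it is exactly the identity already produced while proving Proposition~\ref{prop: more_ambi}, taken at $\theta=0$).

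For the forward implication, assume $\abs{\mathcal{F}\vect{x}} = \abs{\mathcal{F}\vect{y}}$. Comparing zeroth coordinates gives $\norm{\vect{x}}_0 = \abs{(\mathcal{F}\vect{x})_0} = \abs{(\mathcal{F}\vect{y})_0} = \norm{\vect{y}}_0$, and comparing the remaining coordinates gives $\abs{(\mathcal{F}\vect{x})_n} = \abs{(\mathcal{F}\vect{y})_n}$ for $n\neq 0$. Now invoke the identity above on both $\vect{x}$ and $\vect{y}$: for $n\neq 0$ we get $\abs{(\mathcal{F}(\mathbbm{1}-\vect{x}))_n} = \abs{(\mathcal{F}\vect{x})_n} = \abs{(\mathcal{F}\vect{y})_n} = \abs{(\mathcal{F}(\mathbbm{1}-\vect{y}))_n}$, and for $n=0$ both sides equal $N-\norm{\vect{x}}_0 = N-\norm{\vect{y}}_0$. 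Hence $\abs{\mathcal{F}(\mathbbm{1}-\vect{x})} = \abs{\mathcal{F}(\mathbbm{1}-\vect{y})}$.

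The converse then comes for free: $\mathbbm{1}-\vect{x}$ and $\mathbbm{1}-\vect{y}$ are again binary vectors, and $\mathbbm{1}-(\mathbbm{1}-\vect{x}) = \vect{x}$, so applying the forward implication to the pair $(\mathbbm{1}-\vect{x},\,\mathbbm{1}-\vect{y})$ yields $\abs{\mathcal{F}\vect{x}} = \abs{\mathcal{F}\vect{y}}$; this makes the statement a genuine equivalence. (Alternatively the whole claim drops out of Proposition~\ref{prop: more_ambi} at $\theta=0$ applied in turn to $\vect{x}$ and to $\vect{y}$, which is the route hinted at in the text.) There is no real obstacle; the only point requiring a little care is the zeroth frequency, where one must use that $\vect{x}$ is binary — more generally, nonnegative with $\sum_k x_k\le N$ — in order to discard the absolute value and identify $\abs{(\mathcal{F}\vect{x})_0}$ with $\norm{\vect{x}}_0$ and $\abs{(\mathcal{F}(\mathbbm{1}-\vect{x}))_0}$ with $N-\norm{\vect{x}}_0$.
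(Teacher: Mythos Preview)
Your proof is correct and follows essentially the same route as the paper: both use $\mathcal{F}(\mathbbm{1}-\vect{x}) = N\vect{e}_0 - \mathcal{F}\vect{x}$ to match the coordinates with index $n\neq 0$ immediately, and handle the zeroth coordinate via $(\mathcal{F}\vect{x})_0 = \sum_k x_k$ being a nonnegative real for binary $\vect{x}$. Your treatment of the converse via the involution $\vect{x}\mapsto \mathbbm{1}-\vect{x}$ is a slightly cleaner packaging of what the paper dismisses as ``similar analysis,'' but the underlying argument is the same.
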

	
	We show in Proposition~\ref{prop: 0123 uniqueness} that  the exact recovery of $\vect x$ up to trivial ambiguities \eqref{eq:trivial} is guaranteed when $\norm{\vect{x}}_0 \leq 3$ and $\norm{\vect{x}}_0 \geq N-3$. The proof uses the fact that $(\text{Aut}_p (\vect{x}))_k$ is the number of pairs of $1$'s with distance\footnote{Note that it is a wrap-around distance. For example, $x_0$ and $x_{N-1}$ are considered of distance $1$.} $k$ for a binary signal  $\vect{x}\in\{0, 1\}^N$. The combinatorial nature of $\text{Aut}_p (\vect{x})$ guarantees the uniqueness of $\vect{x}$ up to trivial ambiguities.

	\begin{prop} \label{prop: 0123 uniqueness}
		Given $\vect{x} \in \{0,1\}^N$, if $\norm{\vect{x}}_0 = 0, 1, 2, 3, N-3, N-2, N-1$ or $N$, then we can uniquely  recover $\vect{x}$ from $\abs{\mathcal{F}\vect{x}}$ up to the trivial ambiguities  \eqref{eq:trivial}.
	\end{prop}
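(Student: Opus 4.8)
The plan is to reduce the problem to a combinatorial statement about the periodic autocorrelation. By Theorem~\ref{thm: characterization}, $\abs{\mathcal{F}\vect{x}} = \abs{\mathcal{F}\vect{y}}$ is equivalent to $\text{Aut}_p(\vect{x}) = \text{Aut}_p(\vect{y})$, so it suffices to show: if $\vect{y}\in\{0,1\}^N$ has the same periodic autocorrelation as $\vect{x}$, and $\norm{\vect x}_0\in\{0,1,2,3\}$, then $\vect{y}$ is a cyclic shift of $\vect{x}$ (the conjugate-inverse and global-phase ambiguities act trivially on binary signals, reducing to reflection and identity). The cases $\norm{\vect x}_0\in\{N-3,N-2,N-1,N\}$ then follow immediately from Proposition~\ref{prop: x to 1-x}, since $\mathbbm 1-\vect x$ has $0,1,2,3$ ones. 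First I would record the elementary facts: $(\text{Aut}_p(\vect{x}))_0 = \norm{\vect{x}}_0$, so $\vect y$ has the same number of ones as $\vect x$; and for $k\neq 0$, $(\text{Aut}_p(\vect{x}))_k$ counts ordered pairs of indices $(i,j)$ with $x_i=x_j=1$ and $j-i\equiv k \pmod N$, hence equals the number of ones $x_\ell=1$ such that $x_{\ell+k}=1$ as well.

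The cases $\norm{\vect x}_0 = 0$ and $1$ are trivial: the zero vector and the single-spike vectors $\vect e_k$ are the only binary signals with that autocorrelation (all $\vect e_k$ being spatial shifts of $\vect e_0$). For $\norm{\vect x}_0 = 2$, write the two ones of $\vect x$ at positions $0$ and $d$ (after a spatial shift), with $1\le d\le N-1$; then $\text{Aut}_p(\vect x)$ is supported on $\{0,d,N-d\}$ with value $2$ at $d$ and $N-d$. Any binary $\vect y$ with two ones at positions $a<b$ produces autocorrelation supported on $\{0,b-a,N-(b-a)\}$, so $b-a\in\{d,N-d\}$, and in either case $\vect y$ is a spatial shift (possibly composed with reflection) of $\vect x$. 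For $\norm{\vect x}_0 = 3$, normalize the ones of $\vect x$ to positions $0, p, q$ with $0<p<q<N$; the difference multiset mod $N$ is $\{\pm p,\pm q,\pm(q-p)\}$, and this is exactly the support-with-multiplicity of $\text{Aut}_p(\vect x)$ away from $0$. The task is to show that the three gaps $\{p, q-p, N-q\}$ (the cyclic gap sequence of the support) are determined up to cyclic rotation and reversal by this difference multiset; then $\vect y$, having three ones with the same cyclic gap sequence, is a spatial shift or reflected shift of $\vect x$.

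The main obstacle is the $\norm{\vect x}_0=3$ case: one must rule out the possibility that a different unordered triple of gaps $\{p',q'-p',N-q'\}$ summing to $N$ yields the same six-element difference multiset $\{\pm p,\pm q,\pm(q-p)\}$. I expect to handle this by a short case analysis on the multiplicities in $\text{Aut}_p(\vect x)$: generically the six nonzero differences are distinct, and then $\{p,q\}$ and hence $q-p$ are read off directly (the smallest positive difference and so on), pinning down the gap sequence; the degenerate sub-cases, where two differences coincide (e.g.\ $q-p = p$, i.e.\ an arithmetic-progression configuration, or $q = N-q$, or $p = N-q$), each force a rigid structure on the support that is again shift/reflection-unique, and must be checked individually. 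Care is needed with the wrap-around distance and with even $N$ (where $k$ and $N-k$ may collide, e.g.\ $k=N/2$), but in each instance the combinatorial count $(\text{Aut}_p(\vect x))_k$ leaves only finitely many candidate configurations, all equivalent under \eqref{eq:trivial}. Finally I would note that for binary signals the conjugate-inverse ambiguity $y_k=\overline{x_{-k}}$ is just $y_k = x_{-k}$, a reflection, and global phase shift preserving $\{0,1\}^N$ forces $\phi_0=0$, so "up to \eqref{eq:trivial}" here means "up to spatial shift and reflection," which is exactly what the gap-sequence argument delivers.
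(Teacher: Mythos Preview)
Your proposal is correct and follows essentially the same route as the paper: reduce via Theorem~\ref{thm: characterization} to the periodic autocorrelation, dispose of $\norm{\vect x}_0\le 3$ by a short combinatorial case analysis, and obtain $\norm{\vect x}_0\ge N-3$ by complementation (the paper does this inline rather than invoking Proposition~\ref{prop: x to 1-x}, but the content is identical). The only cosmetic difference is in organizing the $\norm{\vect x}_0=3$ case---the paper first fixes the smallest $k$ with $(\text{Aut}_p(\vect x))_k\neq 0$, branches on whether that value is $1$ or $\ge 2$, and then locates the third one directly, whereas you phrase it via the cyclic gap multiset---but these are the same finite check dressed differently; one small slip: for $\norm{\vect x}_0=2$ the autocorrelation value at $d$ and $N-d$ is $1$, not $2$ (unless $d=N/2$), though this does not affect your argument.
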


	\begin{rem} \label{rem: ambiguities}
		The above does not hold for $4 \leq \norm{\vect{x}}_0 \leq N-4$ in general. For example,
		$(0,     0,     0,     0,     0,     1,     0,     1,     0,     0,     1,     1)^T$ and
		$(0,     0,     0,     0,     1,     0,     0,     0,     1,     0,     1,     1)^T$ have the same magnitude after Fourier Transform, but they are not related to each other by trivial ambiguities.
		%These two vectors also have the same FROG trace, which suggests that FROG with more measurements than in the classic Fourier case can not resolve ambiguities either. 
	\end{rem}
	
	Next, we would like to discuss the uniqueness in oversampling case. Recall that the $Z$-transform of a signal $\vect x \in \mathbb{C}^N$ is defined by $$P_{\vect{x}}(z) = \sum^{N-1}_{k = 0} x_k z^k,$$ which is a complex polynomial. The reciprocal polynomial $\tilde{P}_{\vect x}(z)$ of $P_{\vect x}(z)$ is defined by $\tilde{P}_{\vect x}(z) = z^n P_{\vect x}(z^{-1})$, where $n$ is the degree of the polynomial $P_{\vect x}(z)$. %\footnote{$\tilde P(z)$ is called a reciprocal polynomial of $P(z)$ if $\tilde P_{\vect{x}}(z) = z^{n}P_{\vect{x}}(z^{-1})$, where $n$ is the degree of the polynomial $P(z)$.}.
	If the $Z$-transform of an unknown binary signal $P_{\vect{x}}$ is either reciprocal or irreducible, then $\vect x$ can be recover uniquely up to conjugate inverse. Using this fact, the exact recovery up to trivial ambiguities in the oversampling case is characterized in Propositions~\ref{prop: uniqueness for equalling to conjugate inverse}-\ref{prop: probability of uniqueness}.
	
	\begin{prop} \label{prop: uniqueness for equalling to conjugate inverse}
		Given $M \geq 2N-1$ in the setting of the oversampling  Fourier PR, $\vect{x} \in \{ 0, 1\}^N$, if $x_n = x_{N-1-n}$ for all $n = 0, 1, \dots, N-1$, then we can recover $\vect{x}$ uniquely.
	\end{prop}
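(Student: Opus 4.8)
The plan is to pass to the $Z$-transform, where the hypothesis $x_n=x_{N-1-n}$ becomes the statement that $P_{\vect x}$ is a reciprocal polynomial, and then to appeal to the ambiguity classification recalled just before the statement. First I would use the spatial-shift ambiguity to assume the first nonzero entry of $\vect x$ sits at index $0$; the symmetry then forces $x_0=x_{N-1}=1$ (if $x_0=0$ one peels off the common zero boundary first and runs the argument on the shorter, still palindromic, signal). Let $\vect y\in\{0,1\}^N$ be any competitor with $\abs{\mathcal F_M\vect y}=\abs{\mathcal F_M\vect x}$. Since $M\ge 2N-1$, Theorem~\ref{thm: characterization Oversampling} gives $\text{Aut}(\vect x)=\text{Aut}(\vect y)$; evaluating this at the extreme lag, $\text{Aut}(\vect x)_{N-1}=x_{N-1}\overline{x_0}=1$ forces $y_0=y_{N-1}=1$, so all of $P_{\vect x},\tilde P_{\vect x},P_{\vect y},\tilde P_{\vect y}$ have exact degree $N-1$ and nonvanishing constant term. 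Encoding the autocorrelation of a real signal by the polynomial $z^{N-1}P(z)P(1/z)=P(z)\tilde P(z)$, equality of the autocorrelations is the identity
\begin{equation*}
P_{\vect y}(z)\,\tilde P_{\vect y}(z)=P_{\vect x}(z)\,\tilde P_{\vect x}(z).
\end{equation*}

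Next, after this normalization the palindrome hypothesis says exactly $\tilde P_{\vect x}=P_{\vect x}$, so the right-hand side is the perfect square $P_{\vect x}(z)^2$. Working in the UFD $\mathbb C[z]$, $P_{\vect y}$ is then a degree-$(N-1)$ divisor of $P_{\vect x}(z)^2$: on each zero $\rho$ of $P_{\vect x}$ of multiplicity $m$ it keeps some multiplicity $0\le a\le 2m$, the remaining $2m-a$ copies being carried by $\tilde P_{\vect y}$; since the reciprocal operation $P\mapsto\tilde P$ sends the zero $\rho$ to $1/\rho$ and $P_{\vect x}$ is reciprocal, matching $P_{\vect y}\tilde P_{\vect y}$ with $P_{\vect x}^2$ leaves exactly the freedom of redistributing the $2m$ copies between $\rho$ and $1/\rho$ within each reciprocal pair, and the balanced split $a=m$ everywhere returns $P_{\vect y}=P_{\vect x}$. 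The substantive step is to show that every other split leaves the coefficient vector of $P_{\vect y}$ outside $\{0,1\}^N$, which is ruled out because $\vect y$ is binary; hence $P_{\vect y}=P_{\vect x}$. Translating back, $\vect y$ equals $\vect x$ up to the spatial shift normalized away, and since the conjugate inverse of the palindromic $\vect x$ coincides with $\vect x$, this is the claimed unique recovery (up to the trivial ambiguities \eqref{eq:trivial}). This last implication is precisely the fact quoted before the statement (``reciprocal $Z$-transform $\Rightarrow$ unique up to conjugate inverse'').

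The bookkeeping in the first paragraph is routine. The crux, and the step I expect to require genuine care, is that binary rigidity: a nonbalanced redistribution of zeros — which generically raises the multiplicity of some $\rho$ while deleting a copy of $1/\rho$ — cannot produce a degree-$(N-1)$ polynomial with leading and constant coefficients $1$ and all coefficients in $\{0,1\}$. In the write-up I would either cite the quoted fact for this, or prove it by tracking the interaction between the integrality and the sign of the coefficients of $P_{\vect y}$ and the factor structure imposed by $P_{\vect y}\tilde P_{\vect y}=P_{\vect x}^2$; every other ingredient is the elementary $Z$-transform translation above.
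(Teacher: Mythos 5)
Your outline follows the same route as the paper's: use Theorem~\ref{thm: characterization Oversampling} to pass from $\abs{\mathcal F_M\vect x}=\abs{\mathcal F_M\vect y}$ to $\text{Aut}(\vect x)=\text{Aut}(\vect y)$, translate to the $Z$-transform where the palindrome hypothesis gives $\tilde P_{\vect x}=P_{\vect x}$ (after normalizing away the symmetric zero boundary), and then argue that $P_{\vect y}\tilde P_{\vect y}=P_{\vect x}\tilde P_{\vect x}=P_{\vect x}^2$ forces $P_{\vect y}=P_{\vect x}$. The preliminary bookkeeping is fine, and your observation that $\text{Aut}(\vect x)_{N-1}=x_{N-1}\overline{x_0}=1$ pins down $y_0=y_{N-1}=1$ (hence the degrees and constant terms) is actually more careful than the paper, which silently assumes $P_{\vect x}$ is genuinely reciprocal.

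The problem is that the decisive step is exactly the one you leave open. Your ``binary rigidity'' claim --- that an unbalanced redistribution of the roots of $P_{\vect x}^2$ within reciprocal pairs $\{\rho,1/\rho\}$ can never yield a polynomial with coefficients in $\{0,1\}$ --- is not a routine verification, and ``tracking the integrality and the sign of the coefficients'' gives no mechanism to exclude an unbalanced split whose elementary symmetric functions happen to land in $\{0,1\}$. This claim is precisely the content of the paper's Theorem~\ref{thm: reciprocal}, and its proof there is not elementary over $\mathbb C$: it factors $P_{\vect x}$ into irreducibles over $\mathbb Z$, invokes the cited result Theorem~\ref{thm: reciprocal division} (Filaseta) that a $0$--$1$ reciprocal polynomial with constant term $1$ has no non-reciprocal integer factor, concludes each irreducible factor $f_j$ is reciprocal and therefore divides both $P_{\vect y}$ and $\tilde P_{\vect y}$, and cancels factors inductively to get $P_{\vect y}=\tilde P_{\vect y}=P_{\vect x}$. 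So if you intend to establish the rigidity yourself, there is a genuine gap at that point. If instead you take your other option and simply cite the fact stated before the proposition (``reciprocal $Z$-transform $\Rightarrow$ unique up to conjugate inverse''), the argument is complete, but then it is the paper's own proof (Theorem~\ref{thm: characterization Oversampling} combined with Theorem~\ref{thm: reciprocal}) rather than an independent derivation.
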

	
	\begin{prop} \label{prop: probability of uniqueness}
		Given $M \geq 2N-1$ in the setting of the oversampling  Fourier PR, we can recover a random unknown binary $\vect{x} \in \{ 0, 1\}^N$ uniquely up to the equivalence relation defined by $y_n = x_{N-1-n}$ with probability at least $\frac{c}{logN}$ for a constant $c > 0$. 
	\end{prop}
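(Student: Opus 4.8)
The plan is to reduce the statement, via the autocorrelation picture and unique factorization in $\mathbb{Z}[z]$, to a lower bound on the probability that the $Z$-transform $P_{\vect x}(z)=\sum_k x_k z^k$ of the random binary signal (drawn uniformly from $\{0,1\}^N$) is irreducible over $\mathbb{Q}$, and then to establish that lower bound with Cohn's irreducibility criterion combined with the prime number theorem, sharpened by a sieve. By Theorem~\ref{thm: characterization Oversampling}, for $M\geq 2N-1$ the data $\abs{\mathcal F_M\vect x}$ is equivalent to knowing $\text{Aut}(\vect x)$. I would condition on the event $x_0=x_{N-1}=1$, which has probability $\tfrac14$ and incidentally removes the spatial-shift ambiguity; in this regime $\deg P_{\vect x}=N-1$, the $Z$-transform of $\text{Aut}(\vect x)$ equals $z^{-(N-1)}P_{\vect x}(z)\,\widetilde P_{\vect x}(z)$, and any binary $\vect y$ with $\text{Aut}(\vect y)=\text{Aut}(\vect x)$ satisfies $(\text{Aut}(\vect y))_{N-1}=1$, forcing $y_0=y_{N-1}=1$ and $\deg P_{\vect y}=N-1$. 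Thus $\text{Aut}(\vect x)=\text{Aut}(\vect y)$ becomes the identity $P_{\vect x}\widetilde P_{\vect x}=P_{\vect y}\widetilde P_{\vect y}$ in $\mathbb{Z}[z]$; if in addition $P_{\vect x}$ is irreducible over $\mathbb{Q}$ (equivalently, being primitive, over $\mathbb{Z}$), then $P_{\vect x}\mid P_{\vect y}\widetilde P_{\vect y}$ together with matching degrees and leading coefficients yields $P_{\vect y}=P_{\vect x}$ or $P_{\vect y}=\widetilde P_{\vect x}$, i.e.\ $\vect y=\vect x$ or $y_n=x_{N-1-n}$. This is the ``irreducible'' branch of the fact recalled before the proposition (the ``reciprocal'' branch concerns only exponentially few signals and is irrelevant to the asymptotics). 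So it suffices to prove
\[
\Pr\bigl[\,P_{\vect x}\ \text{irreducible over}\ \mathbb{Q}\ \mid\ x_0=x_{N-1}=1\,\bigr]\ \geq\ \frac{c'}{\log N}.
\]

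For the irreducibility probability I would invoke Cohn's criterion in the Brillhart--Filaseta--Odlyzko form: for a polynomial with coefficients in $\{0,1\}$ and any integer $b\geq 2$, if $P_{\vect x}(b)=\sum_k x_k b^k$ is a prime number then $P_{\vect x}$ is irreducible. Hence, for any threshold $B$,
\[
\{\,P_{\vect x}\ \text{irreducible}\,\}\ \supseteq\ \bigcup_{b=2}^{B}\{\,P_{\vect x}(b)\ \text{prime}\,\},
\]
and Bonferroni gives $\Pr[P_{\vect x}\ \text{irreducible}]\geq \sum_{b=2}^B\Pr[P_{\vect x}(b)\ \text{prime}]-\sum_{2\leq b<b'\leq B}\Pr[P_{\vect x}(b),\,P_{\vect x}(b')\ \text{both prime}]$.

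The next step is to estimate these two sums. For $b=2$ the integer $P_{\vect x}(2)$ is exactly uniform over the odd integers in $(2^{N-1},2^N)$, so the prime number theorem gives $\Pr[P_{\vect x}(2)\ \text{prime}]\asymp 1/N$, which already yields the weaker bound $\Pr[P_{\vect x}\ \text{irreducible}]\gtrsim 1/N$. To upgrade to $1/\log N$ I would use many bases: as the free bits $x_1,\dots,x_{N-2}$ vary, the integer $P_{\vect x}(b)$ (of size about $b^N$) is equidistributed modulo every fixed small modulus coprime to $b$ --- the bits carrying powers $b^k\equiv 1\pmod q$ already sweep out all residues, with exponentially small error in $N$ --- and likewise the pair $(P_{\vect x}(b),P_{\vect x}(b'))$ is equidistributed modulo every small prime $q\nmid bb'(b'-b)$, since two consecutive exponent vectors $(b^k,(b')^k)$ and $(b^{k+1},(b')^{k+1})$ are linearly independent over $\mathbb{Z}/q$. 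A sieve then delivers $\Pr[P_{\vect x}(b)\ \text{prime}]\gg 1/(N\log b)$ and $\Pr[P_{\vect x}(b),P_{\vect x}(b')\ \text{both prime}]\ll 1/(N^2\log b\,\log b')$ with an implied constant that is $O(1)$ after averaging over the bases. Summing over $2\leq b\leq B$ makes the first sum $\gg B/(N\log B)$ and the second $\ll (B/(N\log B))^2$; choosing $B$ of order $N$ (with a small absolute constant) makes the first term of order $1/\log N$ and the second of strictly smaller order, so their difference is $\gg 1/\log N$. With the factor $\tfrac14$ from the conditioning this gives $\Pr[\text{unique recovery up to}\ y_n=x_{N-1-n}]\geq c/\log N$.

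The main obstacle is this last step, and most acutely the single-base lower bound $\Pr[P_{\vect x}(b)\ \text{prime}]\gg 1/(N\log b)$ for $b\geq 3$: there $P_{\vect x}(b)$ ranges over a sparse set of integers with restricted base-$b$ digits, so one cannot simply quote the prime number theorem and must run a genuine lower-bound sieve fed by the equidistribution input above, while also disposing of the finitely many exceptional small primes (notably $q=2$, where for odd $b$ the residues of $P_{\vect x}(b)$ and $P_{\vect x}(b')$ are not independent) and checking that the pairwise-overlap sum is genuinely of smaller order uniformly in the chosen bases. If one is content with a weaker bound, the argument still gives $c/N$ from base $b=2$ and the prime number theorem alone; the multi-base and sieve layer is precisely what turns $1/N$ into the claimed $1/\log N$.
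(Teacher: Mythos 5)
Your reduction is the same as the paper's: use Theorem~\ref{thm: characterization Oversampling} to pass from $\abs{\mathcal F_M\vect x}$ to $\text{Aut}(\vect x)$, condition on $x_0=x_{N-1}=1$ (paying the factor $\tfrac14$), and use unique factorization in $\mathbb Z[z]$ to show that irreducibility of $P_{\vect x}$ forces $P_{\vect y}\widetilde P_{\vect y}=P_{\vect x}\widetilde P_{\vect x}$ to imply $\vect y=\vect x$ or $y_n=x_{N-1-n}$; this is exactly Theorem~\ref{thm: irreducible} and the paper's Proof of Proposition~\ref{prop: probability of uniqueness}. The divergence — and the genuine gap — is in how you obtain the probability bound $\Pr[P_{\vect x}\ \text{irreducible}]\gg 1/\log N$. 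The paper does not prove this; it quotes it as Theorem~\ref{thm: 01 irreducible polynomial}, i.e.\ Konyagin's theorem \cite{KonyaginSV1999Otno}, whose proof is an algebraic/combinatorial counting argument about $0$--$1$ polynomials and does not go through prime values at all.

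Your proposed substitute proof of that bound does not work. Cohn's criterion with base $b=2$ plus the prime number theorem indeed gives only $c/N$, as you note. The upgrade to $c/\log N$ rests on the claim $\Pr[P_{\vect x}(b)\ \text{prime}]\gg 1/(N\log b)$ for the bases $3\le b\le B$ with $B\asymp N$. For $b\ge 3$ the integers $P_{\vect x}(b)$ are precisely the integers of size about $b^N$ whose base-$b$ digits all lie in $\{0,1\}$, a set of cardinality $2^{N}$ inside an interval of length about $b^{N}$; asking for the expected density of primes in this set is the notoriously difficult ``primes with restricted digits'' problem. Equidistribution modulo small moduli plus a lower-bound sieve cannot deliver it: the parity obstruction prevents any sieve fed only by such linear (Type I) information from detecting primes, and in a set this sparse even almost-prime conclusions degrade. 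The only known results of this shape require heavy bilinear (Type II) machinery and hold only for sufficiently large bases (Maynard), not uniformly for all $3\le b\le N$ as your argument needs; for small bases such as $b=3$ the statement is open. So the step you flag as ``the main obstacle'' is not a technical sieve exercise but an open problem, and with it removed your argument proves only the weaker $c/N$ bound. To recover the stated $c/\log N$ you should, as the paper does, simply invoke Theorem~\ref{thm: 01 irreducible polynomial} (Konyagin) for the irreducibility probability and keep the rest of your reduction.
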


	Note that the factor $\frac{c}{logN}$ in Proposition~\ref{prop: probability of uniqueness} is  a lower bound. In fact, there is a conjecture in \cite{OdlyzkoP93} that most of all polynomial with $\{0, 1\}$ coefficients are irreducible. If it holds,  a much better lower bound can be expected.

	\subsection{Extensions to other sampling schemes}\label{sect:more_sampling}
	
	We  extend the analysis of Theorem~\ref{thm: ab box to binary} to the  oversampling case, STFT, and FROG  in Theorems~\ref{thm: 01 box to binary oversampling}--\ref{thm: -11 FROG}, respectively.  Also, it can be extended to $\{0, \alpha\}^N$, $\{-\alpha, \alpha\}^N$ simply by scaling, which are omitted.
	\begin{thm} \label{thm: 01 box to binary oversampling}
		Let $M \geq N$, given $\vect{x} \in \{0, 1\}^N, \vect{y} \in [0, 1]^N$, if $\abs{\mathcal{F}_{N \to M}\vect{x}} = \abs{\mathcal{F}_{N \to M}\vect{y}}$, then $\vect{y} \in \{ 0,1 \}^N$  and $\norm{\vect{y}}_0 = \norm{\vect{x}}_0$.
	\end{thm}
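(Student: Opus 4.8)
The plan is to bypass the autocorrelation machinery entirely and argue directly from two scalar quantities read off from the magnitude vector: its zeroth coordinate and its total squared modulus. This is in fact the same mechanism used in the proof of Theorem~\ref{thm: ab box to binary}, specialized to $(\alpha,\beta)=(0,1)$, but packaged so as to handle every $M\ge N$ at once.

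First I would use the zero frequency. The top row of $\mathcal{F}_{N\to M}$ is the all-ones row, so $(\mathcal{F}_{N\to M}\vect{x})_0=\sum_{k=0}^{N-1}x_k$, and since every entry of $\vect{x}$ is nonnegative this already equals $\abs{(\mathcal{F}_{N\to M}\vect{x})_0}$; likewise $\abs{(\mathcal{F}_{N\to M}\vect{y})_0}=\sum_{k=0}^{N-1}y_k$ because $\vect{y}\in[0,1]^N$. Equality of the two magnitude vectors at coordinate $0$ then gives $\sum_{k=0}^{N-1}x_k=\sum_{k=0}^{N-1}y_k$.

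Next I would establish an oversampled Parseval identity: for every $\vect{v}\in\mathbb{C}^N$ and every $M\ge N$,
\[
\sum_{n=0}^{M-1}\abs{(\mathcal{F}_{N\to M}\vect{v})_n}^2=M\,\norm{\vect{v}}_2^2 .
\]
This is immediate once one expands the left-hand side as $\sum_{k,l=0}^{N-1}v_k\overline{v_l}\sum_{n=0}^{M-1}\omega^{n(k-l)}$ with $\omega=e^{-2\pi i/M}$: the inner geometric sum equals $M$ when $M\mid(k-l)$ and $0$ otherwise, and because $0\le k,l\le N-1\le M-1$ the divisibility $M\mid(k-l)$ occurs exactly when $k=l$. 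Squaring the hypothesis $\abs{\mathcal{F}_{N\to M}\vect{x}}=\abs{\mathcal{F}_{N\to M}\vect{y}}$ coordinatewise, summing over $n$, and invoking this identity yields $\norm{\vect{x}}_2^2=\norm{\vect{y}}_2^2$.

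Finally I would combine the two facts. Since $\vect{x}$ is binary, $\sum_k x_k^2=\sum_k x_k$, hence
\[
\sum_{k=0}^{N-1}y_k(1-y_k)=\sum_{k=0}^{N-1}y_k-\sum_{k=0}^{N-1}y_k^2=\sum_{k=0}^{N-1}x_k-\sum_{k=0}^{N-1}x_k^2=0 .
\]
Every summand $y_k(1-y_k)$ is nonnegative because $y_k\in[0,1]$, so each vanishes, forcing $y_k\in\{0,1\}$ for all $k$; thus $\vect{y}\in\{0,1\}^N$, and then $\norm{\vect{y}}_0=\sum_k y_k=\sum_k x_k=\norm{\vect{x}}_0$. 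I do not foresee a serious obstacle; the only point needing attention is the Parseval identity, where the oversampling hypothesis $M\ge N$ is precisely what keeps the rows of $\mathcal{F}_{N\to M}$ orthogonal enough to pin down $\norm{\cdot}_2$. (For $M\ge 2N-1$ one could alternatively obtain $\norm{\vect{x}}_2=\norm{\vect{y}}_2$ from Theorem~\ref{thm: characterization Oversampling}, and for $M=N$ the statement is a special case of Theorem~\ref{thm: ab box to binary}, but the direct argument above covers the whole range $M\ge N$ uniformly.)
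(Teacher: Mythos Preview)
Your argument is correct. The oversampled Parseval identity is exactly where the hypothesis $M\ge N$ enters, and your verification of it via the geometric sum is clean; the remaining steps are airtight.

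The paper's proof takes a different route. It zero-pads $\vect{x}$ and $\vect{y}$ to vectors $\tilde{\vect{x}},\tilde{\vect{y}}\in\mathbb{C}^M$, observes that $\mathcal{F}_{N\to M}\vect{x}=\mathcal{F}_{M\to M}\tilde{\vect{x}}$, and then invokes Theorem~\ref{thm: ab box to binary} (the square case) on $\tilde{\vect{x}},\tilde{\vect{y}}$. So the paper's argument is a reduction to an already-proved result, whereas yours is a direct self-contained proof. Your oversampled Parseval computation is in fact the zero-padding trick in analytic clothing, since $\|\tilde{\vect{x}}\|_2=\|\vect{x}\|_2$ and the square-case Parseval applied to $\tilde{\vect{x}}$ gives the same identity. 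Where the two proofs genuinely diverge is in the final step: the paper, via Theorem~\ref{thm: ab box to binary}, goes through Krein--Milman and strict convexity of $\|\cdot\|_2^2$ to force $\vect{y}$ onto an extreme point, while your identity $\sum_k y_k(1-y_k)=0$ together with termwise nonnegativity reaches the same conclusion with no convex-geometry machinery. Your route is more elementary and in fact generalizes (replace $y_k(1-y_k)$ by $(y_k-\alpha)(\beta-y_k)$) to give an alternative proof of Theorem~\ref{thm: ab box to binary} itself; the paper's route has the advantage of modularity, reusing one theorem rather than reproving it.
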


	\begin{thm} \label{thm: 01 STFT}
		Given $\vect{x} \in \{0,1\} ^N$ and $\vect{y} \in [0,1]^N$, if $\vect{x}$ and $\vect{y}$ have the same STFT under non-zero constant window, with $W \geq L$, as defined in (\ref{eq:STFT}), then $\vect{y} \in \{0,1\}^N$.
	\end{thm}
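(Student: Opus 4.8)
The plan is to reduce the statement to the single–window result Theorem~\ref{thm: ab box to binary}, applied one short-time section at a time. First I would unwind \eqref{eq:STFT}: for a non-zero constant window, say $w_j=c\neq 0$ for $0\le j\le W-1$ and $w_j=0$ otherwise, the $m$-th slice is $\vect z_{\cdot,m}=\mathcal F(c\,\vect x^{(m)})$, where $\vect x^{(m)}\in\mathbb C^N$ is the restriction of $\vect x$ to the index block $I_m:=\{mL-W+1,\dots,mL\}\cap\{0,\dots,N-1\}$, zero-padded outside $I_m$ (this is because $w_{mL-k}=c$ exactly when $k\in I_m$). Defining $\vect y^{(m)}$ analogously from $\vect y$, the hypothesis that $\vect x$ and $\vect y$ have the same STFT gives $|c|\,\abs{\mathcal F\vect x^{(m)}}=|c|\,\abs{\mathcal F\vect y^{(m)}}$, hence $\abs{\mathcal F\vect x^{(m)}}=\abs{\mathcal F\vect y^{(m)}}$ for every $m=0,1,\dots,R-1$.

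Next, for each fixed $m$ I would observe that $\vect x^{(m)}\in\{0,1\}^N$ (a restriction of a binary vector) while $\vect y^{(m)}\in[0,1]^N$, so Theorem~\ref{thm: ab box to binary} with $\alpha=0$, $\beta=1$ applies and yields $\vect y^{(m)}\in\{0,1\}^N$; in particular $y_k\in\{0,1\}$ for every $k\in I_m$.

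Finally I would check that $\bigcup_{m=0}^{R-1} I_m=\{0,1,\dots,N-1\}$, which is exactly where the hypothesis $W\ge L$ is used. Given $k$ with $0\le k\le N-1$, an index $m$ with $k\in I_m$ must satisfy $k/L\le m\le (k+W-1)/L$; since $L\lceil k/L\rceil-k\le L-1\le W-1$, the integer $m=\lceil k/L\rceil$ lies in that range, and $\lceil k/L\rceil\le R-1=\lceil (N+W-1)/L\rceil-1$ follows again from $W\ge L$ (because $(N+W-1)/L\ge (N-1)/L+1$). Hence every coordinate of $\vect y$ lies in some $I_m$, so $y_k\in\{0,1\}$ for all $k$, i.e.\ $\vect y\in\{0,1\}^N$.

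The only genuinely non-routine step is the last one: confirming that the overlapping-window grid tiles the entire index range under the sharp hypothesis $W\ge L$ (and not merely $W\ge L+1$, where consecutive windows strictly overlap). The reduction to Theorem~\ref{thm: ab box to binary} and the factoring out of the window constant $c$ are bookkeeping; note also that, as in the oversampling case, the same argument handles $\{0,\alpha\}^N$ by rescaling, and that $W\ge L$ cannot be dropped, since for $W<L$ the windows leave gaps on which $\vect y$ is unconstrained.
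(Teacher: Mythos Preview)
Your proposal is correct and follows essentially the same approach as the paper: reduce each STFT slice to a Fourier magnitude measurement of a windowed (zero-padded) copy of the signal, apply Theorem~\ref{thm: ab box to binary} with $\alpha=0,\beta=1$ slice by slice, and then use $W\ge L$ to guarantee that the windows cover every index. The paper's proof is terser on the covering step (it simply asserts that for each $l$ some $m$ satisfies $w_{mL-l}=1$), whereas you spell out an explicit choice $m=\lceil k/L\rceil$ and verify $0\le m\le R-1$; this extra bookkeeping is fine and makes the role of $W\ge L$ more transparent.
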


	\begin{thm} \label{thm: 01 FROG}
		Given $\vect{x} \in \{0,1\} ^N$ and $\vect{y} \in [0,1]^N$, if $\vect{x}$ and $\vect{y}$ have the same FROG trace \eqref{eq:FROG}, then $\vect{y} \in \{0,1\}^N$ and $\norm{\vect{y}}_0 = \norm{\vect{x}}_0$.
	\end{thm}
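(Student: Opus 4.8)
The plan is to reduce Theorem~\ref{thm: 01 FROG} to the already-established classical Fourier statement Theorem~\ref{thm: ab box to binary}, by exploiting the $m=0$ slice of the FROG trace. The key observation is that for a \emph{binary} signal the $m=0$ slice is nothing but the ordinary Fourier magnitude, because $x_k^2 = x_k$.

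First I would isolate the $m=0$ component of \eqref{eq:FROG}. Since $z_{n,0} = x_n x_{n+0} = x_n^2$, the $m=0$ trace of $\vect x$ is $\abs{\hat z^{\vect x}_{n,0}}^2 = \bigl\lvert\sum_{k=0}^{N-1} x_k^2\, e^{-2\pi i k n/N}\bigr\rvert^2$, and $x_k \in \{0,1\}$ gives $x_k^2 = x_k$, so this equals $\abs{(\mathcal F\vect x)_n}^2$. For $\vect y$ the same slice reads $\abs{(\mathcal F(\vect y\odot\vect y))_n}^2$, with no such simplification. Hence equality of the FROG traces of $\vect x$ and $\vect y$ forces, in particular, $\abs{\mathcal F\vect x} = \abs{\mathcal F(\vect y\odot\vect y)}$.

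Next I would note that $\vect y \in [0,1]^N$ implies $\vect y\odot\vect y \in [0,1]^N$ (indeed $0 \le y_k^2 \le 1$), so the pair $(\vect x,\ \vect y\odot\vect y)$ meets the hypotheses of Theorem~\ref{thm: ab box to binary} with $\alpha = 0$, $\beta = 1$. That theorem then yields $\vect y\odot\vect y \in \{0,1\}^N$, with the same number of $0$'s and of $1$'s as $\vect x$. Finally comes the elementary step: from $y_k \ge 0$ together with $y_k^2 \in \{0,1\}$ one gets $y_k \in \{0,1\}$ (if $y_k^2 = 0$ then $y_k = 0$; if $y_k^2 = 1$ then $y_k = 1$ since $y_k \ge 0$), so $\vect y \in \{0,1\}^N$; and since squaring preserves the support of a nonnegative vector, $\norm{\vect y}_0 = \norm{\vect y\odot\vect y}_0 = \norm{\vect x}_0$, completing the proof.

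I do not expect a serious obstacle here: once the $m=0$ slice is recognized as encoding $\abs{\mathcal F\vect x}$, the whole content is pushed onto Theorem~\ref{thm: ab box to binary}, and the $m\ge 1$ traces are not even needed. The only points demanding care are verifying the hypotheses of Theorem~\ref{thm: ab box to binary} exactly, in particular that $\vect y\odot\vect y$ lands in $[0,1]^N$, and the sign argument $y_k \ge 0,\ y_k^2\in\{0,1\}\ \Rightarrow\ y_k\in\{0,1\}$, which also handles the degenerate case $\vect x = \mathbbm 0$.
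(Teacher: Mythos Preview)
Your proposal is correct and follows essentially the same approach as the paper: both isolate the $m=0$ slice of the FROG trace, use $x_k^2=x_k$ to identify it with $\abs{\mathcal F\vect x}$ on the $\vect x$ side and with $\abs{\mathcal F(\vect y\odot\vect y)}$ on the $\vect y$ side, and then invoke Theorem~\ref{thm: ab box to binary} with $\alpha=0$, $\beta=1$ to force $\vect y\odot\vect y\in\{0,1\}^N$ and hence $\vect y\in\{0,1\}^N$ with the correct $\ell_0$ count. The only cosmetic difference is that the paper rephrases the last step as $\abs{\mathcal F\vect x}=\abs{\mathcal F\vect y}$ before reading off $\norm{\vect y}_0=\norm{\vect x}_0$, whereas you extract the count directly from the conclusion of Theorem~\ref{thm: ab box to binary}.
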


	\begin{thm} \label{thm: -11 FROG}
		Given $\vect{x} \in \{-1,1\} ^N$ and $\vect{y} \in [-1,1]^N$, if $\vect{x}$ and $\vect{y}$ have the same FROG trace, then $\vect{y} \in \{-1,1\}^N$.
	\end{thm}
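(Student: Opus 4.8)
The plan is to extract everything from the single slice $m=0$ of the FROG trace, where the $\pm1$ structure of $\vect{x}$ collapses the data to a point mass. For $\vect{x}\in\{-1,1\}^N$ one has $x_n x_{n+0\cdot L}=x_n^2=1$ for every $n$, so the vector $(z_{n,0})_n$ in \eqref{eq:FROG} is exactly the all-ones vector $\mathbbm{1}$. Hence the $m=0$ part of the FROG trace of $\vect{x}$ equals $\abs{\mathcal{F}\mathbbm{1}}^2$, which (since $\mathcal{F}\mathbbm{1}=(N,0,\dots,0)^T$) is supported at the single frequency $n=0$ with value $N^2$; in particular its $(n,m)=(0,0)$ entry is $\bigl(\sum_{k=0}^{N-1}x_k^2\bigr)^2=N^2$.

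First I would match this against $\vect{y}$. Since $\vect{x}$ and $\vect{y}$ have the same FROG trace, the $(n,m)=(0,0)$ entry of $\vect{y}$'s trace is also $N^2$; that entry equals $\bigl|\sum_{k=0}^{N-1}y_k y_{k+0\cdot L}\bigr|^2=\bigl(\sum_{k=0}^{N-1}y_k^2\bigr)^2$, where we used that $\vect{y}\in[-1,1]^N\subset\mathbb{R}^N$ makes $\sum_k y_k^2$ a nonnegative real. Taking the nonnegative square root gives $\sum_{k=0}^{N-1}y_k^2=N$. Next I would close the argument by extremality: each $y_k\in[-1,1]$ forces $y_k^2\le 1$, so $\sum_{k=0}^{N-1}y_k^2\le N$ with equality if and only if $y_k^2=1$ for all $k$. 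Combined with $\sum_k y_k^2=N$, this yields $y_k^2=1$, hence $y_k\in\{-1,1\}$, for every $k$, i.e.\ $\vect{y}\in\{-1,1\}^N$.

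I do not expect a genuine obstacle here; the only point needing a line of care is justifying that $\sum_k y_k^2\ge 0$ before taking the square root, which is immediate from $\vect{y}$ being real-valued. Note the contrast with the $\{0,1\}$-valued FROG statement (Theorem~\ref{thm: 01 FROG}): there $x_k^2=x_k$ only recovers $\abs{\mathcal{F}\vect{x}}$ from the $m=0$ slice and one must still invoke the box-to-binary principle (Theorem~\ref{thm: ab box to binary}) to pin down $\vect{y}$, whereas the identity $x_k^2\equiv 1$ for $\pm1$ signals makes a single Fourier coefficient enough. (Alternatively, one could phrase the same argument by applying Theorem~\ref{thm: ab box to binary} with $\alpha=0,\beta=1$ to the pair $\vect{x}\odot\vect{x}=\mathbbm{1}\in\{0,1\}^N$ and $\vect{y}\odot\vect{y}\in[0,1]^N$, whose $m=0$ slices give $\abs{\mathcal{F}(\vect{x}\odot\vect{x})}=\abs{\mathcal{F}(\vect{y}\odot\vect{y})}$, forcing $\vect{y}\odot\vect{y}=\mathbbm{1}$.) The higher-order slices $m\ge 1$ of the FROG trace are not needed for this conclusion.
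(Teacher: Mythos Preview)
Your proof is correct. Both you and the paper isolate the $m=0$ slice of the FROG trace, where $z_{n,0}=x_n^2=1$ gives $\vect{z}_0=\mathbbm{1}$ and $w_{n,0}=y_n^2\in[0,1]$, but the two arguments diverge from there. The paper simply notes that $\abs{\mathcal{F}\vect{z}_0}=\abs{\mathcal{F}\vect{w}_0}$ with $\vect{z}_0\in\{-1,1\}^N$, $\vect{w}_0\in[-1,1]^N$, and then invokes Corollary~\ref{cor: -11 box to binary} to force $\vect{w}_0\in\{-1,1\}^N$, hence $y_n^2=1$. Your main argument is more elementary: you read off only the single entry $(n,m)=(0,0)$, obtain $\sum_k y_k^2=N$, and close by the pointwise inequality $y_k^2\le 1$ without any appeal to the box-to-binary machinery. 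Your parenthetical alternative (applying Theorem~\ref{thm: ab box to binary} to $\vect{x}\odot\vect{x}=\mathbbm{1}$ and $\vect{y}\odot\vect{y}$) is essentially the paper's route, modulo which of the two box-relaxation results one cites. The extremality argument you give is shorter and self-contained for this particular case, while the paper's approach has the virtue of treating Theorems~\ref{thm: 01 FROG} and~\ref{thm: -11 FROG} by a uniform template.
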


	\begin{rem}
		Unlike Theorem~\ref{cor: -11 box to binary}, the number of $1$'s in $\vect x$ is not necessarily the same as the number of $1$'s nor $-1$'s in $\vect y$. For example, if we take $\vect x = (1, 1)^T$ and $\vect y = (1, -1)^T$, then $\vect x$ and $\vect y$ have the same FROG trace.
	\end{rem}
	
	\section{Denoising} \label{sect:denoising}
	The preceding sections focus on the noiseless case, where the measured data we obtain is $\vect{b} = \abs{\mathcal{F}\vect{x}}$. However, noise is inevitable in practice and there is a  need to develop denoising techniques for phase retrieval. For this purpose, we consider a corrupted measurement $\vect{\tilde{b}} = \vect{b} + \vect{\eta}$ with a noise term $\vect{\eta}$. In the proof of Theorem~\ref{thm: characterization} (specifically Lemma \ref{lemma: compute autocorrelation}), we reveal that $\mathcal{F}^{-1}(\vect{b} \odot \vect{b}) = \text{Aut}_p(\vect{x})$. If the noise $\vect{\eta}$ is small enough, then $\mathcal{F}^{-1}(\vect{\tilde{b}} \odot \vect{\tilde{b}})$ can be approximated by $\mathcal{F}^{-1}(\vect{b} \odot \vect{b})$, which is equivalent to $\text{Aut}_p(\vect{x})$. Proposition~\ref{prop: denoise} is about the  approximation error.
	
	\begin{prop} \label{prop: denoise}
		Given $\epsilon > 0$, $\vect{x} \in \mathbb{C}^N \setminus \{\vect{0}\}$, $\vect{b} = \abs{\mathcal{F}\vect{x}}$, $\tilde{\vect{b}} = \vect{b} + \vect{\eta}$ for some noise $\vect{\eta} \in \mathbb{C}^N$, if $\norm{\vect{\eta}}_{\infty} < \text{min} \{\dfrac{\epsilon}{4\norm{\vect{b}}_{\infty}}, \dfrac{\epsilon}{2}, 1\}$, then $\norm{\mathcal{F}^{-1}(\tilde{\vect{b}} \odot \tilde{\vect{b}}) - \text{Aut}_p(\vect{x})}_{\infty} < \epsilon$.
	\end{prop}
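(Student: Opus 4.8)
The plan is to reduce the statement to a one-line $\ell_\infty$ perturbation estimate for the linear map $\mathcal{F}^{-1}$. The starting point is the identity $\mathcal{F}^{-1}(\vect b \odot \vect b) = \text{Aut}_p(\vect x)$ already extracted in Lemma~\ref{lemma: compute autocorrelation} (used in the proof of Theorem~\ref{thm: characterization}). Since $\mathcal{F}^{-1}$ is linear, I would rewrite the error as
\[
\mathcal{F}^{-1}(\tilde{\vect b} \odot \tilde{\vect b}) - \text{Aut}_p(\vect x) = \mathcal{F}^{-1}\big(\tilde{\vect b} \odot \tilde{\vect b} - \vect b \odot \vect b\big),
\]
so that everything is controlled once I bound $\norm{\mathcal{F}^{-1}\vect z}_\infty$ in terms of $\norm{\vect z}_\infty$ and bound the vector $\tilde{\vect b} \odot \tilde{\vect b} - \vect b \odot \vect b$ entrywise.

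For the first bound I would use the explicit inverse DFT: writing $(\mathcal{F}^{-1}\vect z)_j = \frac1N \sum_{k=0}^{N-1} \omega^{-jk} z_k$ with $\abs{\omega}=1$, the triangle inequality gives $\abs{(\mathcal{F}^{-1}\vect z)_j} \le \frac1N \sum_k \abs{z_k} \le \norm{\vect z}_\infty$, hence $\norm{\mathcal{F}^{-1}\vect z}_\infty \le \norm{\vect z}_\infty$. For the second bound I would note that $\vect b = \abs{\mathcal{F}\vect x} \ge \vect 0$, so $\vect b \odot \vect b$ is the entrywise square, and for each index $j$, $\tilde b_j^2 - b_j^2 = (b_j+\eta_j)^2 - b_j^2 = 2b_j\eta_j + \eta_j^2$, whence
\[
\abs{\tilde b_j^2 - b_j^2} \le 2\abs{b_j}\abs{\eta_j} + \abs{\eta_j}^2 \le 2\norm{\vect b}_\infty\norm{\vect\eta}_\infty + \norm{\vect\eta}_\infty^2 .
\]
Here $\norm{\vect b}_\infty > 0$ because $\vect x \neq \vect 0$, which is exactly what makes the hypothesis $\norm{\vect\eta}_\infty < \epsilon/(4\norm{\vect b}_\infty)$ legitimate.

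Finally I would feed in the three-way minimum: $\norm{\vect\eta}_\infty < \epsilon/(4\norm{\vect b}_\infty)$ makes the linear term satisfy $2\norm{\vect b}_\infty\norm{\vect\eta}_\infty < \epsilon/2$, while $\norm{\vect\eta}_\infty < 1$ together with $\norm{\vect\eta}_\infty < \epsilon/2$ makes the quadratic term satisfy $\norm{\vect\eta}_\infty^2 \le \norm{\vect\eta}_\infty < \epsilon/2$. Combining with the operator bound yields
\[
\norm{\mathcal{F}^{-1}(\tilde{\vect b}\odot\tilde{\vect b}) - \text{Aut}_p(\vect x)}_\infty \le \norm{\tilde{\vect b}\odot\tilde{\vect b} - \vect b\odot\vect b}_\infty < \tfrac{\epsilon}{2} + \tfrac{\epsilon}{2} = \epsilon .
\]
There is no substantial obstacle; the only care needed is bookkeeping — confirming $\norm{\vect b}_\infty>0$ from $\vect x \ne \vect 0$, and checking that the particular three terms in the minimum are precisely calibrated to split the estimate into a linear half and a quadratic half. (The bound $\norm{\mathcal{F}^{-1}\vect z}_\infty \le \norm{\vect z}_\infty$ could be replaced by the sharper $\norm{\vect z}_1/N$, but it is unnecessary for this proposition.)
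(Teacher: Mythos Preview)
Your proof is correct and follows essentially the same route as the paper: reduce via Lemma~\ref{lemma: compute autocorrelation} to bounding $\norm{\mathcal{F}^{-1}(\tilde{\vect b}\odot\tilde{\vect b}-\vect b\odot\vect b)}_\infty$, expand $\tilde{\vect b}\odot\tilde{\vect b}-\vect b\odot\vect b = 2\vect b\odot\vect\eta + \vect\eta\odot\vect\eta$, and use the three-way minimum to make each piece contribute less than $\epsilon/2$. The only cosmetic difference is that the paper obtains $\norm{\mathcal{F}^{-1}\vect z}_\infty \le \norm{\vect z}_\infty$ by passing through $\ell_2$ (via $\norm{\cdot}_\infty \le \norm{\cdot}_2$, Parseval, and $\norm{\cdot}_2 \le \sqrt{N}\norm{\cdot}_\infty$), whereas you read it off directly from the explicit inverse DFT formula; both are immediate and equivalent.
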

	Ideally, it would be helpful to analyze the error to the ground-truth signal, which is unfortunately impossible due to trivial and non-trivial ambiguities.

	In the following, we restrict the ground-truth signal $\vect{x} \in \{0, 1\}^N$ and observe a denoising scheme based on Proposition~\ref{prop: denoise} often gives good results. 
	For binary signal $\vect x$,
	we know $\text{Aut}_p(\vect{x}) \in \mathbb{Z}^N$. If the noise $\vect{\eta}$ is small such that $\norm{\mathcal{F}^{-1}(\tilde{\vect{b}} \odot \tilde{\vect{b}}) - \text{Aut}_p(\vect{x})}_{\infty} < 1/2$, we can  round off each entry of $\mathcal{F}^{-1}(\tilde{\vect{b}} \odot \tilde{\vect{b}})$ to the nearest integer to perform denoising.
	%{\color{red}[I changed 5.2 from Corollary to Proposition; please double check for the entire paper.]}
	%{\color{blue} [Done]}
	%Proposition~\ref{cor: denoising related to sparsity} gives an upper bound of the noise $\vect{\eta}$ in terms of the $\ell_0$ norm, while extends to a general case.
	Since $(\text{Aut}_p (\vect{x}))_k$ is the number of pair of $1$'s with distance $k$,  $\norm{\mathcal{F}^{-1}(\tilde{\vect{b}} \odot \tilde{\vect{b}}) - \text{Aut}_p(\vect{x})}_{\infty} \geq 1/2$ means the measurements cannot give us the true number of pairs of $1$'s with distance $k$. In this circumstance, one should not expect to have a successful recovery. 
	%This justifies our choice of the assumption.

	\begin{prop} \label{cor: denoising related to sparsity}
		Given $\vect{x} \in \{0, 1\}^N \setminus \{ \vect 0 \}$, $\vect{b} = \abs{\mathcal{F}\vect{x}}$, $\tilde{\vect{b}} = \vect{b} + \vect{\eta}$ for some noise $\vect{\eta} \in \mathbb{C}^N$, if $\norm{\vect{\eta}}_{\infty} < \dfrac{1}{8\norm{\vect x}_0}$, then $\norm{\mathcal{F}^{-1}(\tilde{\vect{b}} \odot \tilde{\vect{b}}) - \text{Aut}_p(\vect{x})}_{\infty} < \dfrac{1}{2}$.
	\end{prop}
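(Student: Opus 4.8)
The plan is to deduce this proposition directly from Proposition~\ref{prop: denoise} by specializing $\epsilon = 1/2$ and checking that the sparsity hypothesis $\norm{\vect\eta}_\infty < \frac{1}{8\norm{\vect x}_0}$ implies the hypothesis required there. With $\epsilon = 1/2$, Proposition~\ref{prop: denoise} demands $\norm{\vect\eta}_\infty < \min\{\frac{1}{8\norm{\vect b}_\infty},\, \frac14,\, 1\}$, so the entire task reduces to showing that $\frac{1}{8\norm{\vect x}_0}$ does not exceed any of these three quantities.

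First I would bound $\norm{\vect b}_\infty$. Since $\vect b = \abs{\mathcal F\vect x}$ with $\vect x \in \{0,1\}^N$, for each frequency $n$ one has $b_n = \abs{\sum_k x_k e^{-2\pi i k n/N}} \le \sum_k \abs{x_k} = \norm{\vect x}_1 = \norm{\vect x}_0$, with equality at $n = 0$ because the entries of $\vect x$ are nonnegative; hence $\norm{\vect b}_\infty = \norm{\vect x}_0$, and so $\frac{1}{8\norm{\vect b}_\infty} = \frac{1}{8\norm{\vect x}_0}$. Next, because $\vect x \neq \vect 0$ and is binary, $\norm{\vect x}_0$ is a positive integer, so $\norm{\vect x}_0 \ge 1$ and therefore $\frac{1}{8\norm{\vect x}_0} \le \frac18 < \frac14$ and $\frac{1}{8\norm{\vect x}_0} \le \frac18 < 1$. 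Consequently $\min\{\frac{1}{8\norm{\vect b}_\infty},\,\frac14,\,1\} = \frac{1}{8\norm{\vect x}_0}$, so the hypothesis $\norm{\vect\eta}_\infty < \frac{1}{8\norm{\vect x}_0}$ is exactly the condition needed to invoke Proposition~\ref{prop: denoise} with $\epsilon = 1/2$; doing so yields $\norm{\mathcal F^{-1}(\tilde{\vect b}\odot\tilde{\vect b}) - \text{Aut}_p(\vect x)}_\infty < 1/2$, which is the claim.

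There is essentially no hard step: the proof is a bookkeeping reduction, and I expect the only point that needs a sentence of care is the identity $\norm{\vect b}_\infty = \norm{\vect x}_0$, which rests on the DC component $b_0$ of the Fourier transform equalling $\sum_k x_k$ — a fact special to nonnegative (here binary) signals. I would also remark that the constant $8$ appearing in the hypothesis is precisely what makes the three-way minimum collapse onto the first term, so the stated bound is the natural one coming out of this reduction.
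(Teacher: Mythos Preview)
Your proposal is correct and follows essentially the same route as the paper: set $\epsilon=1/2$, bound $\norm{\vect b}_\infty$ by $\norm{\vect x}_0$ via the triangle inequality (the paper uses only the inequality $\norm{\vect b}_\infty\le\norm{\vect x}_0$, whereas you additionally note the equality via $b_0=\sum_k x_k$), use $\norm{\vect x}_0\ge 1$ to handle the other two terms in the minimum, and then invoke Proposition~\ref{prop: denoise}.
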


	Since $\norm{\vect x}_0\leq N,$ it is straightforward to have   Corollary~\ref{cor: denoising for integral-valued signal}. We can also express the error analysis in Proposition~\ref{cor: denoising related to sparsity}  in terms of signal-to-noise ratio (SNR). 
	
	\begin{cor} \label{cor: denoising for integral-valued signal}
		Given $\vect{x} \in \{0, 1\}^N$, $\vect{b} = \abs{\mathcal{F}\vect{x}}$, $\tilde{\vect{b}} = \vect{b} + \vect{\eta}$ for some noise $\vect{\eta} \in \mathbb{C}^N$, if $\norm{\vect{\eta}}_{\infty} < \dfrac{1}{8N}$, then $\norm{\mathcal{F}^{-1}(\tilde{\vect{b}} \odot \tilde{\vect{b}}) - \text{Aut}_p(\vect{x})}_{\infty} < \dfrac{1}{2}$.
	\end{cor}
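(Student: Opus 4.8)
The plan is to obtain this statement as an immediate consequence of Proposition~\ref{cor: denoising related to sparsity}, after disposing separately of the degenerate case that the hypotheses of that proposition exclude. Indeed, Proposition~\ref{cor: denoising related to sparsity} is stated only for $\vect x \in \{0,1\}^N \setminus \{\vect 0\}$, whereas the corollary allows $\vect x = \vect 0$, so the argument splits into two cases.

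\textbf{Step 1 (nonzero signal).} Suppose $\vect x \neq \vect 0$. Then $\norm{\vect x}_0 \geq 1$, and since $\vect x \in \{0,1\}^N$ we also have $\norm{\vect x}_0 \leq N$; hence $\frac{1}{8N} \leq \frac{1}{8\norm{\vect x}_0}$. Consequently the hypothesis $\norm{\vect\eta}_\infty < \frac{1}{8N}$ forces $\norm{\vect\eta}_\infty < \frac{1}{8\norm{\vect x}_0}$, and Proposition~\ref{cor: denoising related to sparsity} applies verbatim to yield $\norm{\mathcal{F}^{-1}(\tilde{\vect b}\odot\tilde{\vect b}) - \text{Aut}_p(\vect x)}_\infty < \frac12$.

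\textbf{Step 2 (zero signal).} Suppose $\vect x = \vect 0$. Then $\vect b = \abs{\mathcal{F}\vect 0} = \vect 0$, so $\tilde{\vect b} = \vect\eta$, while $\text{Aut}_p(\vect 0) = \vect 0$ directly from \eqref{eq:Aut_p}. It therefore remains only to bound $\norm{\mathcal{F}^{-1}(\vect\eta\odot\vect\eta)}_\infty$. Every entry of $\mathcal{F}^{-1}$ has modulus $1/N$ (recall $\tfrac{1}{\sqrt N}\mathcal{F}$ is unitary and $\mathcal{F}$ is symmetric), so for any $\vect z \in \mathbb{C}^N$ one has $\norm{\mathcal{F}^{-1}\vect z}_\infty \leq \frac1N\norm{\vect z}_1 \leq \norm{\vect z}_\infty$. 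Taking $\vect z = \vect\eta\odot\vect\eta$ gives $\norm{\mathcal{F}^{-1}(\vect\eta\odot\vect\eta)}_\infty \leq \norm{\vect\eta}_\infty^2 < \frac{1}{64N^2} < \frac12$, which completes this case.

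\textbf{Main obstacle.} There is essentially none: the mathematical content lives entirely in Proposition~\ref{cor: denoising related to sparsity} (which in turn rests on Proposition~\ref{prop: denoise}), and the only genuinely new point is the bookkeeping that $\max\{1,\norm{\vect x}_0\} \leq N$ for binary signals together with the trivial treatment of $\vect x = \vect 0$. One could alternatively merge the two cases by retracing the proof of Proposition~\ref{prop: denoise} with $\norm{\vect x}_0$ replaced by $N$ throughout, but the case split above is cleaner and reuses the already-established proposition as a black box.
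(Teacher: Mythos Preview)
Your proof is correct and follows essentially the same approach as the paper: both split into the cases $\vect x \neq \vect 0$ (reduced to Proposition~\ref{cor: denoising related to sparsity} via $\norm{\vect x}_0 \leq N$) and $\vect x = \vect 0$ (handled directly by bounding $\norm{\mathcal{F}^{-1}(\vect\eta\odot\vect\eta)}_\infty \leq \norm{\vect\eta}_\infty^2 < \frac{1}{64N^2}$). The only cosmetic difference is that the paper reaches the last bound via $\norm{\cdot}_\infty \leq \norm{\cdot}_2$ and unitarity of $\tfrac{1}{\sqrt N}\mathcal{F}$, whereas you use the entrywise bound $\norm{\mathcal{F}^{-1}\vect z}_\infty \leq \tfrac1N\norm{\vect z}_1$.
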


	Recall SNR is defined by 
	$$\text{SNR}_\text{dB} = 10\log_{10} \dfrac{\norm{\vect x}_2^2}{\norm{\vect \eta}_2^2} .$$ 
	Proposition~\ref{cor: SNR} presents a condition  to safely round off each entry to  0 and 1.

	\begin{cor} \label{cor: SNR}
		Given $\vect{x} \in \{0, 1\}^N \setminus \vect 0$, if $$\text{SNR}_{\text{dB}} > 10\log_{10}(64) + 30\log_{10}\norm{\vect x}_0,$$ then $\norm{\mathcal{F}^{-1}(\tilde{\vect{b}} \odot \tilde{\vect{b}}) - \text{Aut}_p(\vect{x})}_{\infty} < \dfrac{1}{2}$.
	\end{cor}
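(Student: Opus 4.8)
The plan is to reduce Corollary~\ref{cor: SNR} to Proposition~\ref{cor: denoising related to sparsity} by translating the SNR hypothesis into the $\ell_\infty$ noise bound $\norm{\vect\eta}_\infty < \tfrac{1}{8\norm{\vect x}_0}$ required there. First I would dispose of the trivial case $\vect\eta = \vect 0$ (where $\text{SNR}_{\text{dB}} = +\infty$ and the left-hand side of the conclusion is $0 < 1/2$), so that in what follows $\norm{\vect\eta}_2 > 0$ and the SNR is well defined.

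Next I would unwind the definition. From $\text{SNR}_{\text{dB}} = 10\log_{10}\bigl(\norm{\vect x}_2^2/\norm{\vect\eta}_2^2\bigr)$ and the hypothesis $\text{SNR}_{\text{dB}} > 10\log_{10}(64) + 30\log_{10}\norm{\vect x}_0 = 10\log_{10}\bigl(64\,\norm{\vect x}_0^3\bigr)$, monotonicity of $\log_{10}$ gives $\norm{\vect x}_2^2/\norm{\vect\eta}_2^2 > 64\,\norm{\vect x}_0^3$. Here I use the key elementary identity that for a binary signal $\vect x \in \{0,1\}^N$ one has $\norm{\vect x}_2^2 = \sum_k x_k^2 = \sum_k x_k = \norm{\vect x}_0$ (using $x_k^2 = x_k$). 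Substituting, $\norm{\vect x}_0 / \norm{\vect\eta}_2^2 > 64\,\norm{\vect x}_0^3$, i.e. $\norm{\vect\eta}_2^2 < \tfrac{1}{64\,\norm{\vect x}_0^2}$, hence $\norm{\vect\eta}_2 < \tfrac{1}{8\,\norm{\vect x}_0}$ (note $\norm{\vect x}_0 \geq 1$ since $\vect x \neq \vect 0$, so the division is legitimate).

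Finally, since $\norm{\vect\eta}_\infty \leq \norm{\vect\eta}_2$ for any vector in $\mathbb{C}^N$, we obtain $\norm{\vect\eta}_\infty < \tfrac{1}{8\,\norm{\vect x}_0}$, which is exactly the hypothesis of Proposition~\ref{cor: denoising related to sparsity}; applying that proposition yields $\norm{\mathcal{F}^{-1}(\tilde{\vect b}\odot\tilde{\vect b}) - \text{Aut}_p(\vect x)}_\infty < \tfrac12$, as desired.

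There is no genuine obstacle here: the argument is a short chain of the bound $\ell_\infty \leq \ell_2$, the identity $\norm{\vect x}_2^2 = \norm{\vect x}_0$ for binary signals, and the logarithm bookkeeping $10\log_{10}(64) + 30\log_{10}\norm{\vect x}_0 = 10\log_{10}(64\,\norm{\vect x}_0^3)$. The only points requiring a little care are handling the $\vect\eta = \vect 0$ degenerate case separately and making sure the exponent $3$ in $\norm{\vect x}_0^3$ is accounted for correctly (one power is consumed turning $\norm{\vect x}_2^2$ into $\norm{\vect x}_0$, and the remaining $\norm{\vect x}_0^2$ produces the $\tfrac{1}{8\norm{\vect x}_0}$ after taking square roots).
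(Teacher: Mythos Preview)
Your proposal is correct and follows essentially the same route as the paper: unwind the SNR inequality to $\norm{\vect x}_2^2/\norm{\vect\eta}_2^2 > 64\,\norm{\vect x}_0^3$, use $\norm{\vect x}_2^2 = \norm{\vect x}_0$ for binary $\vect x$, bound $\norm{\vect\eta}_\infty \le \norm{\vect\eta}_2$, and invoke Proposition~\ref{cor: denoising related to sparsity}. Your version is in fact slightly more careful than the paper's, which omits the $\vect\eta = \vect 0$ edge case and leaves the final appeal to Proposition~\ref{cor: denoising related to sparsity} implicit.
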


	The proposed denoising scheme, referred to as \textit{rounding scheme}, is described as follows: given a corrupted measurement $\vect{\tilde{b}} \in \mathbb{C}^N$,
	\begin{enumerate}
		\item Round off each entry $\mathcal{F}^{-1}(\tilde{\vect{b}} \odot \tilde{\vect{b}})$ to nearest integer to get the autocorrelation $\text{Aut}_p(\vect{x})$. 
		\item Calculate $\vect{b} = \sqrt{\mathcal{F}(\text{Aut}_p(\vect{x}))}$, where the square root is taken entrywise.
		\item Solve the minimization problem:
		\begin{align}\label{eq:ADMMB}
		\vect{x^*}=\underset{\vect{x}}{\text{argmin}}   \norm{\abs{\mathcal{F}\vect{x}}-\vect{b}}^2_2
		\quad \text{s.t.}  \quad \vect{x} \in [0,1]^N.
		\end{align}
		
		\item Round off each entry of $\vect{x}^*$ to be either 0 or 1.
	\end{enumerate}
	
	We compare the proposed scheme with a \textit{na\"ive scheme} with the following steps: given a corrupted measurement $\vect{b} \in \mathbb{C}^N$, 
	\begin{enumerate}
		\item Solve the minimization problem \eqref{eq:ADMMB}.
		% \begin{align}
		%     \vect{x^*}=\underset{\vect{x}}{\text{argmin}}   \norm{\abs{\mathcal{F}\vect{x}}-\vect{\tilde b }}^2_2
		%       \quad \text{s.t.}  \quad \vect{x} \in [0,1]^N.
		%    \end{align}
		\item Round off each entry of $\vect{x}^*$ to be either 0 or 1.
	\end{enumerate}

		Both rounding and na\"ive schemes require to find a solution to \eqref{eq:ADMMB}, which can be solved via the alternating direction methods of multiplier (ADMM) \cite{boydPCPE11admm}.  We  summarize  in Algorithm 1  for Fourier phase retrieval subject to the $[0,1]$-box constraint \eqref{eq:ADMMB} via ADMM; for more details, please refer to  \cite{hartCL18}. 	
		Notice that ADMM requires two parameters:  $\rho_1$ and $\rho_2$.  We examine the effects of  these two parameters on the na\"ive scheme and the rouding scheme in terms of success rates. 
		We consider  a binary vector of length 50 with $5$ nonzero element as the ground-truth $\vect{x}_{\text{true}},$ which is  contaminated by noise with SNR$=16$ dB. 
		We choose $\rho_1, \rho_2$  among a candidate set of $\{10^{-6}, 10^{-5}, 10^{-4}, 10^{-3}, 10^{-2}\}$ and plot the success rates in Figure~\ref{fig: para_signal} based on 1000 random realizations; we declare a trial is successful if $\norm{\abs{\mathcal{F}\vect{x}_{\text{recovered}}} - \vect{b}} < 10^{-6}$. 
		We observe  no significant difference when $\rho_1 = \rho_2$ and hence we choose $\rho_1 = \rho_2 = 10^{-5}$ for both rounding and na\"ive schemes throughout the experiments. Figure~\ref{fig: para_signal} also shows that  our rounding scheme outperforms the na\"ive scheme when $\rho_1 = \rho_2$. 
	%This experiment also {\color{red} [I don't quite get the last sentence.]} {\color{purple} [Fixed.]}

		\begin{algorithm}[t]\label{algo:ADMMB}
			\begin{minipage}{20cm}
				\textbf{Input}: $\vect b$ and two positive parameters $\rho_1, \rho_2$\\
				\textbf{Initialize} $k = 0, \vect w^0 = 0, \vect d^0 = 0, \vect y^0 = 0, \vect z^0 = \vect b e^{i\vect \phi}$ with a random vector $\vect \phi$\\
				\indent 1
				\textbf{while} stopping conditions are not satisfied \textbf{do} \\
				\indent 2
				\quad $\vect x^{k+1} = (\rho_1+\rho_2)^{-1} (\rho_1 \mathcal F^* \vect z^k + \mathcal F^* \vect d^k + \rho_2 \vect y^k - \vect w^k)$\\
				\indent 3
				\quad $\vect y^{k+1} = \min(\max(\vect x^{k+1} + \vect w^k/ \rho_2,0),1)$ \\
				\indent 4
				\quad $\vect z^{k+1} = \textbf{prox}_{\rho_1} (\mathcal F \vect x^{k+1} - \vect d^k /\rho_1)$ \\
				\indent 5
				\quad $\vect d ^{k+1} = \vect d^k + \rho_1 (\vect z^{k+1} - \mathcal F \vect x^{k+1})$ \\
				\indent 6
				\quad $\vect w ^{k+1} = \vect w^k + \rho_2 (\vect x^{k+1} - \vect y^{k+1})$ \\
				\indent 7
				\quad k = k+1 \\
				\indent 8
				\textbf{end while} \\
				\textbf{Output} the solution $\vect x^* = \vect x ^k$
			\end{minipage}\\
			\caption{Fourier phase retrieval subject to a box constraint \eqref{eq:ADMMB} via ADMM.}
		\end{algorithm}

	\begin{figure}[t]  
		\includegraphics[width=.475\textwidth]{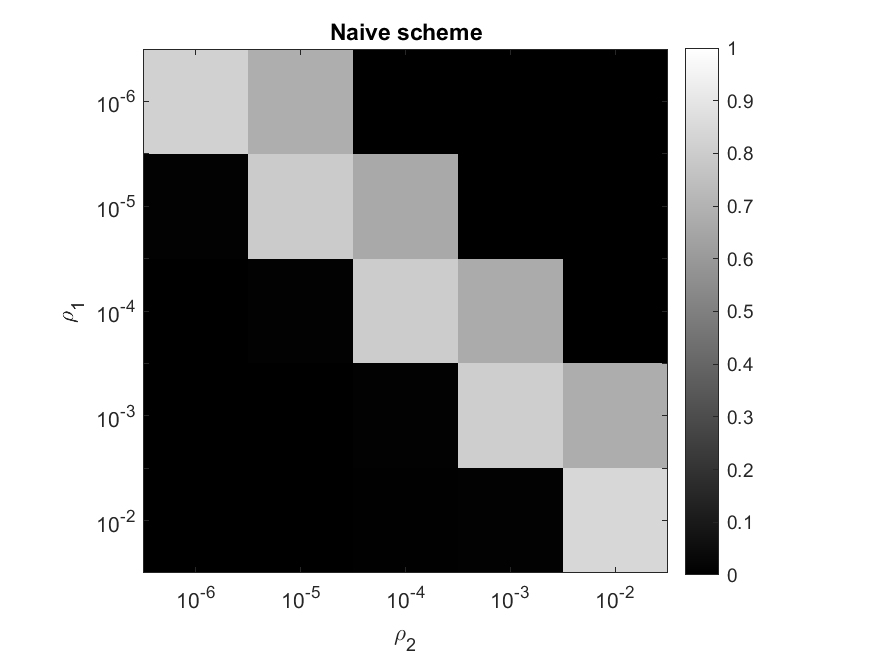}
		\includegraphics[width=.475\textwidth]{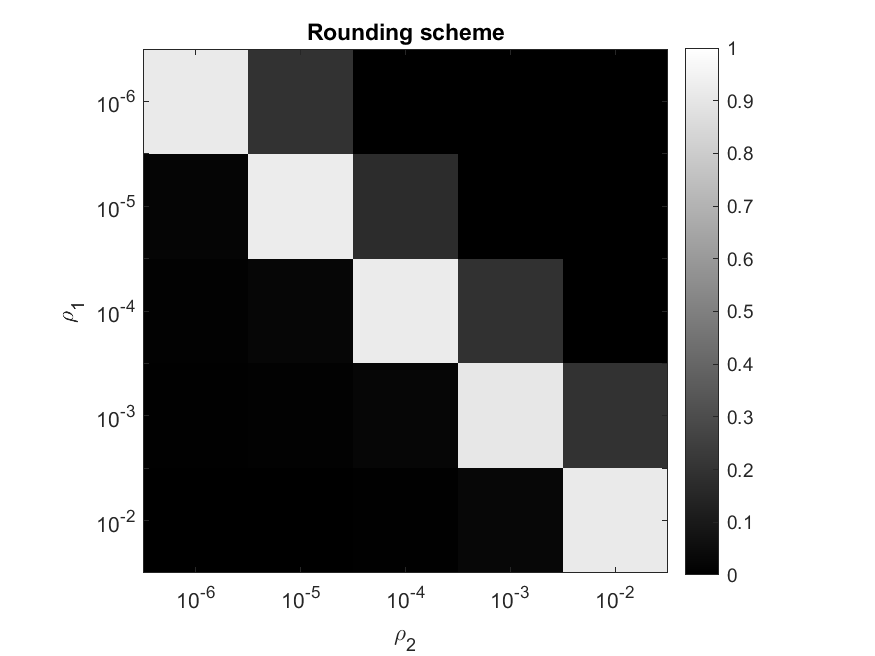}
		\caption{Influence of $\rho_1, \rho_2$ on the na\"ive scheme (left) and the rounding scheme (right) in terms of success rates when SNR $ = 16$ dB and $\norm{\vect x_{\text{true}}}_0 = 5$ for  $\vect{x}_{\text{true}}$  of length $50$.}
		\label{fig: para_signal}
	\end{figure}

	\subsection{Fourier phase retrieval}
	
	We then compare the performance of both schemes in terms of success rates. 
	We consider the ground-truth signal $\vect{x}_{\text{true}}$ is a binary vector  with different combinations of sparsity and noise levels in the Fourier measurements. 
		In particular, we examine ten sparsity levels ($1, 2, \dots, 10$) and generate the noisy measurements $\vect{\tilde{b}}$ by adding Gaussian noise with SNR $=(36 , 32, \dots, 0)$ dB.
		We plot the success rates of recovering signals of length 50 and 100 based on 1000 random realizations in Figure \ref{fig: 50_signal} and ~\ref{fig: 100_signal}, respectively.  Compared to the na\"ive scheme, the rounding scheme works much better when the signal is sparse, which is expected by Proposition~\ref{cor: denoising related to sparsity} that sparser signals allow for larger tolerance of the noise. According to Corollary~\ref{cor: SNR}, the exact recovery bound of SNR is calculated as $18+30\log_{10}\norm{\vect x}_0$, which aligns well with Figures~\ref{fig: 50_signal}-\ref{fig: 100_signal}. Figure~\ref{fig: false_reconstruction} gives some examples on false reconstructions, which implies that one scheme does not dominate the other, as there exist examples when  the na\"ive scheme succeeds and the rounding one fails, and vice versa. The conclusion that the rounding scheme is better is based on the success rates.  %{\color{red}[Since we only focus on 100-length signals, can you redo Fig. 3 by using 100 signals? Also use ``bar'' in Matlab to plot the results. For the right signal in Fig 3, it is clearly a shift between top and bottom; and maybe they are the same after alignment? Can you also include the ground-truth in Fig. 3?]} {\color{purple} [Actually, we have also an experiment on $50$-length signals in figure 1. The ground-truth signals are both on the top. They are not shifts of each other.]}

	\begin{figure}[t]  
		\includegraphics[width=.475\textwidth]{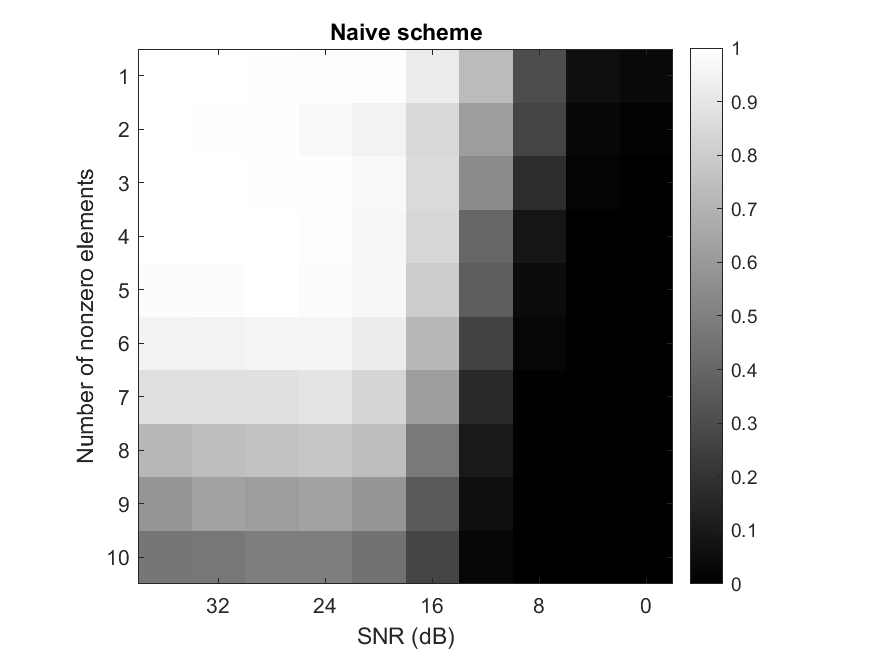}
		\includegraphics[width=.475\textwidth]{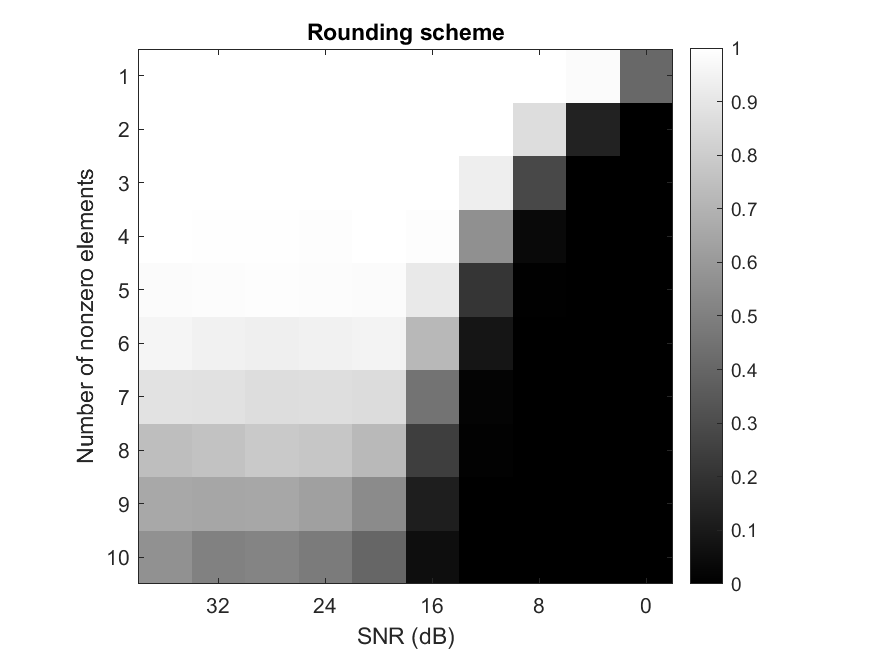}
		\caption{Comparison of the na\"ive scheme (left) and rounding scheme (right) in terms of success rates of Fourier phase retrieval for a signal of length  50. The value at each combination of sparsity and SNR is based on 1000 random realizations.}
		\label{fig: 50_signal}
	\end{figure}

	\begin{figure}[t]  
		\includegraphics[width=.475\textwidth]{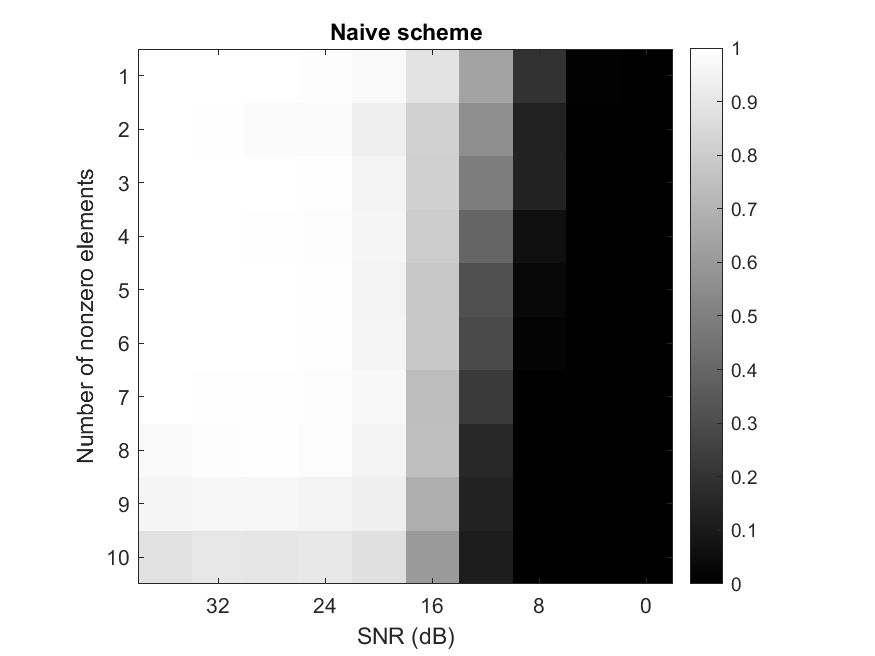}
		\includegraphics[width=.475\textwidth]{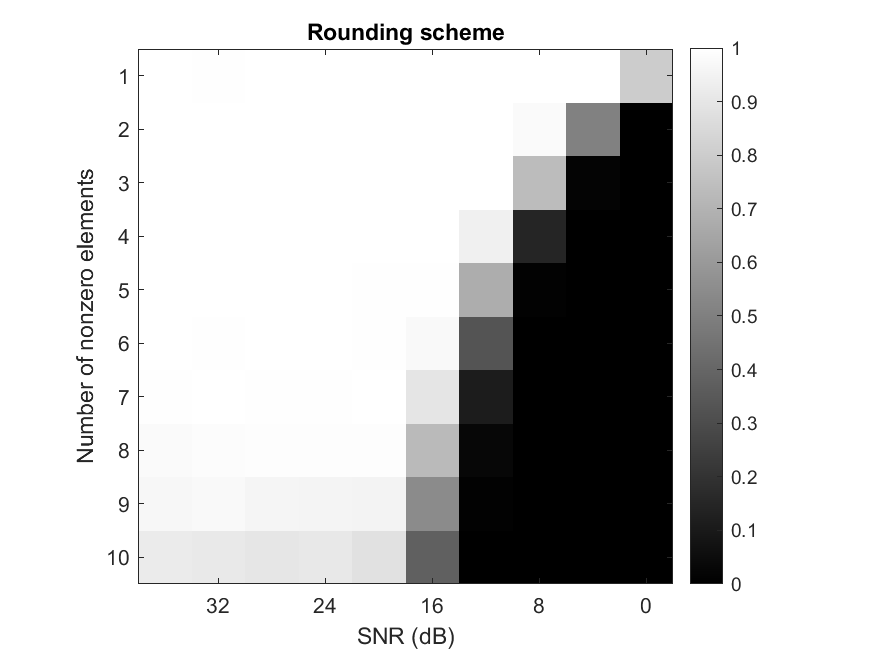}
		\caption{ Comparison of the na\"ive scheme (left) and rounding scheme (right) in terms of success rates of Fourier phase retrieval for a signal of length  100.} 
		\label{fig: 100_signal}
	\end{figure}

	\begin{figure}[t]  
		\includegraphics[width=.475\textwidth]{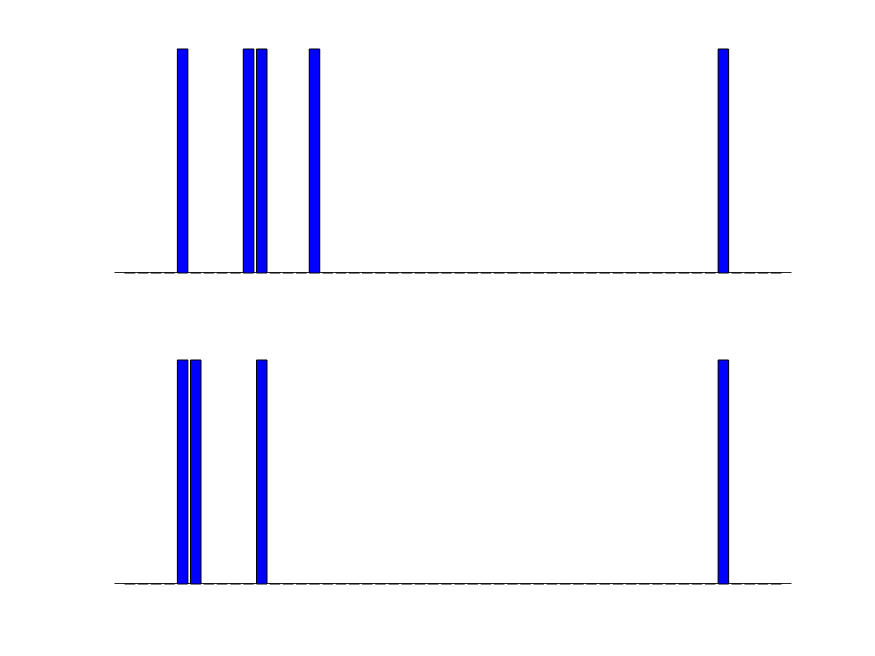}
		\includegraphics[width=.475\textwidth]{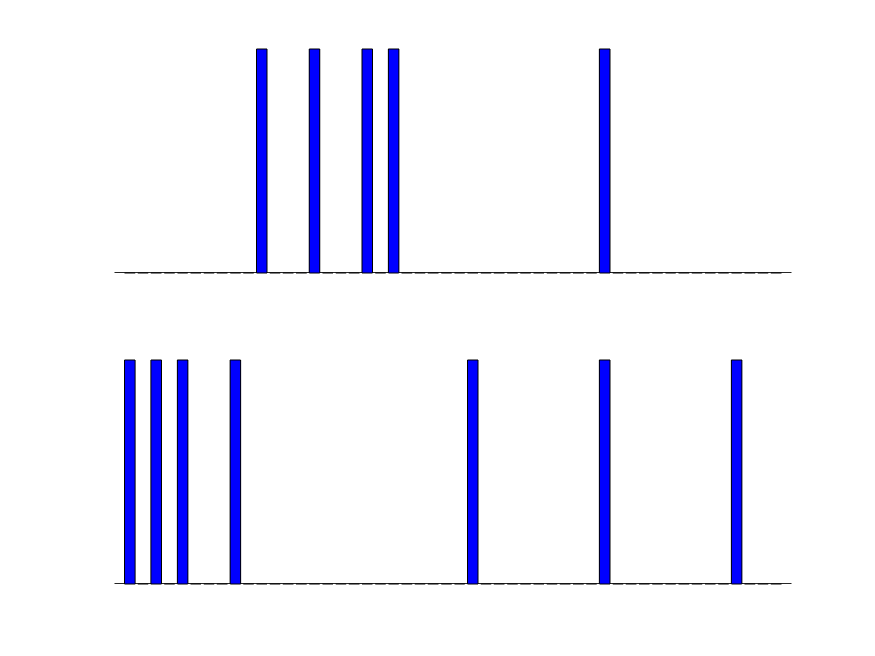}
		\caption{Failed reconstructions by the  na\"ive scheme (left) and the rounding scheme (right), when the other scheme succeeds. The ground-truth signals are plotted on the top, while the reconstructed ones are on the bottom. }
		%The left-top shows the true signal and the signal recovered by our denoising scheme. The left-down is the signal recovered by na\"ive scheme. The right-top shows the true signal and the signal recovered by na\"ive scheme. The right-down is the signal recovered by denoising scheme. Both examples of true signal are of size $50$ 
		\label{fig: false_reconstruction}
	\end{figure}

		\subsection{Extension to oversampling Fourier Transform}

		\begin{figure}[t]  
			\includegraphics[width=.475\textwidth]{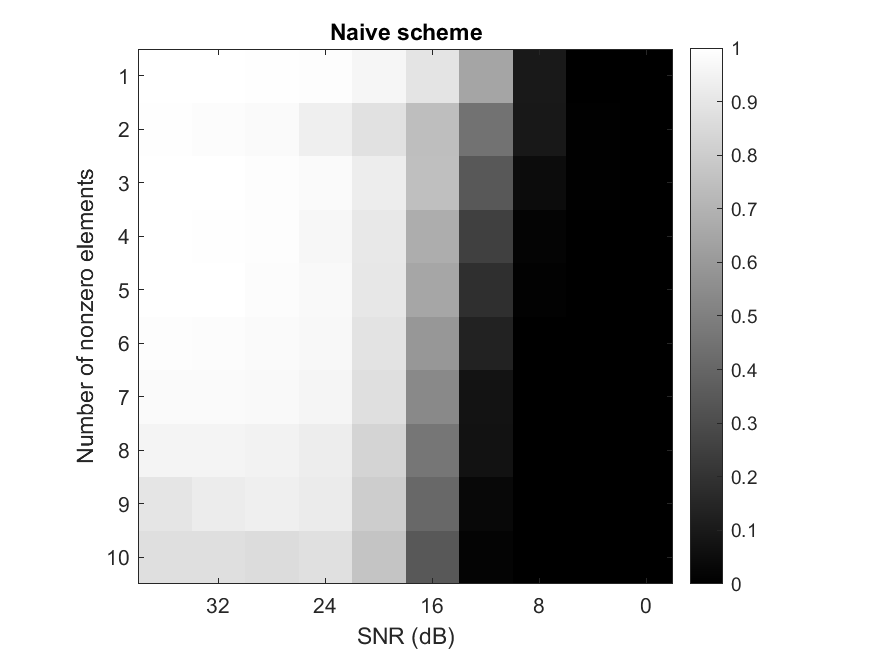}
			\includegraphics[width=.475\textwidth]{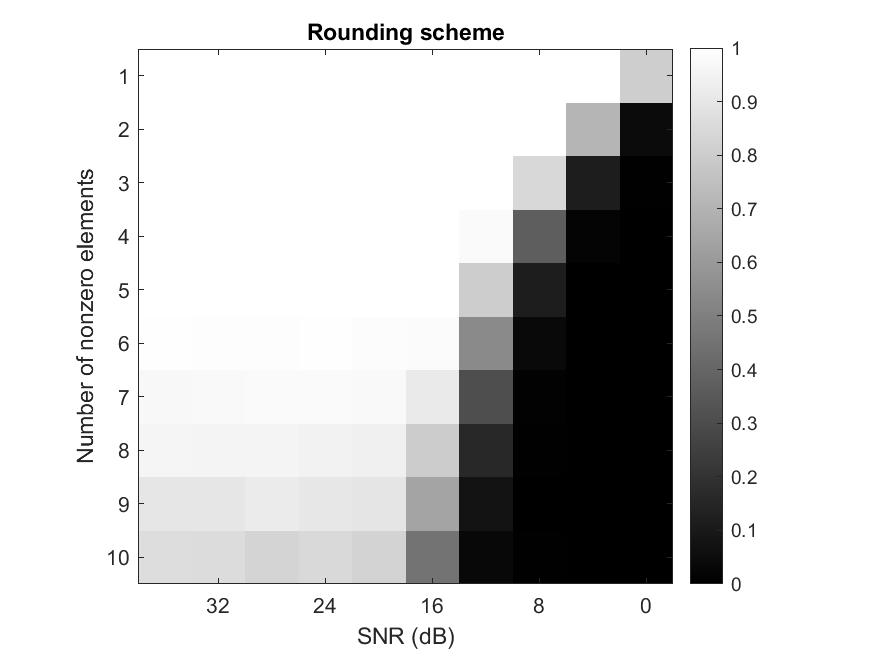}
			\caption{Comparison of the na\"ive scheme (left) and rounding scheme (right) in the case of oversampled Fourier phase retrieval.}
			\label{fig: over_signal}
		\end{figure}

		One may extend our method to oversampling schemes to find the periodic autocorrelation or the regular autocorrelation. 
		We conduct  numerical simulations for this case, while leaving the theoretical analysis for the future investigation.
		When the number of measurements do not match the number of coefficients in autocorrelation, one can perform a polynomial regression and round off to the nearest integer to find the autocorrelation, following equation \ref{eq: polynomial regression}. 
		The extension for the rounding scheme is summarized as follows, similar for the na\"ive scheme.
		
		\begin{enumerate}
			\item  Use polynomial regression to estimate the degree $2N-1$ polynomial $A(z)$ by $e^{\frac{2 \pi i k(N-1)}{M}} A(e^{-\frac{2 \pi i k}{M}}) = \tilde{ b}_k^2$.
			\item Round off each coefficient of $A(z)$ to the nearest integer to get the polynomial $B(z)$ with Aut$( \vect x)$ as its coefficient.
			\item Calculate $b_k = \sqrt{e^{\frac{2 \pi i k(N-1)}{M}} B(e^{-\frac{2 \pi i k}{M}})}$
			\item Solve the minimization problem:
			\begin{align}
			\vect{x^*}=\underset{\vect{x}}{\text{argmin}}   \norm{\abs{\mathcal{F}_M\vect{x}}-\vect{b}}^2_2
			\quad \text{s.t.}  \quad \vect{x} \in [0,1]^N.
			\end{align}
			
			\item Round off each entry of $\vect{x}^*$ to be either 0 or 1
		\end{enumerate}
		
		%The na\"ive scheme follows similarly, thus omitted.
		% to the previous one: given a corrupted measurement $\vect{\tilde{b}} \in \mathbb{C}^M$, 
		%\begin{enumerate}
		%    \item Solve the minimization problem 
		% \begin{align}
		%     \vect{x^*}=\underset{\vect{x}}{\text{argmin}}   \norm{\abs{\mathcal{F}_M\vect{x}}-\vect{\tilde b }}^2_2
		%       \quad \text{s.t.}  \quad \vect{x} \in [0,1]^N.
		%    \end{align}
		%    \item Round off each entry of $\vect{x}^*$ to be either 0 or 1
		%\end{enumerate}

		Again, we compare the performance of the na\"ive scheme and the rounding scheme in terms of success rates.
		We consider the ground-truth signal $\vect{x}_{\text{true}}$ is a binary vector of length 50 with different combinations of sparsity and noise levels. 
		We take $99$ oversampled Fourier magnitude measurements for each signal, i.e. $N=50, M=99$ in \eqref{eq:oversampledFFT}.
		We consider ten sparsity levels ($1, 2, \dots, 10$) and generate the noisy measurements $\vect{\tilde{b}}$ by adding Gaussian noise with SNR $=(36 , 32, \dots, 0)$ dB.
		In Figure~\ref{fig: over_signal}, we plot the success rates based on 1000 random realizations, which  
		shows the rounding scheme outperforms  the na\"ive one.

	\section{Conclusions} \label{sect:conclusion}
	In this paper, we improved upon an autocorrelation-based characterization of Fourier phase retrieval. 
	We discuss several choices of regularization terms and measurements. 
	Our analysis suggested that a gradient-based regularization, i.e. $\norm{\nabla^n \vect{x}}_2$, is redundant to the magnitude measurements, thus not  helpful to phase retrieval. 
	Furthermore, we proved that  binary signals can be recovered by imposing a box constraint. We also presented ambiguities and uniqueness  for binary phase retrieval. 
	Finally, we proposed a denoising scheme suggested by characterization theorems. 
	Since the proposed denoising scheme involves rounding, it is interesting to extend to  2D images, in which the measured data are often integer-valued. This will be our future work.  Another future direction involves theoretical analysis of oversampling schemes and noisy measurements for phase retrieval.
	
	\section*{Acknowledgements}
	WHW and TZ are sponsored in part by the National Natural Science Foundation of China under Grant 11671002, CUHK start-up and CUHK DAG 4053296, 4053342.
	YL is partially supported by NSF CAREER 1846690. SM was funded by the Center for Advanced Mathematics for Energy Research Applications, US Department of Energy, contract number DOE-DE-AC03-76SF00098.
	
	\appendix
	\section{Proof of Theorems \ref{thm: characterization} and \ref{thm: characterization Oversampling}}
	\renewcommand{\thethm}{A.\arabic{thm}}%
	
	To prove Theorem \ref{thm: characterization}, we introduce Lemma~\ref{lemma: compute autocorrelation} and \ref{lemma: spherical constraint}. Note that Lemma~\ref{lemma: compute autocorrelation} is a periodic version of a similar result in \cite[P. 215]{Beinert16} and Lemma~\ref{lemma: spherical constraint} is Parseval's Theorem.
	\begin{lem} \label{lemma: compute autocorrelation}
		$\mathcal{F}(\text{Aut}_p(\vect{x})) = \abs{\mathcal{F}\vect{x}} \odot \abs{\mathcal{F}\vect{x}}, \ \forall \vect{x} \in \mathbb{C}^N$.
	\end{lem}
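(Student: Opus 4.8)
The plan is to reduce the identity to the cyclic convolution theorem for the (unnormalized) DFT $\mathcal F = \mathcal F_{N\to N}$. First I would rewrite the periodic autocorrelation as a periodic convolution: introducing the reflected–conjugated signal $\vect{x}^-$ defined by $(x^-)_k := \overline{x_{(-k)\bmod N}}$, the change of summation index $l = (k-j)\bmod N$ in \eqref{eq:Aut_p} shows that $\text{Aut}_p(\vect{x}) = \vect{x} * \vect{x}^-$, where $*$ is the discrete periodic convolution from the Preliminaries.

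Then I would invoke two elementary facts about $\mathcal F$: (i) the convolution theorem $\mathcal F(\vect{a} * \vect{b}) = (\mathcal F \vect{a}) \odot (\mathcal F \vect{b})$, which holds verbatim for the convolution as defined above because no normalization constant appears in that definition; and (ii) $\mathcal F(\vect{x}^-) = \overline{\mathcal F \vect{x}}$, which follows by reindexing $m = (-k)\bmod N$ inside the defining sum and using $|\omega| = 1$ together with $\omega^N = 1$. Combining (i) and (ii) gives $\mathcal F(\text{Aut}_p(\vect{x})) = (\mathcal F \vect{x})\odot \overline{\mathcal F \vect{x}} = \abs{\mathcal F \vect{x}} \odot \abs{\mathcal F \vect{x}}$.

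Alternatively — and this is the version I would most likely write out in full, since it is self-contained and does not cite the convolution theorem — I would compute the $n$-th entry of $\mathcal F(\text{Aut}_p(\vect{x}))$ directly. Expanding the definitions produces a double sum $\sum_{j=0}^{N-1}\sum_{k=0}^{N-1} x_{(k+j)\bmod N}\,\overline{x_k}\,\omega^{nj}$; substituting $m = (k+j)\bmod N$ and writing $\omega^{nj} = \omega^{n(m-k)}$ (legitimate precisely because $\omega^N = 1$, so the exponent only matters modulo $N$) lets the double sum factor as $\bigl(\sum_m x_m\,\omega^{nm}\bigr)\bigl(\sum_k \overline{x_k}\,\omega^{-nk}\bigr) = (\mathcal F\vect{x})_n\,\overline{(\mathcal F\vect{x})_n} = \abs{(\mathcal F\vect{x})_n}^2$, which is the claimed identity entrywise.

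The only step requiring genuine care — and the nearest thing to an obstacle — is the bookkeeping with indices modulo $N$: one must verify that the substitutions $l=(k-j)\bmod N$ and $m=(k+j)\bmod N$ are bijections of $\{0,\dots,N-1\}$ for each fixed $j$ (resp.\ $k$), and that replacing $j$ by $m-k$ in the exponent of $\omega$ remains valid even when $m-k$ is negative or $\geq N$, which is exactly where $\omega^N = 1$ is used. Everything else is a routine rearrangement of finite sums.
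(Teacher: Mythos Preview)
Your proposal is correct, and the direct computation you say you would ``most likely write out in full'' is exactly the paper's proof: expand $(\mathcal F(\text{Aut}_p(\vect x)))_j$ as a double sum, shift the inner index modulo $N$ (using $\omega^N=1$ so that the exponent only matters mod $N$), and factor into $(\mathcal F\vect x)_j\,\overline{(\mathcal F\vect x)_j}$. Your alternative route via $\text{Aut}_p(\vect x)=\vect x * \vect x^-$ together with the convolution theorem and $\mathcal F(\vect x^-)=\overline{\mathcal F\vect x}$ is a clean repackaging of the same calculation, but the paper does not invoke the convolution theorem explicitly.
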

	
	\begin{proof} It is straightforward that for all $j = 0, 1, \dots, N-1$, we have
		\begin{align*}
		&(\mathcal{F} (\text{Aut}_p(\vect{x})))_j
		= \sum^{N-1}_{m=0} \sum^{N-1}_{n=0} x_{n+m} \overline{x_n} \omega^{mj} \\
		=& \sum^{N-1}_{m=0} \sum^{N-1}_{n=0} x_m \overline{x_n} \omega^{(m-n)j} 
		= (\sum^{N-1}_{m=0} x_m \omega^{mj}) (\overline{\sum^{N-1}_{n=0}  x_{n} \omega^{nj}}) \\
		=& (\mathcal{F}\vect{x})_j \overline{(\mathcal{F}\vect{x})}_j 
		= \abs{(\mathcal{F}\vect{x})_j}^2.
		\end{align*}
	\end{proof}
	
	% Since taking absolute value entrywise does not affect the norm and $\sqrt{N}^{-1} \mathcal F$ is unitary, we have Lemma~\ref{lemma: spherical constraint}.
	\begin{lem}[Application of Parseval's Theorem] \label{lemma: spherical constraint}
		Given $\vect{x}, \vect{y} \in \mathbb{C}^N$, if $\abs{\mathcal{F} \vect{x}} = \abs{\mathcal{F} \vect{y}}$, then $\norm{\vect{x}}_2= \norm{\vect{y}}_2$.
	\end{lem}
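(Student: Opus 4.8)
The plan is to invoke Parseval's (Plancherel's) theorem directly. Recall from the notation section that $\frac{1}{\sqrt N}\mathcal{F} = \frac{1}{\sqrt N}\mathcal{F}_{N\to N}$ is unitary, so for any $\vect u \in \mathbb{C}^N$ we have the identity $\norm{\mathcal{F}\vect u}_2^2 = N\norm{\vect u}_2^2$. This is the only ingredient needed.

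First I would write $\norm{\mathcal{F}\vect x}_2^2 = \sum_{j=0}^{N-1} \abs{(\mathcal{F}\vect x)_j}^2$ and similarly for $\vect y$. The hypothesis $\abs{\mathcal{F}\vect x} = \abs{\mathcal{F}\vect y}$ is an entrywise equality of nonnegative real vectors, hence $\abs{(\mathcal{F}\vect x)_j} = \abs{(\mathcal{F}\vect y)_j}$ for every $j$, and therefore $\sum_j \abs{(\mathcal{F}\vect x)_j}^2 = \sum_j \abs{(\mathcal{F}\vect y)_j}^2$, i.e. $\norm{\mathcal{F}\vect x}_2 = \norm{\mathcal{F}\vect y}_2$. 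Applying the unitarity identity to both sides gives $N\norm{\vect x}_2^2 = N\norm{\vect y}_2^2$, and dividing by $N$ and taking nonnegative square roots yields $\norm{\vect x}_2 = \norm{\vect y}_2$.

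There is essentially no obstacle here: the statement is a one-line consequence of Parseval's theorem together with the trivial observation that equal Fourier magnitudes force equal squared Fourier magnitudes. The only thing worth stating carefully is the normalization constant $N$ coming from the convention that $\mathcal{F}$ (not $\frac{1}{\sqrt N}\mathcal{F}$) is the transform used throughout the paper, but since it appears on both sides it cancels. Alternatively, one could derive the same conclusion from Lemma~\ref{lemma: compute autocorrelation}: $(\text{Aut}_p(\vect x))_0 = \sum_k \abs{x_k}^2 = \norm{\vect x}_2^2$ equals the zeroth inverse-DFT coefficient of $\abs{\mathcal{F}\vect x}\odot\abs{\mathcal{F}\vect x}$, which depends only on $\abs{\mathcal{F}\vect x}$; but the Parseval argument is the cleanest.
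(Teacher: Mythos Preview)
Your proposal is correct and follows essentially the same route as the paper: both arguments invoke the unitarity of $\frac{1}{\sqrt N}\mathcal{F}$ (Parseval's theorem) together with the trivial step $\abs{\mathcal{F}\vect x}=\abs{\mathcal{F}\vect y}\Rightarrow \norm{\mathcal{F}\vect x}_2=\norm{\mathcal{F}\vect y}_2$. The paper simply compresses this into a single chain of equalities, and your remark about the normalization constant cancelling is exactly how the paper handles it.
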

	\begin{proof}
		Since $\dfrac{1}{\sqrt{N}}\mathcal{F}$ is unitary, we have $$ \norm{\vect{y}}_2 =  \norm{\dfrac{1}{\sqrt{N}}\mathcal{F}\vect{y}}_2 = \dfrac{1}{\sqrt{N}}\norm{\abs{\mathcal{F}\vect{y}}}_2 = \dfrac{1}{\sqrt{N}}\norm{\abs{\mathcal{F}\vect{x}}}_2 = \norm{\vect{x}}_2.$$
	\end{proof}

	\begin{proof}[Proof of Theorem \ref{thm: characterization}]
		
		(1) $\Rightarrow$ (2). Suppose $\abs{\mathcal{F}\vect{x}} = \abs{\mathcal{F}\vect{y}}$,
		by Lemma \ref{lemma: compute autocorrelation}, $\mathcal{F}(\text{Aut}_p(\vect{x})) = \abs{\mathcal{F}\vect{x}} \odot \abs{\mathcal{F}\vect{x}}$. Hence, $\text{Aut}_p(\vect{x}) = \mathcal{F}^{-1} (\abs{\mathcal{F}\vect{x}} \odot \abs{\mathcal{F}\vect{x}}) = \mathcal{F}^{-1} (\abs{\mathcal{F}\vect{y}} \odot \abs{\mathcal{F}\vect{y}}) = \text{Aut}_p(\vect{y})$.

		(2) $\Rightarrow$ (1). Suppose $\text{Aut}_p(\vect{x})$ = $\text{Aut}_p(\vect{y})$. By Lemma \ref{lemma: compute autocorrelation}, we have $\abs{\mathcal{F}\vect{x}} = \sqrt{\mathcal F (\text{Aut}_p(\vect{x}))} = \sqrt{\mathcal F (\text{Aut}_p(\vect{y}))} = \abs{\mathcal{F}\vect{y}}$, where the square root is taken entrywisely.
		
		(1) $\Rightarrow$ (3). By the Convolution Theorem, we have $\forall \vect{v} \in \mathbb{C}^N$ and $j = 0, 1 , \dots, N-1$, $$(\mathcal{F}(\vect{v} * \vect{x}))_j = (\mathcal{F}\vect{v})_j \times (\mathcal{F}\vect{x})_j,$$ thus leading to, $$\abs{(\mathcal{F}(\vect{v} * \vect{x}))_j} = \abs{(\mathcal{F}\vect{v})_j} \abs{(\mathcal{F}\vect{x})_j}.$$ Similar result holds for $\mathcal{F}(\vect{v}*\vect{y})$. Since  $\abs{\mathcal{F}\vect{x}} = \abs{\mathcal{F}\vect{y}}$ (by assumption), we have 
		$\abs{\mathcal{F}(\vect{v} * \vect{x})} = \abs{\mathcal{F}(\vect{v} * \vect{y})}$, which implies that $\norm{\vect{v} * \vect{x}}_2 = \norm{\vect{v} * \vect{y}}_2$ by Lemma~\ref{lemma: spherical constraint}. 
		
		(3) $\Rightarrow$ (1). Suppose $\norm{\vect{v} * \vect{x}}_2 = \norm{\vect{v} * \vect{y}}_2$ for all $\vect{v} \in \mathbb{C}^N$. Since $\mathcal{F}$ is invertible, we can choose $\vect{v}_k = \mathcal F^{-1} \vect e_k\in \mathbb{C}^N$. Then we have
		\begin{align*}
		&\norm{\vect{v}_k * \vect{x}}_2^2 = \norm{\dfrac{1}{\sqrt{N}}\mathcal{F}(\vect{v}_k*\vect{x})}_2^2\\
		= &\dfrac{1}{N} \sum^{N-1}_{j=0} \abs{(\mathcal{F}(\vect{v}_k * \vect{x}))_j}^2 
		= \dfrac{1}{N} \sum^{N-1}_{j=0} \abs{(\mathcal{F}\vect{v}_k)_j}^2 \abs{(\mathcal{F}\vect{x})_j}^2 \\
		=& \dfrac{1}{N} \sum^{N-1}_{j=0} \abs{(\vect{e}_k)_j}^2 \abs{(\mathcal{F}\vect{x})_j}^2 
		= \dfrac{1}{N}\abs{(\mathcal{F}\vect{x})_k}^2.
		\end{align*}
		Similarly, we have $\norm{\vect{v}_k * \vect{x}}_2^2 = \norm{\vect{v}_k * \vect{y}}_2^2 =  \dfrac{1}{N}\abs{(\mathcal{F}\vect{y})_k}^2$ and hence $\abs{\mathcal{F}\vect{x}} = \abs{\mathcal{F}\vect{y}}$. 
	\end{proof}
	
	\begin{proof} [Proof of Theorem~\ref{thm: characterization Oversampling}]
		(1) $\Rightarrow$ (2) Define 
		\begin{equation} \label{eq: A_x}
		A_{\vect x}(z) = z^{N-1}\sum^{N-1}_{n=-(N-1)} (\text{Aut}(\vect{x}))_n z^n,
		\end{equation} and similarly for $A_{\vect y}(z)$. Note that 
		\begin{align} \label{eq: polynomial regression}
		e^{\frac{2 \pi i k(N-1)}{M}} A_{\vect x}(e^{-\frac{2 \pi i k}{M}}) 
		= \abs{(\mathcal{F}_{M}\vect{x})_k}^2 
		= \abs{(\mathcal{F}_{M}\vect{y})_k}^2 
		= e^{\frac{2 \pi i k(N-1)}{M}} A_{\vect y}(e^{-\frac{2 \pi i k}{M}}),
		\end{align}
		for all $k = 0, 1, \dots, M-1$. Since $A_{\vect x}$ and $A_{\vect y}$ are polynomials of degree at most $2N-1$, their coefficients are determined by $\abs{\mathcal{F}_M\vect{x}} = \abs{\mathcal{F}_M\vect{y}}$, which is a system of $M$ linear equations with $M \geq 2N-1$. Thus, $\text{Aut}(\vect{x}) = \text{Aut}(\vect{y})$.
		
		(2) $\Rightarrow$ (1). Suppose $\text{Aut}(\vect{x}) = \text{Aut}(\vect{y})$. Then $A_{\vect x}(z)$ = $A_{\vect y}(z)$. Since $M \geq 2N-1$, we have
		\begin{align} 
		\abs{(\mathcal{F}_{M}\vect{x})_k}^2 
		= e^{\frac{2 \pi i k(N-1)}{M}} A_{\vect x}(e^{-\frac{2 \pi i k}{M}})
		= e^{\frac{2 \pi i k(N-1)}{M}} A_{\vect y}(e^{-\frac{2 \pi i k}{M}}) 
		= \abs{(\mathcal{F}_M\vect{y})_k}^2, \label{eq:Aut determine DFT}
		\end{align}
		for all $k = 0, 1, \dots, M-1$.
		
		It remains to prove that (1) implies $\text{Aut}_p(\vect{x}) = \text{Aut}_p(\vect{y})$ and $\norm{\vect{v} * \vect{x}}_2 = \norm{\vect{v} * \vect{y}}_2$ $\forall \vect{v} \in \mathbb{C}^N$. This directly follows from (1) $\Rightarrow$ (2) and (1) $\Rightarrow$ (3) in Theorem~\ref{thm: characterization} by considering $M = N$ in equation~(\ref{eq:Aut determine DFT}).
	\end{proof}
	
	\section{Proof of Theorem~\ref{thm: ab box to binary} to Corollary~\ref{cor: -11 box to binary}}
	\renewcommand{\thethm}{B.\arabic{thm}}%
	\renewcommand{\thefigure}{B.1}%

	We given a geometry interpretation to facilitate the proof of Theorem~\ref{thm: ab box to binary}. For $\alpha = 0$ and $\beta = 1,$ we have $\vect{y}\in[0, 1]^N$.
	Lemma \ref{lemma: spherical constraint} implies that $\vect{y}$ must lie on a sphere, while $\sum x_i = \sum y_i$ implies that $\vect{y}$ must lie on a plane.  Therefore, the solution $\vect{y}$ must be on the intersection of these three sets, as illustrated in Figure~\ref{fig: 01 binary to box}.
	
	\begin{figure} 
		\begin{center} 
			\includegraphics[width=.75\textwidth]{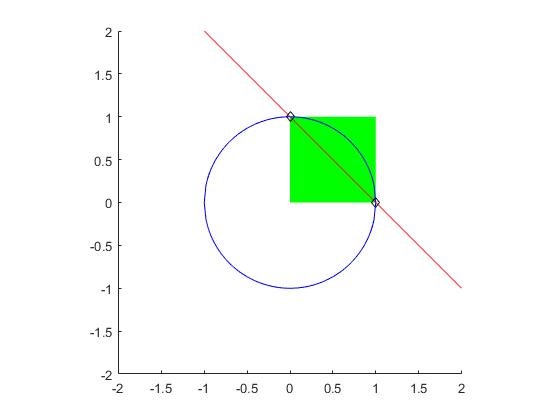}
			\caption{Illustration of  Theorem~\ref{thm: ab box to binary}  with $\alpha = 0$ and $\beta = 1$ when $\vect{x} = (1,0)$. The plane constraint is $y_1+y_2 = 1$, indicated by the red line. The sphere constraint is $y_1^2+y_2^2 = 1$, indicated by the blue circle. The box constraint is $0 \leq y_1, y_2 \leq 1$, indicated by the green square. The solutions to the  problem \eqref{problem: 01 box phase retrival} lie on the intersection of these $3$ sets, indicated by the black points. Both black points are ambiguous due to Proposition~\ref{cor: flip same magnitude after Fourier Transform}.}
			\label{fig: 01 binary to box}
		\end{center}
	\end{figure}

	\begin{proof} [Proof of Theorem~\ref{thm: ab box to binary}] 
		Rewrite $\abs{(\mathcal F \vect x)_0} = \abs{(\mathcal F \vect y)_0}$, we know that $\vect y$ lies on the plane $P: \sum x_i = \sum y_i$, which is convex. The box constraint $\vect y \in [\alpha, \beta]^N$ is also convex. Therefore, we define $\mathcal{C}:= P \cap [\alpha, \beta]^N$, which is a convex compact set. By Krein-Milman Theorem \cite[Theorem 3.23]{Rudin91}, $C$ is the closure of the convex hull of its extreme points. 
		
		We claim that the set of extreme points $\mathcal{E} = \{\vect{z}_i\}_{i \in \mathcal{I}}$ is a subset of points in $\{\alpha, \beta\}^N$ with the same number of $\alpha$'s and $\beta$'s as $\vect x$. Given $\vect w$ be an extreme point of $\mathcal{C}$, assume that $\vect w$ does not belong to $\{\alpha, \beta\}^N$. Since $\sum w_i = \sum x_i$ and $\vect x \in \{\alpha, \beta\}^N$, there exists some $i <  j$ such that $w_i, w_j \neq \alpha$ and $\beta$ (otherwise, we will have $\vect w \in \{\alpha, \beta\}^N$). Choose small $\epsilon > 0$ such that $w_i, w_j > \alpha + \epsilon$ and $w_i, w_j <\beta -\epsilon$. Let $\vect{w}_1 = (w_0, w_1, \dots, w_i + \epsilon, \dots, w_j -\epsilon, \dots, w_{N-1})^T$ and $\vect{w}_2 = (w_0, w_1, \dots, w_i - \epsilon, \dots, w_j +\epsilon, \dots, w_{N-1})^T$. Then $\vect{w}_1, \vect{w}_2 \in \mathcal{C}$ and $\vect w = \frac{1}{2}(\vect{w}_1 + \vect{w}_2)$, contradicting the fact that $\vect w$ is an extreme point of $\mathcal{C}$. Hence, we have $\vect w \in \{\alpha, \beta\}^N$. It follows from $\sum w_i = \sum x_i$  that $\vect w$ has the same number of $\alpha$'s and $\beta$'s as $\vect x$. Since $\mathcal{E}$ is a finite set, the convex hull of $\mathcal{E}$ is compact and thus equal to $\mathcal{C}$.

		Since $\vect y \in \mathcal C$, we write $\vect y = \sum \lambda_i \vect{z}_i$ for some $0 \leq \lambda_i \leq 1, \sum \lambda_i = 1$. Since $\vect{z}_i$ has the same number of $\alpha$'s and $\beta$'s as $\vect x$, then $f(\vect x) = f(\vect{z}_i)$ for all $i \in \mathcal{I}$, where $f(\vect w) := \norm{\vect w}_2^2$, which is a strictly convex function. By Lemma~\ref{lemma: spherical constraint}, we have $f(\vect y) = f(\vect x)$. If $\vect y$ does not belong to $\mathcal{E}$, then we have  
		$$f(\vect y) < \sum \lambda_i f(\vect{z}_i) = \sum \lambda_i f(\vect x) = f(\vect x) \sum \lambda_i  = f(\vect x),$$
		which is a contradiction. So  $\vect y \in \mathcal{E}$, i.e. $\vect y \in \{\alpha, \beta\}^N$ and has the same number of $\alpha$'s and $\beta$'s as $\vect x$.
	\end{proof}

	% \begin{proof}[Another proof of Theorem \ref{thm: ab box to binary} when $\alpha = 0$ and $\beta = 1$:]
	% Write $\vect{x} = (x_0, x_1, \dots, x_{N-1})^T$ and $\vect{y} = (y_0, y_1, \dots, y_{N-1})^T$. By Lemma \ref{lemma: spherical constraint}, we have 
	% $$\sum_{n = 0}^{N-1} y_n^2 =  \norm{y}_2^2 = \norm{x}_2^2 = \sum_{n = 0}^{N-1} x_n^2 =\sum_{n = 0}^{N-1} x_n = \norm{\vect{x}}_0,$$
	% where the second to last and last equality come from the fact that $\vect{x}$ is a binary signal, $x_i$ is either $0$ or $1$. 

	% Since $\abs[Big]{\sum_{n = 0}^{N-1} x_n} = \abs{(\mathcal{F} \vect{x})_0} = \abs{(\mathcal{F} \vect{y})_0} = \abs[Big]{\sum_{n = 0}^{N-1} y_n}$ and $x_i, y_i \geq 0 \quad \forall i$, we have
	% \begin{equation}
	%　    \sum_{n = 0}^{N-1} y_n = \sum_{n = 0}^{N-1} x_n = \norm{\vect{x}}_0. \label{eq: plane constraint}
	% \end{equation}
	
	% By assumption, we have $y_i \in [0,1]$ and hence $y_i^2 \leq y_i$ for all $i = 0, 1, \dots, N-1$ with the equality holds if and only if $y_i = 0$ or $1$. Since
	% \begin{eqnarray*}
	% \norm{\vect{x}}_0  =  \sum_{n = 0}^{N-1} y_n^2 
	% \leq \sum_{n = 0}^{N-1} y_n
	% = &\norm{\vect{x}}_0,
	% \end{eqnarray*}
	% we have $\sum_{n = 0}^{N-1} y_n^2 = \sum_{n = 0}^{N-1} y_n,$ which forces $y_i^2 = y_i,$ i.e. $y_i = 0$ or $1$. Therefore, we have $\vect{y} \in \{0,1\}^N$ and  $$\norm{\vect{y}}_0 = \sum_{n = 0}^{N-1} y_n = \sum_{n = 0}^{N-1} x_n = \norm{\vect{x}}_0,$$ which proves the last asserition in Theorem~\ref{thm: ab box to binary}
	% \end{proof}
	
	The proof of Theorem~\ref{thm: extreme point to convex hull} is based on the convexity to show that every $\vect y$ lies in the convex hull with some entries $y_i$ having smaller value than $\alpha$.
	
	\begin{proof} [Proof of Theorem~\ref{thm: extreme point to convex hull}]
		Since $\vect x \in \mathcal E ^N$, we have $\norm{\vect x}^2_2 = Nc^2.$ By Lemma \ref{lemma: spherical constraint}, we have $\norm{\vect y}^2_2 = \norm{\vect x}^2_2 = Nc^2$.
		
		Note that for all $i = 1, 2, \dots, N, y_i = \sum \lambda_{ik} z_{ik}$ for some $\sum \lambda_{ik} = 1, 0 \leq \lambda_{ik} \leq 1, z_{ik} \in \mathcal E$ and 
		$$ \abs{y_i} = \abs{\sum \lambda_{ik} z_{ik}} \leq \sum \lambda_{ik} \abs{z_{ik}} = \sum \lambda_{ik} c = c.$$
		
		If there exists some $y_i$ such that $y_i \in \text{conv } \mathcal E \setminus \mathcal E, $then $$ \abs{y_i} = \abs{\sum \lambda_{ik} z_{ik}} < \sum \lambda_{ik} \abs{z_{ik}} = \sum \lambda_{ik} c = c.$$ Now, $$ \norm{\vect y}^2_2 = \sum \abs{y_i}^2 < \sum c2 = Nc^2,$$ which leads to a contradiction. Thus, we must have $y_i \in \mathcal E$ for all $i = 1, 2, \dots, N$, i.e. $\vect y \in \mathcal E^N$
	\end{proof}
	
	\begin{proof} [Proof of Corollary~\ref{cor: -11 box to binary}]
		The fact that $\vect y \in \{-1, 1\}^N$ follows from Theorem~\ref{thm: extreme point to convex hull} directly. Now, $\abs{(\mathcal F \vect x)_0} = \abs{(\mathcal F \vect y)_0}$ implies that $\sum^{N-1}_{i = 0} x_i = \pm \sum^{N-1}_{i = 0} y_i$. Denote the number of $1$'s in $\vect{x}$ by $n_{\vect x}$, and define $n_{\vect y}$ similarly, then we have $n_{\vect x} - (N - n_{\vect x}) = \pm \big(n_{\vect y} - (N - n_{\vect y} \big)$. We either have $n_{\vect x} = n_{\vect y}$ or $n_{\vect x} = N$. The result now follows.
	\end{proof}

	\section{Proof of Propositions~\ref{prop: more_ambi}--\ref{prop: probability of uniqueness}}
	\renewcommand{\thethm}{C.\arabic{thm}}%

		\begin{proof} [Proof of Proposition~\ref{prop: more_ambi}]
			Suppose $\abs{\mathcal F \vect x} = \abs{\mathcal F (c\mathbbm 1 -\vect x)}$, i.e. $(\mathcal F \vect x)_0 = e^{i\theta}(\mathcal F (c\mathbbm 1 -\vect x))_0$ for some $\theta \in [0,2\pi)$. Thus, $\sum x_i = e^{i\theta}(Nc - \sum x_i)$, $c = \frac{1+e^{-i\theta}}N\sum x_i$.

			On the other hand, suppose  $c = \frac{1+e^{-i\theta}}N\sum x_i$ for some $\theta \in [0,2\pi)$. Since $\mathcal{F}\vect{x} + \mathcal{F}(c\mathbbm{1}-\vect{x}) = c\mathcal{F}\mathbbm{1} = Nc\vect{e}_0$, one has $(\mathcal{F}\vect{x})_j + (\mathcal{F}(c\mathbbm{1}-\vect{x}))_j = 0$ for  $j =1, 2, \dots, N-1$. 
			In particular, we obtain $\abs{(\mathcal{F}\vect{x})_j} = \abs{(\mathcal{F}(\mathbbm{1}-\vect{x}))_j}$ and clearly $\abs{(\mathcal F \vect x)_0} = \abs{(\mathcal F (c\mathbbm 1 -\vect x))_0}$  due to the choice of $c$.
		\end{proof}

	\begin{comment}
	\begin{proof} [Proof of Proposition~\ref{cor: flip same magnitude after Fourier Transform}]
	It is straightforward that $\mathcal{F}\vect{x} + \mathcal{F}(\mathbbm{1}-\vect{x}) = \mathcal{F}\mathbbm{1} = N\vect{e}_0.$ Therefore, we obtain $(\mathcal{F}\vect{x})_j + (\mathcal{F}(\mathbbm{1}-\vect{x}))_j = 0$ for  $j =1, 2, \dots, N-1$, particularly that $\abs{(\mathcal{F}\vect{x})_j} = \abs{(\mathcal{F}(\mathbbm{1}-\vect{x}))_j}$.
	
	In addition, we have $\abs{(\mathcal{F}\vect{x})_0} = \norm{x}_0 = N/2$ and $$\abs{(\mathcal{F}(\mathbbm{1}-\vect{x}))_0} = \norm{(\mathbbm{1}-\vect{x})}_0 = (N-N/2) = N/2 = \abs{(\mathcal{F}\vect{x})_0},$$
	which completes the proof.
	\end{proof}
	\end{comment}
	
	\begin{proof} [Proof of Proposition~\ref{prop: x to 1-x}]
		Similar to Proposition \ref{prop: more_ambi}, we have $$\abs{(\mathcal{F}(\mathbbm{1}-\vect{x}))_j} = \abs{(\mathcal{F}\vect{x})_j} = \abs{(\mathcal{F}\vect{y})_j} = \abs{(\mathcal{F}(\mathbbm{1}-\vect{y}))_j}$$ for $j =1, 2, \dots, N-1$.
		When $j=0$, we get $$(\mathcal{F}(\mathbbm{1}-\vect{x}))_0 = N - (\mathcal{F}(\vect{x}))_0 = N - (\mathcal{F}(\vect{y}))_0 =(\mathcal{F}(\mathbbm{1}-\vect{y}))_0.$$ 
		Therefore, $\abs{\mathcal{F}(\mathbbm{1}-\vect{x})} = \abs{\mathcal{F}(\mathbbm{1}-\vect{y})}$. Similar analysis for the other direction.
	\end{proof}

	\begin{proof} [Proof of Proposition~\ref{prop: 0123 uniqueness}] For a binary signal  $\vect{x}\in\{0, 1\}^N$, $(\text{Aut}_p (\vect{x}))_k$ is the number of pairs of $1$'s with distance $k$. As a result, when $\norm{\vect{x}}_0$ is either too small or too large, the uniqueness can be guaranteed thanks to the combinatorial nature of  $\text{Aut}_p (\vect{x}).$
		
		When $\norm{\vect{x}}_0 = 0$, $\vect{x}$ is the zero vector and hence the recovery is unique.
		
		When $\norm{\vect{x}}_0 = 1$, we get $\vect{x} = \vect{e}_k$ for some $k$, which is related by spatial shifts to each other. Therefore, the recovery is unique up to trivial ambiguities.
		
		When $\norm{\vect{x}}_0 = 2$,  we obtain the $\text{Aut}_p (\vect{x})$ from $\abs{\mathcal{F}\vect{x}}$ by Theorem \ref{thm: characterization}. Without loss of generality, up to spatial shift, we assume $x_0 = 1$. Let $k$ be the smallest positive number such that $(\text{Aut}_p (\vect{x}))_k$ is nonzero. Since $(\text{Aut}_p (\vect{x}))_k$ is equal to the number of pairs of $1$'s with distance $k$ and there are only two $1$'s in $\vect{x}$, i.e. only one pair of $1$'s. This pair must contain $x_0$. Say the pair contains $x_0$ and $x_j$. We know that $x_j$ and $x_0$ has distance $k$. Hence, $j$ = $k$ or $N-k$, i.e. we either have $x_0 = x_k = 1$ or $x_0 = x_{N-k} = 1$, which are spatial shifts of each other.
		
		When $\norm{\vect{x}}_0 = 3$, given $\abs{\mathcal{F}\vect{x}}$, we obtain  $\text{Aut}_p (\vect{x})$.  Let $k$ be the smallest positive number such that $(\text{Aut}_p (\vect{x}))_k$ is nonzero. Since there are three $1$'s in $\vect{x}$, there are $_3C_2$, i.e. $3$ pairs of $1$'s in $\vect{x}$. Thus, $(\text{Aut}_p (\vect{x}))_k = 1, 2$ or $3$. By spatial shift, we may assume one of the pairs contains $x_0$ and $x_k$. 
		
		If $(\text{Aut}_p (\vect{x}))_k = 2$ or $3$, then there is still at least one pair of $1$'s containing $x_0$ or $x_k$ and the remaining $1$. If it contains $x_0$, then the $1$ should lie in $x_{N-K}$ since $x_k$ is already occupied. If the pair contains $x_k$, by similar reasoning, the $1$ should lie in $x_{2k}$. In both cases, all three $1$'s are placed and these 2 cases are spatial shift of each other.
		
		If $(\text{Aut}_p (\vect{x}))_k = 1$, let $l$ be the smallest positive number greater than $k$ such that $(\text{Aut}_p (\vect{x}))_l$ is nonzero. By considering the position of $1$, we have 4 cases: $x_{N-l} = 1$, $x_{N-l+k} = 1$, $x_{l} = 1$ or $x_{l+k} = 1$. The cases that $x_{N-l+k} = 1$ and $x_l = 1$ are impossible, otherwise it will contradicts the minimality of $l$, $k$ and the fact that $(\text{Aut}_p (\vect{x}))_k = 1$, i.e. there is a pair of $1$ with distance $(l-k) < l$ while this pair is not the pair corresponding to the pair of distance $k$. Hence, we either have $x_0 = x_k = x_{l+k} = 1$ or $x_0 = x_k = x_{N-l} = 1$. Note that these two cases are equivalent to each other through conjugate inverse and spatial shift.
		
		The cases when $\norm{\vect{x}}_0 = N-3, N-2, N-1$ or $N$ now follow from above. If $\norm{\vect{x}}_0 = N-3, N-2, N-1$ or $N$, then $\norm{\mathbbm{1} - \vect{x}}_0 = 0, 1, 2$ or $3$. Hence, we can recover $(\mathbbm{1} - \vect{x})$ up to trivial ambiguities. Since $\vect{x} = \mathbbm{1} - (\mathbbm{1} - \vect{x})$, the recovery of $\vect{x}$ is unique up to trivial ambiguities.
	\end{proof}

	To prove Propositions~\ref{prop: uniqueness for equalling to conjugate inverse} and \ref{prop: probability of uniqueness}, we need to  introduce some  results in algebra. 
	Specifically, Theorem~\ref{thm: reciprocal} and \ref{thm: irreducible} are summarized from the proof  of \cite[Theorem 2.1]{YuanWang18}.

	\begin{thm}[Theorem 1 in \cite{KonyaginSV1999Otno}] \label{thm: 01 irreducible polynomial}
		Let $\vect x \in \{0,1\}^N$ with $x_0 = x_{N-1} = 1$, the $Z$-transform of $\vect x$ is irreducible with probability at least $c/\log N$ for some constant $c > 0$
	\end{thm}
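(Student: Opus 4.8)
Since this statement is quoted from Konyagin's work, I sketch only the line of attack one would follow to prove it. The plan is to exhibit many admissible $\vect{x}$ for which $P_{\vect{x}}(z)=\sum_{k=0}^{N-1}x_k z^{k}$ is provably irreducible, via a Cohn-type criterion: if a polynomial with coefficients in $\{0,1\}$ takes a prime value $P_{\vect{x}}(b)$ at some integer $b\ge 2$, then $P_{\vect{x}}$ is irreducible over $\mathbb{Q}$. The reason is a localization of its complex zeros: because $x_0=x_{N-1}=1$, the inequality $|P_{\vect{x}}(z)|\ge |z|^{N-1}-\sum_{k<N-1}|z|^{k}$, applied both to $P_{\vect{x}}$ and to the reversed polynomial $\sum_k x_{N-1-k}z^{k}$ (which again has $0$--$1$ coefficients), forces every zero into the open annulus $\tfrac12<|z|<2$; hence every monic integer factor $Q$ of $P_{\vect{x}}$ has its zeros bounded away from $b$ and satisfies $0<|Q(b)|<(b+2)^{\deg Q}$, which (in the sharpened form due to Brillhart, Filaseta and Odlyzko) is incompatible with $P_{\vect{x}}(b)$ being prime while $P_{\vect{x}}$ is reducible.

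First I would apply this at $b=2$: every prime in $[2^{N-1},2^{N})$ is odd with top binary digit $1$, so it equals $P_{\vect{x}}(2)$ for a unique admissible $\vect{x}$, and the Prime Number Theorem produces $\sim 2^{N-1}/(N\ln 2)$ such primes among the $2^{N-2}$ admissible strings, giving irreducibility with probability $\ge c/N$. To replace $N$ by $\log N$ I would use the criterion at all integer bases $2\le b\le B$ with $B\asymp N$ simultaneously. Writing $A_b=\{\vect{x}:P_{\vect{x}}(b)\text{ is prime}\}$, Bonferroni gives $\Pr\!\bigl[\bigcup_b A_b\bigr]\ge\sum_b\Pr[A_b]-\sum_{b<b'}\Pr[A_b\cap A_{b'}]$, and since $\log P_{\vect{x}}(b)\asymp N\log b$ one expects $\Pr[A_b]\asymp (N\log b)^{-1}$, whence $\sum_{b\le B}\Pr[A_b]\asymp\sum_{b\le B}(N\log b)^{-1}\asymp 1/\log N$, exactly the claimed order.

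The step I expect to be the main obstacle is justifying $\Pr[A_b]\gg (N\log b)^{-1}$ for $b\ge 3$: the integers $P_{\vect{x}}(b)=\sum_k x_k b^{k}$ have all base-$b$ digits in $\{0,1\}$, so this is a lower bound for the count of primes in a very sparse, digitally restricted set, and one must additionally show the pair-correlation sum is of smaller order. A purely algebraic alternative would instead try to bound the number of reducible admissible $\vect{x}$ by a union bound over irreducible integer factors $Q$ of degree $\le (N-1)/2$ with zeros in $\tfrac12<|z|<2$, estimating $\Pr[Q\mid P_{\vect{x}}]$ from the linear relations $P_{\vect{x}}(\alpha)=0$ via exponential sums over $\{0,1\}^{N}$ and handling the cyclotomic $Q$ separately; there the obstruction moves to controlling that sum over the enormous supply of candidate divisors. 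In either approach the algebra is routine and the arithmetic control of the polynomial's values is the heart of the matter.
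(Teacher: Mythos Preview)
The paper does not prove this statement; it is cited from Konyagin and used as a black box in the proof of Proposition~\ref{prop: probability of uniqueness}. So there is no in-paper argument to compare against, and I evaluate your sketch on its own merits.

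Your base-$2$ argument via Cohn's criterion is correct and cleanly yields the weaker bound $c/N$: the root localization $\tfrac12<|z|<2$ is valid for $0$--$1$ polynomials with $x_0=x_{N-1}=1$, Brillhart--Filaseta--Odlyzko then gives irreducibility whenever $P_{\vect x}(2)$ is prime, and the prime number theorem supplies $\sim 2^{N-1}/(N\ln 2)$ such primes among the $2^{N-2}$ admissible strings.

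The step from $c/N$ to $c/\log N$ via Cohn at all bases $2\le b\le N$ has a genuine gap, which you correctly flag but do not resolve. For $b\ge 3$, the estimate $\Pr[A_b]\gg 1/(N\log b)$ asks for a lower bound on the number of primes whose base-$b$ expansion uses only the digits $0$ and $1$. These integers form a set of density roughly $(2/b)^{N}$ inside $[b^{N-1},b^{N})$, and lower bounds for primes in sets this thin are far beyond any available technique; even the much denser problem of primes missing a single digit (Maynard) is recent and delicate. On top of this, your Bonferroni step also requires controlling the pair correlations $\Pr[A_b\cap A_{b'}]$, which is a second unproved input. So the multi-base Cohn route, as written, cannot be completed.

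Your ``purely algebraic alternative''---bounding the number of reducible $\vect x$ by summing $\Pr[Q\mid P_{\vect x}]$ over candidate low-degree irreducible factors $Q$, with cyclotomic $Q$ handled separately---is the direction that actually succeeds in Konyagin's paper. If you want a proof, develop that route rather than the multi-base Cohn argument.
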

	
	\begin{thm}\cite{filaseta2005irreducibility}\label{thm: reciprocal division}
		If a $f(x)$ is 0, 1 reciprocal polynomial and its constant term is $1$, then $f(x)$ is not divisible by a non-reciprocal polynomial in $\mathbb{Z} [x]$
	\end{thm}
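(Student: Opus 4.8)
\medskip
\noindent\emph{Proof strategy.} I would argue by contradiction: assume $f$ is reciprocal with coefficients in $\{0,1\}$ and $f(0)=1$, and that some non-reciprocal $g\in\mathbb{Z}[x]$ divides $f$. The plan is to peel $g$ down to an irreducible non-reciprocal divisor, force its reciprocal $g^{*}(x):=x^{\deg g}g(1/x)$ to divide $f$ as well, and then contradict the $\{0,1\}$ constraint.

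First I would reduce to $g$ irreducible and suitably normalized. Since taking reciprocals is multiplicative, a product of self-reciprocal irreducibles is self-reciprocal, so a non-reciprocal $g$ must have an irreducible factor that is itself non-reciprocal; that factor still divides $f$, so assume $g$ is irreducible, primitive, with positive leading coefficient. As $f$ is monic with constant term $1$, this forces the leading coefficient of $g$ to be $1$ and $g(0)=\pm1$, so $g^{*}$ has leading coefficient $\pm1$ and $g^{*}(0)=1$. Also $f(1)=\#\{k:a_k=1\}\ge2$, so $(x-1)\nmid f$ and hence $(x-1)\nmid g$; this excludes $g^{*}=-g$, and since $g$ is non-reciprocal also $g^{*}\ne g$, so $g^{*}\ne\pm g$.

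Next I would bring in the reciprocal factor. From $f=f^{*}$ and $g\mid f$ we get $g^{*}\mid f^{*}=f$; since $g$ is irreducible and $g^{*}\ne\pm g$, the two are coprime, so $g\,g^{*}\mid f$, say $f=g\,g^{*}h$ with $h\in\mathbb{Z}[x]$ reciprocal. An irreducible, non-self-reciprocal $g$ is not cyclotomic, so by Kronecker's theorem it has a zero off the unit circle; after possibly swapping $g$ and $g^{*}$, $g$ has a zero $\alpha$ with $|\alpha|>1$. I would also record the elementary localization of the zeros of $f$: at any zero $\beta$, reading $\beta^{\deg f}=-\sum_{k<\deg f}a_k\beta^{k}$ and $1=-\sum_{k\ge1}a_k\beta^{k}$ and summing geometric series gives $\tfrac12<|\beta|<2$; in particular all zeros of $g$ and $g^{*}$ lie in this open annulus.

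The hard part will be turning the forced divisibility $g\,g^{*}\mid f$ into a contradiction with $f\in\{0,1\}[x]$ and $f(0)=1$; this is precisely the content of the cited theorem of Filaseta, and I expect it to be the main obstacle, the steps above being formal. The key structural fact is that $g\,g^{*}$ is not an arbitrary reciprocal divisor but the product of a polynomial with its own reversal, so that $z^{-\deg g}g(z)g^{*}(z)=|g(z)|^{2}\ge0$ on $|z|=1$; writing $P_f(\theta):=e^{-i(\deg f)\theta/2}f(e^{i\theta})$ (a real trigonometric polynomial, since the exponent set of $f$ is symmetric) and $P_h$ analogously, the identity $f=g\,g^{*}h$ becomes $P_f(\theta)=|g(e^{i\theta})|^{2}P_h(\theta)$. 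Combining the zero/sign behaviour of $P_f$ on the circle, the annulus bound $\tfrac12<|z|<2$, and the presence of the zero $\alpha$ of $f$ with $|\alpha|>1$ (together with its reciprocal partner forced by $g^{*}$), one shows that no integer polynomial $h$ can keep $g\,g^{*}h$ inside $\{0,1\}[x]$ with constant term $1$: a coefficient/positivity incompatibility.
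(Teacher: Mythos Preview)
The paper does not prove this statement at all: it is quoted as a black-box result from Filaseta \cite{filaseta2005irreducibility} and then applied in the next theorem. So there is no paper-side proof to compare against; the question is simply whether your argument stands on its own.

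Your reduction steps are correct. Passing to an irreducible non-reciprocal factor, normalizing it, ruling out $g=x-1$ via $f(1)\ge 2$, and concluding $g\,g^{*}\mid f$ with $h=f/(g\,g^{*})$ reciprocal are all sound. The Kronecker step (a non-cyclotomic irreducible with integer coefficients has a root off the unit circle) and the annulus bound $\tfrac12<|\beta|<2$ for zeros of a $\{0,1\}$-polynomial with $a_0=a_n=1$ are also correct, and the trigonometric identity $P_f(\theta)=|g(e^{i\theta})|^{2}P_h(\theta)$ is computed properly.

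The gap is the final paragraph. You write that ``one shows that no integer polynomial $h$ can keep $g\,g^{*}h$ inside $\{0,1\}[x]$ with constant term $1$: a coefficient/positivity incompatibility,'' but you do not actually show it. Everything you have assembled so far --- the nonnegativity of $|g(e^{i\theta})|^{2}$, the annulus localization, the existence of a root with $|\alpha|>1$ --- is compatible with \emph{many} reciprocal integer polynomials (e.g.\ $(x^2-3x+1)h(x)$ with $h$ reciprocal integer-valued can have roots in the annulus, nonnegative Fej\'er kernel factor on $|z|=1$, etc.), so these ingredients alone do not force a contradiction. The $\{0,1\}$ hypothesis has to do real work here, and your sketch never uses it beyond the annulus bound. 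Filaseta's argument is not a short positivity observation; it is a nontrivial analysis specific to $0$--$1$ polynomials, and you have essentially restated the theorem at the point where the proof should begin. As written, the proposal is a correct setup followed by an unfulfilled promise.
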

	
	%{\color{red}[you mentioned that Theorem~\ref{thm: reciprocal} and \ref{thm: irreducible} are rewritten version of [31]. then why you still prove them? Is Theorem~\ref{thm: reciprocal} from [31] and you try to prove C4? ]}
	%{\color{blue} [Theorem C.3 and C.4 are the ideas to prove Theorem 2.1 in [31]}
	\begin{thm} \label{thm: reciprocal}
		Given $\vect x \in \{0,1\}^N$ and $\text{Aut}(\vect x)$, if $P_{\vect x} (z)$ is reciprocal, then there does not exist $\vect y \in \{0,1\}^N$  such that $\vect y \neq \vect x$ and $\text{Aut}(\vect y) = \text{Aut}(\vect x)$.
	\end{thm}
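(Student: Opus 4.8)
The plan is to convert the hypothesis $\text{Aut}(\vect y)=\text{Aut}(\vect x)$ into an identity between $Z$-transforms and then use Filaseta's divisibility result (Theorem~\ref{thm: reciprocal division}) to destroy the freedom one normally has when reconstructing a signal from its autocorrelation. First observe that reciprocity is already restrictive on coefficients: if $\widetilde{P_{\vect x}}=P_{\vect x}$ and $n=\deg P_{\vect x}$, then $x_k=x_{n-k}$, so $x_0=x_n=1$; thus $P_{\vect x}$ is a $\{0,1\}$-polynomial with constant term $1$, and $\widetilde{P_{\vect x}}=P_{\vect x}$ holds \emph{exactly}. For a nonzero binary signal $\vect u$ write $Q_{\vect u}$ for its support-reduced $Z$-transform (factor out $z^{p}$, where $p$ is the first nonzero index), so $Q_{\vect u}$ has constant and leading coefficient $1$ and $\deg Q_{\vect u}$ equals the diameter of the support of $\vect u$; the aperiodic analogue of Lemma~\ref{lemma: compute autocorrelation} (the polynomial identity behind Theorem~\ref{thm: characterization 2N-1}, cf.\ \eqref{eq: A_x}) gives $z^{\deg Q_{\vect u}}\sum_j(\text{Aut}(\vect u))_j z^j=Q_{\vect u}(z)\,\widetilde{Q_{\vect u}}(z)$. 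Reading off the extreme nonzero lag and the lag-$0$ term, $\text{Aut}(\vect y)=\text{Aut}(\vect x)$ forces $\deg Q_{\vect y}=\deg Q_{\vect x}=n$ and
\[
Q_{\vect y}(z)\,\widetilde{Q_{\vect y}}(z)=Q_{\vect x}(z)\,\widetilde{Q_{\vect x}}(z)=P_{\vect x}(z)\,\widetilde{P_{\vect x}}(z)=P_{\vect x}(z)^{2},
\]
where the last two equalities use $Q_{\vect x}=P_{\vect x}$ (since $x_0=1$) and reciprocity of $P_{\vect x}$.

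Next I would invoke Theorem~\ref{thm: reciprocal division}: a $\{0,1\}$ reciprocal polynomial with constant term $1$ has no non-reciprocal factor in $\mathbb{Z}[z]$, so every monic irreducible factor $q$ of $P_{\vect x}$ satisfies $\widetilde q=q$ (a real anti-reciprocal irreducible would have to be $z-1$, which cannot divide $P_{\vect x}$ because $P_{\vect x}(1)=\norm{\vect x}_0\neq0$). Since $Q_{\vect y}$ divides $P_{\vect x}^{2}$ in the UFD $\mathbb{Q}[z]$, every irreducible factor of $Q_{\vect y}$ divides $P_{\vect x}$ and is self-reciprocal; as $Q_{\vect y}$ is monic, $\widetilde{Q_{\vect y}}=Q_{\vect y}$. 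The displayed identity then reads $Q_{\vect y}(z)^{2}=P_{\vect x}(z)^{2}$ in an integral domain, hence $Q_{\vect y}=\pm P_{\vect x}$, and matching leading coefficients of two $\{0,1\}$-polynomials yields $Q_{\vect y}=P_{\vect x}$. Therefore $P_{\vect y}(z)=z^{p}P_{\vect x}(z)$, i.e.\ $\vect y$ is a spatial translate of $\vect x$ inside the window, which is the asserted uniqueness (strictly $\vect y=\vect x$ once signals are anchored to start at index $0$, and up to the spatial-shift ambiguity of \eqref{eq:trivial} otherwise).

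\textbf{Main obstacle.} The crux is the middle step. In general, inverting the map $\vect u\mapsto\text{Aut}(\vect u)$ amounts to choosing, for each reciprocal pair $\{q,\widetilde q\}$ of irreducible factors of the autocorrelation polynomial, which member goes into $P_{\vect y}$ and which into $\widetilde{P_{\vect y}}$, together with a split of each self-reciprocal factor's even power — precisely the non-trivial ambiguities of phase retrieval. Theorem~\ref{thm: reciprocal division} is exactly the input that removes all these choices when $P_{\vect x}$ is reciprocal, and the nonnegativity and normalization of $\{0,1\}$-coefficients is what forces the residual sign and scalar to be trivial. The remaining work is the bookkeeping of zero-padding and translation in the aperiodic autocorrelation, which is also why "$\vect y\neq\vect x$" in the statement has to be read up to a spatial shift.
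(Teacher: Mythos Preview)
Your proof is correct and follows essentially the same strategy as the paper: translate $\text{Aut}(\vect y)=\text{Aut}(\vect x)$ into the polynomial identity $P_{\vect y}\tilde P_{\vect y}=P_{\vect x}\tilde P_{\vect x}=P_{\vect x}^2$, invoke Theorem~\ref{thm: reciprocal division} to force every irreducible factor of $P_{\vect x}$ to be self-reciprocal, and conclude $P_{\vect y}=P_{\vect x}$. The paper reaches that conclusion by peeling off the self-reciprocal irreducible factors one at a time rather than via your cleaner $Q_{\vect y}^2=P_{\vect x}^2$ step, and it does not explicitly address the leading-zeros/spatial-shift issue you handle through $Q_{\vect u}$; your extra bookkeeping there is justified, since the paper's argument tacitly assumes both signals are anchored at index $0$.
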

	
	\begin{proof}
		Define $A_{\vect x}$ by equation \ref{eq: A_x}. Then $A_{\vect x} = P_{\vect{x}}\tilde{P}_{\vect x}$. Write $P_{\vect x} =f_1f_2\dots f_k$ be its factorization such that each $f_j$ is irreducible. It follows from Theorem~\ref{thm: reciprocal division} that each $f_j$ is also reciprocal. 
		If there exists some $\vect y \in \{0,1\}^N$ such that $\text{Aut}(\vect y) = \text{Aut} (\vect x)$. Then $P_{\vect y} \tilde{P}_{\vect y} = A_{\vect y} = A_{\vect x} = P_{\vect x} \tilde{P}_{\vect x} = f_1^2 f_2^2 \dots f_k^2.$ If $f_j$ divides $P_{\vect y}$, we also have $\tilde{f_j} = f_j$ divide $\tilde{P}_{\vect y}$, and vice versa. So $(P_{\vect y}/f_1) (\tilde{P}_{\vect y}/f_1) =  f_2^2 f_3^2 \dots f_k^2$. Inductively, we have $P_{\vect y} = \tilde{P}_{\vect y} = f_1f_2\dots f_k = P_{\vect x}$, i.e. $\vect x = \vect y$.
	\end{proof}

	\begin{thm} \label{thm: irreducible}
		Given $\vect x \in \{0,1\}^N$, $\text{Aut}(\vect x)$, if $P_{\vect x} (z)$ is irreducible, then the only $\vect y \in \{0,1\}^N$ satisfying $\text{Aut}(\vect y) = \text{Aut}(\vect x)$ is either $\vect x$ or $\vect z$, which is defined by $z_n = x_{N-1-n}$. for $n = 0, 1, \dots, N-1$.
	\end{thm}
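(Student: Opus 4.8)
The plan is to run the same unique-factorization argument as in the proof of Theorem~\ref{thm: reciprocal}, but now exploiting that the two factors of the autocorrelation polynomial are \emph{distinct} irreducibles in the generic case (they coincide exactly when $P_{\vect x}$ is simultaneously irreducible and reciprocal). First I would normalize the situation: since the regular autocorrelation is translation invariant we may assume $x_0=1$, and since the constant term of $A_{\vect x}$ equals $x_0\,\overline{x_{N-1}}$, a genuine length-$N$ signal with $P_{\vect x}$ irreducible forces $x_0=x_{N-1}=1$ and $\deg P_{\vect x}=N-1$. Then, as in \eqref{eq: A_x} and the proof of Theorem~\ref{thm: reciprocal}, one has $A_{\vect x}(z)=P_{\vect x}(z)\,\tilde P_{\vect x}(z)$ with $\tilde P_{\vect x}(z)=z^{N-1}P_{\vect x}(z^{-1})$. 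I would record the two elementary facts that on polynomials with nonzero constant term the map $P\mapsto\tilde P$ is a degree-preserving involution and is multiplicative; consequently $\tilde P_{\vect x}$ is irreducible whenever $P_{\vect x}$ is, so $A_{\vect x}=P_{\vect x}\tilde P_{\vect x}$ is, up to a unit, the complete factorization of $A_{\vect x}$ into irreducibles in $\mathbb Z[z]$ (Gauss's lemma; $\mathbb Q[z]$ a UFD).

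Next, suppose $\vect y\in\{0,1\}^N$ satisfies $\text{Aut}(\vect y)=\text{Aut}(\vect x)$. Then $A_{\vect y}=A_{\vect x}$ as polynomials, so $P_{\vect y}(z)\,\tilde P_{\vect y}(z)=P_{\vect x}(z)\,\tilde P_{\vect x}(z)$; comparing constant terms gives $y_0\,\overline{y_{N-1}}=1$, whence $y_0=y_{N-1}=1$, $\deg P_{\vect y}=\deg\tilde P_{\vect y}=N-1$, and in particular $\deg P_{\vect y}=\tfrac12\deg A_{\vect x}$. Since $P_{\vect y}$ divides $A_{\vect x}=P_{\vect x}\tilde P_{\vect x}$ and has degree $N-1$, unique factorization forces $P_{\vect y}=cP_{\vect x}$ or $P_{\vect y}=c\tilde P_{\vect x}$ for a constant $c$; matching leading (or constant) coefficients, all of which equal $1$, gives $c=1$. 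Thus $P_{\vect y}=P_{\vect x}$ or $P_{\vect y}=\tilde P_{\vect x}$. Finally, $\tilde P_{\vect x}(z)=z^{N-1}P_{\vect x}(z^{-1})$ is precisely the $Z$-transform $P_{\vect z}(z)$ of the reversed signal $z_n=x_{N-1-n}$, so $\vect y=\vect x$ or $\vect y=\vect z$. (When $P_{\vect x}$ is also reciprocal one has $\tilde P_{\vect x}=P_{\vect x}$, hence $\vect z=\vect x$ and the statement is consistent with Theorem~\ref{thm: reciprocal}.)

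The algebraic core here is a one-line consequence of unique factorization, and the two facts about $P\mapsto\tilde P$ are routine. The step that needs genuine care is the normalization: one must make sure that ``$P_{\vect x}$ irreducible'' together with ``$\vect x$ has length $N$'' really pins down $\deg P_{\vect x}=N-1$ with $x_0=x_{N-1}=1$ --- otherwise $\tilde P_{\vect x}$ need not correspond to an honest length-$N$ binary signal, the identity $A_{\vect x}=P_{\vect x}\tilde P_{\vect x}$ can pick up a spurious power of $z$, and the degree bookkeeping for a competitor $\vect y$ breaks down (degree-one and all-zero boundary cases being the obvious pitfalls). I expect handling these boundary conventions cleanly to be the main obstacle, the rest following quickly from Gauss's lemma and a degree count.
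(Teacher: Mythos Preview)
Your proposal is correct and follows essentially the same unique-factorization route as the paper: from $A_{\vect y}=A_{\vect x}$ one gets $P_{\vect y}\tilde P_{\vect y}=P_{\vect x}\tilde P_{\vect x}$, and irreducibility of $P_{\vect x}$ forces $P_{\vect y}\in\{P_{\vect x},\tilde P_{\vect x}\}$. The paper's version is terser---it simply notes that $P_{\vect x}$ must divide $P_{\vect y}$ or $\tilde P_{\vect y}$ and then matches degrees implicitly---whereas you are more explicit about the degree bookkeeping and the normalization $x_0=x_{N-1}=1$; your concern that the identity $A_{\vect x}=P_{\vect x}\tilde P_{\vect x}$ acquires a stray power of $z$ when $\deg P_{\vect x}<N-1$ is legitimate and is a detail the paper does not address.
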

	
	\begin{proof}
		Similar to the above, we have $P_{\vect y} \tilde{P}_{\vect y} = P_{\vect x} \tilde{P}_{\vect x}$. Since $P_{\vect x}$ is irreducible, we have either $P_{\vect x}$ divides $P_{\vect y}$ or $\tilde{P}_{\vect y}$. 
		Suppose the first case. We have $\tilde{P}_{\vect x}$ divides $\tilde{P}_{\vect y}$. Together with the fact that $P_{\vect x}$ divides $P_{\vect y}$, this implies $P_{\vect x} = P_{\vect y}$ and $\tilde{P}_{\vect x} = \tilde{P}_{\vect y}$, i,e, $\vect y = \vect x$.
		Similarly, the second case implies that $P_{\vect y} = \tilde{P}_{\vect x}$, i,e, $\vect y$ is the conjugate inverse of $\vect x$.
		
	\end{proof}
	
	%Using these results, one can prove Proposition~\ref{prop: uniqueness for equalling to conjugate inverse} and \ref{prop: probability of uniqueness} now.

	\begin{proof} [Proof of Proposition~\ref{prop: uniqueness for equalling to conjugate inverse}]
		By Theorem \ref{thm: characterization Oversampling}, $\text{Aut}(\vect{x})$ is uniquely determined when $M \geq 2N-1$. Note that $P_{\vect{x}}(z)$, the $Z$-transform of $\vect{x}$, is a reciprocal polynomial since $\vect{x}$ is equal to its conjugate inverse. According to Theorem~\ref{thm: reciprocal}, there does not exist $\vect y \neq \vect x$ such that $\text{Aut}(\vect{y}) = \text{Aut}(\vect{x})$. Therefore, we can uniquely recover $\vect{x}$  from $\text{Aut}(\vect{x})$ up to trivial ambiguities.
	\end{proof}
	
	\begin{proof} [Proof of Propsotion~\ref{prop: probability of uniqueness}]
		Theorem~\ref{thm: 01 irreducible polynomial} shows that for a random binary $\vect{x}$ with $x_0 = x_{N-1} = 1$,  the $Z$-transform of $\vect{x}$ ($P_{\vect{x}}(z)$) is irreducible with probability at least $\frac{c'}{logN}$ for a constant $c' > 0$. Note that we have $2^{N-2}$ binary signals under the constraint $x_0 = x_{N-1} = 1$ while we have $2^N$ binary signals in total. For a random binary $\vect{x}$, $P_{\vect{x}}(z)$ is irreducible with probability at least $\frac{1}{4} \frac{c'}{logN}$ = $\frac{c}{logN}$, where $c = \frac{c'}{4} > 0$ is a fixed constant. The remaining now follows from Theorem~\ref{thm: irreducible} directly.
	\end{proof}

	\section{Proof of Theorems~\ref{thm: 01 box to binary oversampling}--\ref{thm: -11 FROG}}
	
	\begin{proof} [Proof of Theorem~\ref{thm: 01 box to binary oversampling}]
		Write $\abs{\mathcal{F}_{N \to M}\vect{x}} = \abs{\mathcal{F}_{M \to M} \tilde{\vect{x}}}$ with $$\tilde{\vect{x}} = (x_0, x_1, \dots, x_{N-1}, 0, 0, \dots, 0)^T \in \{0, 1\}^M.$$ Similarly, we write $\abs{\mathcal{F}_{N \to M}\vect{y}} = \abs{\mathcal{F}_{M \to M} \tilde{\vect{y}}}$ and define $\tilde{\vect{y}}$. Note that $\tilde{\vect{x}} \in \{0, 1\}^M$, $\tilde{\vect{y}} \in [0, 1]^M$ and $\abs{\mathcal{F}_{M \to M}\tilde{\vect{x}}} = \abs{\mathcal{F}_{M \to M}\tilde{\vect{y}}}$. By Theorem \ref{thm: ab box to binary} with $\alpha = 0$ and $\beta = 1$, one has $\tilde{\vect{y}} \in \{0, 1\}^M$ and $\norm{\tilde{\vect{y}}}_0 = \norm{\tilde{\vect{x}}}_0$. Since $\tilde{\vect{x}}$ and $\tilde{\vect{y}}$ are obtained by appending zeros to $\vect{x}$ and $\vect{y}$, we have $\vect{y} \in \{0, 1\}^N$ and $\norm{\vect{y}}_0 = \norm{\tilde{\vect{y}}}_0 = \norm{\tilde{\vect{x}}}_0 = \norm{\vect{x}}_0$.
	\end{proof}
	
	\begin{proof}[Proof of Theorem~\ref{thm: 01 STFT}]
		Without loss of generality,  we may assume the windows $\vect w$ is an all one vector $\mathbbm{1}$ by scaling.
		Recall the STFT of $\vect x$ is defined by $$\vect{z}_{n,m} = \sum^{N-1}_{k=0} x_k w_{mL-k}  e^{\frac{-2\pi k n i}{N}}.$$
		
		Since $W \geq L$, for each $l = 0, 1, \dots, N-1$, there is some $m$ such that $w_{mL-l} = 1$. For such $m$, define $\tilde{x}_k = x_k w_{mL-k}$ for all $k = 0, 1, \dots, N-1$ and define $\tilde{\vect{y}}$ in a similar way. Then, $\tilde{\vect{x}} \in \{0,1\}^N$ and $\tilde{\vect{y}} \in [0, 1]^N$ by our assumption on $\vect w$.
		
		Now, $\abs{\mathcal{F}\tilde{\vect{x}}} = \vect{z}_{\cdot, m} = \abs{\mathcal{F}\tilde{\vect{y}}}$. Applying Theorem~\ref{thm: ab box to binary} with $\alpha = 0$ and $\beta = 1$, we have $\tilde{\vect{y}} \in \{0, 1\}^N$. In particular,
		$y_l = y_l w_{mL-l} = \tilde{y}_l \in  \{0, 1\}$. Since $l$ is arbitrary, we have $\vect y \in \{0, 1\}^N$.
	\end{proof}
	
	\begin{proof} [Proof of Theorem~\ref{thm: 01 FROG}]
		Denote $\abs{\hat{z}_{k,m}}^2$ and $\abs{\hat{w}_{k,m}}^2$ be the FROG trace \eqref{eq:FROG} of $\vect{x}$ and $\vect{y}$, respectively.
		We consider $m = 0$ and define $\vect{z}_0 = (z_{0,0}, z_{1,0}, \dots, z_{N-1,0})^T$ and similarly for $\vect{w}_0$. As $x_n \in \{0, 1\}$, we obtain $z_{n,0} = x_n^2 = x_n$ and  $w_{n,0} = y_n^2 \in [0,1]$. Now, our assumption translates to  $\abs{\hat{z}_{k,0}} = \abs{\hat{w}_{k,0}}$ for $k = 0, \dots, N-1$, i.e. $\abs{\mathcal{F}\vect{z}_0} = \abs{\mathcal{F}\vect{w}_0}$. Since $\vect{z}_0 \in \{0, 1\}^N$ and $\vect{w}_0 \in [0,1]^N$, we have $\vect{w}_0 \in \{0,1\}^N$ by Theorem \ref{thm: ab box to binary} with $\alpha = 0$ and $\beta = 1$, i.e. $w_{n,0} = y_n^2 \in \{0,1\}$ for all $n = 0, \dots, N-1$. Therefore, we obtain $\vect{y} \in \{0,1\}^N$, which implies that $\vect{y} = \vect{w}_0$.
		Since $\vect{x} = \vect{z}_0$, we have $\abs{\mathcal{F}\vect{x}} = \abs{\mathcal{F}\vect{z}_0} = \abs{\mathcal{F}\vect{w}_0} = \abs{\mathcal{F}\vect{y}}$ and $\norm{\vect{y}}_0 = \norm{\vect{x}}_0$ by Theorem \ref{thm: ab box to binary}.
	\end{proof}
	
	\begin{proof} [Proof of Theorem~\ref{thm: -11 FROG}]
		The proof is similar to the proof of Theorem~\ref{thm: 01 FROG} by noting that $z_{n,0} = x_n^2 = 1 \in \{-1,1\}$ and using Theorem~\ref{cor: -11 box to binary}.
	\end{proof}
	
	\section{Proof of Proposition~\ref{prop: denoise} to Corollary~\ref{cor: denoising for integral-valued signal}}
	\renewcommand{\thethm}{E.\arabic{thm}}%

	\begin{proof} [Proof of Proposition~\ref{prop: denoise}]
		\begin{align*}
		& \norm{\mathcal{F}^{-1}(\tilde{\vect{b}} \odot \tilde{\vect{b}}) - \text{Aut}_p(\vect{x})}_{\infty}  
		= \norm{\mathcal{F}^{-1}(\tilde{\vect{b}} \odot \tilde{\vect{b}}) - \mathcal{F}^{-1}(\vect{b} \odot \vect{b})}_{\infty} \\
		\leq & \norm{\mathcal{F}^{-1}(\tilde{\vect{b}} \odot \tilde{\vect{b}}) - \mathcal{F}^{-1}(\vect{b} \odot \vect{b})}_2 
		= \dfrac{1}{\sqrt{N}} \norm{\tilde{\vect{b}} \odot \tilde{\vect{b}} - \vect{b} \odot \vect{b}}_2 \\
		\leq & \norm{\tilde{\vect{b}} \odot \tilde{\vect{b}} - \vect{b} \odot \vect{b}}_{\infty} 
		= \norm{2 \vect{b} \odot \vect{\eta} + \vect{\eta} \odot \vect{\eta}}_{\infty} \\
		\leq & 2 \norm{\vect{b}}_{\infty} \norm{\vect{\eta}}_{\infty} + \norm{\vect{\eta}}_{\infty}^2 
		< \dfrac{\epsilon}{2} + \dfrac{\epsilon}{2}  \leq \epsilon, 
		\end{align*}
		where the first and second inequalities come from the fact that $\norm{\vect{x}}_{\infty} \leq \norm{\vect x}_2 \leq \sqrt N \norm{\vect{x}}_{\infty}$ for all $\vect x \in \mathbb{C}^N$ and the third inequality comes from the fact that $\norm{\vect{x} \odot \vect{y}}_{\infty} \leq \norm{\vect{x}}_{\infty} \norm{\vect{y}}_{\infty}$ for all $\vect{x}, \vect{y} \in \mathbb{C}^N$.
	\end{proof}

	\begin{proof} [Proof of Proposition~\ref{cor: denoising related to sparsity}]
		Note that $$b_n = \abs[\Big]{\sum^{N-1}_{k=0} x_k e^{\frac{-2\pi k n i}{N}}} \leq \sum^{N-1}_{k=0} \abs{x_k} = \norm{\vect x}_1 = \norm{\vect x}_0.$$ So $\norm{\vect b}_{\infty} \leq \norm{\vect x}_0.$
		
		Let $\epsilon = 1/2$, we have $\norm{\vect{\eta}}_{\infty} < \dfrac{1}{8\norm{\vect x}_ 0} \leq \dfrac{1}{8\norm{\vect b}_{\infty}} =  \dfrac{\epsilon}{4\norm{\vect b}_{\infty}}$. Also, since $\vect x \neq \vect 0$, $\norm{\vect x} \geq 1$. $\norm{\vect{\eta}}_{\infty} < \dfrac{1}{8\norm{\vect x}_ 0} \leq \dfrac{1}{8} \leq \min \{ \dfrac{\epsilon}{2} , 1\}.$ The remaining  follows from Proposition \ref{prop: denoise}.
	\end{proof}

	\begin{proof} [Proof of Corollary~\ref{cor: denoising for integral-valued signal}]
		When $\vect{x} = \vect{0}$, then 
		$\vect{b} = \abs{\mathcal{F}\vect{x}} = \vect{0}$, $\text{Aut}_p(\vect{x}) = \vect{0}$ and $\tilde{\vect{b}} = \vect{b} + \vect{\eta} = \vect{\eta}$. 
		\begin{align*}
		&\norm{\mathcal{F}^{-1}(\tilde{\vect{b}} \odot \tilde{\vect{b}}) - \text{Aut}_p(\vect{x})}_{\infty} = \norm{\mathcal{F}^{-1}(\vect{\eta} \odot \vect{\eta})}_{\infty} \leq \norm{\mathcal{F}^{-1}(\vect{\eta} \odot \vect{\eta})}_2 \\
		=& \dfrac{1}{\sqrt{N}}\norm{\vect{\eta} \odot \vect{\eta}}_2
		\leq \norm{\vect{\eta} \odot \vect{\eta}}_{\infty} 
		\leq \norm{\vect{\eta}}_{\infty}^2 < \dfrac{1}{64N^2} < \dfrac{1}{2}.
		\end{align*}
		When $\vect{x} \neq \vect{0}$, note $\norm{\vect x}_0 \leq N$ and $\norm{\vect{\eta}}_{\infty} < \dfrac{1}{8N} \leq \dfrac{1}{8\norm{\vect x}_ 0}$. 
		The rest is straightforward from Proposition~\ref{cor: denoising related to sparsity}.
	\end{proof}

	\begin{proof}[Proof of Corollary~\ref{cor: SNR}]
		The inequality $$\text{SNR}_{\text{dB}} > 10\log_{10}(64) + 30\log_{10}\norm{\vect x}_0,$$ is equivalent to $\dfrac{\norm{\vect x}_2^2}{\norm{\vect \eta}^2_2} > 64 \norm{\vect x}_0^3$. Since $\vect x \in \{0, 1\}^N$, $\norm{\vect x}_2^2 = \norm{\vect x}_0$. Thus, we have $\norm{\vect \eta}_{\infty}^2 \leq \norm{\vect \eta}_2^2 < \dfrac{1}{64\norm{\vect x}^2_0}$. 
	\end{proof}

	\bibliographystyle{vancouver}
	\bibliography{phaseRetrieval}

\begin{thebibliography}{10}

\bibitem{Millane90}
Millane RP.
\newblock Phase retrieval in crystallography and optics.
\newblock J Opt Soc Am A. 1990;7(3):394--411.

\bibitem{KimH91}
{Kim} W, {Hayes} MH.
\newblock The phase retrieval problem in X-ray crystallography.
\newblock In: International Conference on Acoustics, Speech, and Signal
  Processing; 1991. p. 1765--1768 vol.3.

\bibitem{MiaoCKS99}
Miao J, Charalambous P, Kirz J, Sayre D.
\newblock Extending the methodology of X-ray crystallography to allow imaging
  of micrometre-sized non-crystalline specimens.
\newblock Nature. 1999;400(6742):342.

\bibitem{Dainty1987}
Dainty JC, Fienup JR.
\newblock Phase Retrieval and Image Reconstruction for Astronomy.
\newblock Image Recovery: Theory and Application, ed by H Stark, Academic
  Press. 1987;p. 231--275.

\bibitem{SeifertSDLT06}
Seifert B, Stolz H, Donatelli M, Langemann D, Tasche M.
\newblock Multilevel Gauss{\textendash}Newton methods for phase retrieval
  problems.
\newblock J Phys A: Math Gen. 2006 Mar;39(16):4191--4206.

\bibitem{shechtman2014}
{Shechtman} Y, {Eldar} YC, {Cohen} O, {Chapman} HN, {Miao} J, {Segev} M.
\newblock Phase Retrieval with Application to Optical Imaging: A contemporary
  overview.
\newblock IEEE Signal Process Mag. 2015 May;32(3):87--109.

\bibitem{Hofstetter1964}
{Hofstetter} EM.
\newblock Construction of time-limited functions with specified autocorrelation
  functions.
\newblock IEEE Trans Inf Theory. 1964;10(2):119--126.

\bibitem{BeinertP15}
Beinert R, Plonka G.
\newblock Ambiguities in One-Dimensional Discrete Phase Retrieval from Fourier
  Magnitudes.
\newblock J Fourier Anal Appl. 2015 Dec;21(6):1169--1198.

\bibitem{BendoryBE2017}
Bendory T, Beinert R, Eldar YC.
\newblock Fourier Phase Retrieval: Uniqueness and Algorithms.
\newblock In: Boche H, Caire G, Calderbank R, M{\"a}rz M, Kutyniok G, Mathar R,
  editors. Compressed Sensing and its Applications. Springer; 2017. p. 55--91.

\bibitem{HayesM82}
{Hayes} MH, {McClellan} JH.
\newblock Reducible polynomials in more than one variable.
\newblock Proceedings of the IEEE. 1982;70(2):197--198.

\bibitem{balan2006signal}
Balan R, Casazza P, Edidin D.
\newblock On signal reconstruction without phase.
\newblock Appl Comput Harmon A. 2006;20(3):345--356.

\bibitem{conca2015algebraic}
Conca A, Edidin D, Hering M, Vinzant C.
\newblock An algebraic characterization of injectivity in phase retrieval.
\newblock Appl Comput Harmon A. 2015;38(2):346--356.

\bibitem{FickusMNW14}
Fickus M, Mixon DG, Nelson AA, Wang Y.
\newblock Phase retrieval from very few measurements.
\newblock Linear Algebra Its Appl. 2014;449:475--499.

\bibitem{HuangES16}
{Huang} K, {Eldar} YC, {Sidiropoulos} ND.
\newblock Phase Retrieval from 1D Fourier Measurements: Convexity, Uniqueness,
  and Algorithms.
\newblock IEEE Trans Signal Process. 2016 Dec;64(23):6105--6117.

\bibitem{JaganathanOH13}
{Jaganathan} K, {Oymak} S, {Hassibi} B.
\newblock Sparse phase retrieval: Convex algorithms and limitations.
\newblock In: IEEE International Symposium on Information Theory; 2013. p.
  1022--1026.

\bibitem{RanieriCLV13}
Ranieri J, Chebira A, Lu YM, Vetterli M.
\newblock Phase Retrieval for Sparse Signals: Uniqueness Conditions.
\newblock arXiv preprint arXiv:13083058. 2013;.

\bibitem{LiV13}
Li X, Voroninski V.
\newblock Sparse Signal Recovery from Quadratic Measurements via Convex
  Programming.
\newblock SIAM J Numer Anal. 2013;45(5):3019--3033.

\bibitem{ohlsson2014conditions}
Ohlsson H, Eldar YC.
\newblock On conditions for uniqueness in sparse phase retrieval.
\newblock In: IEEE Int Conf Acoust. Speech and Signal Process; 2014. p.
  1841--1845.

\bibitem{WangX14}
Wang Y, Xu Z.
\newblock Phase retrieval for sparse signals.
\newblock Appl Comput Harmon A. 2014;37(3):531--544.

\bibitem{changLNZ16}
Chang H, Lou Y, Ng M, Zeng T.
\newblock Phase Retrieval from Incomplete Magnitude Information via Total
  Variation Regularization.
\newblock SIAM J Sci Comput. 2016;38:A3672--A3695.

\bibitem{Gerchberg1972}
Gerchberg RW, Saxton WO.
\newblock A practical algorithm for the determination of the phase from image
  and diffraction plane pictures.
\newblock Optik. 1972;35(2):237--246.

\bibitem{candes2015phaseIT}
Cand{\`e}s EJ, Li X, Soltanolkotabi M.
\newblock Phase retrieval via Wirtinger flow: Theory and algorithms.
\newblock IEEE Trans Inf Theory. 2015;61(4):1985--2007.

\bibitem{changMLZ18}
Chang H, Marchesini S, Lou Y, Zeng T.
\newblock Variational Phase Retrieval with Globally Convergent Preconditioned
  Proximal Algorithm.
\newblock SIAM J Imaging Sci. 2018;11(1):56--93.

\bibitem{KeiperKLP17}
Keiper S, Kutyniok G, Lee DG, Pfander GE.
\newblock Compressed sensing for finite-valued signals.
\newblock Linear Algebra Its Appl. 2017;532:570 -- 613.

\bibitem{Esedoglu2003}
Esedoglu S.
\newblock Blind deconvolution of bar code signals.
\newblock Inverse Probl. 2003 Nov;20(1):121--135.

\bibitem{lou2014partially}
Lou Y, Esser E, Zhao H, Xin J.
\newblock Partially blind deblurring of barcode from out-of-focus blur.
\newblock SIAM J Imaging Sci. 2014;7(2):740--760.

\bibitem{LitmanLS98}
Litman A, Lesselier D, Santosa F.
\newblock Reconstruction of a two-dimensional binary obstacle by controlled
  evolution of a level-set.
\newblock Inverse Probl. 1998;14(3):685--706.

\bibitem{ShiFNELSFTSMRK16}
Shi X, Fischer P, Neu V, Elefant D, Lee J, Shapiro D, et~al.
\newblock Soft x-ray ptychography studies of nanoscale magnetic and structural
  correlations in thin {SmCo5} films.
\newblock Appl Phys Lett. 2016;108(9):094103.

\bibitem{SteinWYDB16}
Stein A, Wright G, Yager KG, Doerk GS, Black CT.
\newblock Selective directed self-assembly of coexisting morphologies using
  block copolymer blends.
\newblock Nature communications. 2016;7:12366.

\bibitem{hartCL18}
Hart R, Chang H, Lou Y.
\newblock Empirical studies on phase retrieval.
\newblock In: IEEE Image, Video, and Multidimensional Signal Processing (IVMSP)
  Workshop; 2018. p. 1--5.

\bibitem{YuanWang18}
Yuan Z, Wang H.
\newblock Phase retrieval for sparse binary signal: uniqueness and algorithm.
\newblock Inverse Probl Sci Eng. 2018;26(5):641--656.

\bibitem{TrebinoDFSKRK97}
Trebino R, DeLong KW, Fittinghoff DN, Sweetser JN, Krumbügel MA, Richman BA,
  et~al.
\newblock Measuring ultrashort laser pulses in the time-frequency domain using
  frequency-resolved optical gating.
\newblock Rev Sci Instrum. 1997;68(9):3277--3295.

\bibitem{Trebino00}
Trebino R.
\newblock Frequency-resolved optical gating: the measurement of ultrashort
  laser pulses.
\newblock Springer Science \& Business Media; 2012.

\bibitem{JaganathanEH2016}
{Jaganathan} K, {Eldar} YC, {Hassibi} B.
\newblock {STFT} Phase Retrieval: Uniqueness Guarantees and Recovery
  Algorithms.
\newblock {IEEE} J Sel Topics Signal Process. 2016;10(4):770--781.

\bibitem{candes2015phase}
Cand{\`e}s EJ, Li X, Soltanolkotabi M.
\newblock Phase retrieval from coded diffraction patterns.
\newblock Appl Comput Harmon A. 2015;39(2):277--299.

\bibitem{ChangELM18}
Chang H, Enfedaque P, Lou Y, Marchesini S.
\newblock Partially coherent ptychography by gradient decomposition of the
  probe.
\newblock Acta Cryst A. 2018 May;74(3):157--169.

\bibitem{BendoryEE18}
Bendory T, Edidin D, Eldar YC.
\newblock On signal reconstruction from {FROG} measurements.
\newblock Appl Comput Harmon A. 2018;.

\bibitem{Hayes1982}
Hayes MH.
\newblock The reconstruction of a multidimensional sequence from the phase or
  magnitude of its Fourier transform.
\newblock IEEE Trans Acoust Speech Signal Process. 1982;30:140--154.

\bibitem{miao2000oversampling}
Miao J, Kirz J, Sayre D.
\newblock The oversampling phasing method.
\newblock Acta Cryst D. 2000;56(10):1312--1315.

\bibitem{soldovieri2005global}
Soldovieri F, Liseno A, D'Elia G, Pierri R.
\newblock Global convergence of phase retrieval by quadratic approach.
\newblock IEEE transactions on antennas and propagation.
  2005;53(10):3135--3141.

\bibitem{eldar2014sparse}
Eldar YC, Sidorenko P, Mixon DG, Barel S, Cohen O.
\newblock Sparse phase retrieval from short-time Fourier measurements.
\newblock IEEE Signal Process Letters. 2014;22(5):638--642.

\bibitem{mao2012reconstruction}
Mao Y.
\newblock Reconstruction of binary functions and shapes from incomplete
  frequency information.
\newblock IEEE Trans Inf Theory. 2012;58(6):3642--3653.

\bibitem{TA}
Pham-Dinh T, Le-Thi HA.
\newblock Convex Analysis Approach to {D.C.} Programming: Theory, Algorithms
  and Applications.
\newblock Acta Math Vietnam. 1997;22(1):289--355.

\bibitem{phamLe2005dc}
Pham-Dinh T, Le-Thi HA.
\newblock The {DC} (difference of convex functions) programming and {DCA}
  revisited with {DC} models of real world nonconvex optimization problems.
\newblock Ann Oper Res. 2005;133(1-4):23--46.

\bibitem{OdlyzkoP93}
{Odlyzko} AM, {Poonen} B.
\newblock Zeros of polynomials with 0, 1 coefficients.
\newblock Enseign Math. 1993;39:317--348.

\bibitem{boydPCPE11admm}
Boyd S, Parikh N, Chu E, Peleato B, Eckstein J.
\newblock Distributed Optimization and Statistical Learning via the Alternating
  Direction Method of Multipliers.
\newblock Found Trends Mach Learn. 2011;3(1):1--122.

\bibitem{Beinert16}
Beinert R.
\newblock {Non-negativity constraints in the one-dimensional discrete-time
  phase retrieval problem}.
\newblock Inf Inference A J IMA. 2016 Dec;6(2):213--224.

\bibitem{Rudin91}
Rudin W.
\newblock Functional analysis.
\newblock McGraw-Hill; 1991.

\bibitem{KonyaginSV1999Otno}
Konyagin SV.
\newblock On the number of irreducible polynomials with 0,1 coefficients.
\newblock Acta Arith. 1999;88(4):333--350.

\bibitem{filaseta2005irreducibility}
Filaseta M, Meade DB.
\newblock Irreducibility testing of lacunary 0, 1-polynomials.
\newblock Journal of Algorithms. 2005;55(1):21--28.

\end{thebibliography}
	
\end{document}